\documentclass{article}
\usepackage{fullpage}

\usepackage[usenames,dvipsnames]{color}
\usepackage{xfrac}
\usepackage{xspace}


\newcommand{\half}{{\sfrac 1 2}\xspace}

\usepackage{graphicx,amsmath,amssymb,amsthm}
\usepackage{algorithmicx,algorithm}
\usepackage[noend]{algpseudocode}

\algnewcommand\algorithmicinput{\textbf{INPUT:}}
\algnewcommand\INPUT{\item[\algorithmicinput]}
\algnewcommand\algorithmicoutput{\textbf{OUTPUT:}}
\algnewcommand\OUTPUT{\item[\algorithmicoutput]}

 \newtheorem{definition}{Definition}
 
 \newtheorem{lemma}{Lemma}
 \newtheorem{theorem}{Theorem}
\newtheorem{claim}{Claim}
 \newtheorem{corollary}{Corollary}

\newcommand{\R}{\mathbb{R}}

\newcommand{\cF}{\ensuremath{\mathcal F}\xspace}

\newcommand{\Z}{\ensuremath{\mathbb Z}\xspace}

\newcommand{\Oh}{O}


\newcommand{\parameterizedproblem}[4]{
\smallskip
\noindent\fbox{\begin{minipage}{\linewidth}
\noindent  \textsc{#1} \hfill \textbf{Parameter:} #3\\
\textbf{Input:} #2\\ 
\textbf{Question:} #4
\end{minipage}}
}

\newcommand{\maybeqed}{}

\begin{document}
\title{Half-integrality, LP-branching and FPT Algorithms\thanks{A preliminary version of this paper appeared in
the proceedings of SODA 2014.}}

\author{Yoichi Iwata\thanks{Department of Computer Science,
    Graduate School of Information Science and Technology,
    The University of Tokyo.
    \texttt{y.iwata@is.s.u-tokyo.ac.jp}
    Supported by Grant-in-Aid for JSPS Fellows (256487).
  }
  \and
  Magnus Wahlstr\"om\thanks{Royal Holloway, University of London. \texttt{magnus.wahlstrom@rhul.ac.uk}}
  \and
  Yuichi Yoshida\thanks{
    National Institute of Informatics, and Preferred Infrastructure, Inc.
    \texttt{yyoshida@nii.ac.jp}
    Supported by JSPS Grant-in-Aid for Research Activity Start-up (No. 24800082), MEXT Grant-in-Aid for Scientific Research on Innovative Areas (No. 24106003), and JST, ERATO, Kawarabayashi Large Graph Project.
  }
}

\maketitle

\begin{abstract}
A recent trend in parameterized algorithms is the application of polytope
tools (specifically, LP-branching) to FPT algorithms (e.g., Cygan~et~al.,
2011; Narayanaswamy~et~al., 2012). Though the list of work in this
direction is short, the results are already interesting, yielding
significant speedups for a range of important problems. However, the
existing approaches require the underlying polytope to have very
restrictive properties, including half-integrality and 
Nemhauser-Trotter-style persistence properties.  
To date, these properties are essentially known to hold only for two
classes of polytopes, covering the cases of \textsc{Vertex Cover}
(Nemhauser and Trotter, 1975) and \textsc{Node Multiway Cut} (Garg~et~al.,
1994). 

Taking a slightly different approach, we view half-integrality as a
\emph{discrete} relaxation of a problem, e.g., a relaxation of the search
space from $\{0,1\}^V$ to $\{0,\half,1\}^V$ such that the new problem admits
a polynomial-time exact solution. Using tools from CSP (in particular
Thapper and \v{Z}ivn\'y, 2012) to study the existence of such relaxations,  
we are able to provide a much broader class of half-integral polytopes
with the required properties.

Our results unify and significantly extend the previously known cases. In
addition to the new insight into problems with half-integral relaxations,
our results yield a range of new and improved FPT algorithms, including
an $O^*(|\Sigma|^{2k})$-time algorithm for node-deletion \textsc{Unique
  Label Cover} with label set $\Sigma$ (improving the previous bound of
$O^*(|\Sigma|^{O(k^2 \log k)})$ due to Chitnis~et~al., 2012)
and an $O^*(4^k)$-time algorithm for \textsc{Group Feedback Vertex Set},
including the setting where the group is only given by oracle access
(improving on the previous bound of $O^*(2^{O(k\log k)})$ due to Cygan et
al., 2012). 
The latter bound is 
optimal under the Exponential Time Hypothesis.
The latter result also implies the first single-exponential time FPT
algorithm for \textsc{Subset Feedback Vertex Set}, answering an open
question of Cygan~et~al. (2012). 
Additionally, we propose a network flow-based approach to solve some cases of the relaxation problem.
This gives the first linear-time FPT algorithm to edge-deletion \textsc{Unique Label Cover}.

Interestingly, despite the half-integrality, our result do not imply any
approximation results (as may be expected, given the \textsc{Unique
  Games}-hardness of the covered problems).  
\end{abstract}

%


\section{Introduction}

Polytope methods, and methods related to linear and integer programming in
general, have been hugely successful in combinatorial optimisation, both
for deriving exact polynomial-time results and for purposes of
approximation (see, e.g., the book of Schrijver~\cite{SchrijverBook}). 
However, the methods have seen less application for questions of getting
faster exact (i.e., non-approximate) solutions to NP-hard problems,
at least from a theoretical perspective. 
(Industrial mixed integer programming-solvers such as CPLEX, though
frequently efficient, are not our concern here since usually, no
non-trivial performance guarantees are known.)

A few such applications have emerged in recent years in the field of
parameterized complexity; specifically, two sets of problems --
\textsc{Node Multiway Cut}~\cite{CyganPPW13MWC} and problems related to
\textsc{Vertex Cover}~\cite{NarayanaswamyRRS12,LokshtanovNRRS12CoRR} -- have
been shown to be FPT parameterized by the \emph{above LP} parameter, i.e.,
given an instance of one of these problems, it can be decided in
$O^*(4^k)$ time whether there is a solution that is at most $k$ points
more expensive than the LP-optimum. In the former case, due to the
integrality gap of the \textsc{Multiway Cut} LP~\cite{GargVY04}, this
results in an $O^*(2^k)$-time FPT algorithm for the natural
parameterization of the problem, improving on previous results of
$O^*(4^k)$; in the latter case, through parameter-preserving problem
reductions, the result is improved FPT algorithms for a range of problems
(e.g., problems expressible in \textsc{Almost 2-SAT}, a.k.a., 2-CNF deletion). 

However, despite the promise of the approach (and the programmatic view
taken in the latter set of papers~\cite{NarayanaswamyRRS12,LokshtanovNRRS12CoRR}), 
we still know only few such applications. (Also note that if $k$ is taken
as the above ``gap'' parameter, then in general it would be NP-hard to
decide whether $k=0$.) Furthermore, an inspection of the tools used reveal
that the methods are quite similar, and very specific; it is a matter of
FPT applications of the half-integrality results of Nemhauser and
Trotter~\cite{NemhauserT75} in the latter case, and similar
half-integrality results for \textsc{Node Multiway Cut} in the former
case, as shown by Garg~et~al.~\cite{GargVY04} and refined for FPT purposes
by Guillemot~\cite{Guillemot11} and Cygan~et~al.~\cite{CyganPPW13MWC}. 
Therefore, a good first step towards a better understanding of the power
of LP-relaxations for FPT problems (or vice versa, e.g., to further the
parameterized study of mixed integer programming) seems to be to consider
specifically the property of half-integrality. 

\subsection{Integral and half-integral polytopes}
Compared to our knowledge about integral polytopes (e.g., connections to
totally unimodular matrices and the notion of total dual integrality), 
our knowledge of half-integrality seems rather more spotty. 
It seems that most of what is available can be enumerated as a few quick
examples, e.g., the above-mentioned cases of \textsc{Vertex
  Cover}~\cite{NemhauserT75} and \textsc{Node Multiway
  Cut}~\cite{GargVY04}; Hochbaum's IP2 programs~\cite{Hochbaum02}; 
and a few related cases, such as the continuous relaxation for
\textsc{Submodular Vertex Cover}~\cite{IwataN09}.
Of these, probably the most ambitious study of half-integrality is the
work of Hochbaum~\cite{Hochbaum02}, where a general IP of a certain
restricted form is shown to admit half-integral solutions. 
Still, of the applications mentioned in~\cite{Hochbaum02}, most if not all
(e.g., all applications with a Boolean domain)
can be covered by a simple reduction to \textsc{Almost 2-SAT}. 
One should also mention Kolmogorov~\cite{Kolmogorov12DAM}; see below.

One important note is that half-integrality is more specific than having
an integrality gap of $2$. While the latter clearly implies the same
approximation result, half-integrality imposes much more structure on the
solutions of a problem (as seen, e.g., by the FPT applications above and
in the rest of this paper). Examples of LP-relaxations which are
2-approximate but not half-integral would include \textsc{Multicut in
  Trees}~\cite{GargVY97} and \textsc{Feedback Vertex
  Set}~\cite{ChudakGHW98}; see also results achieved via iterative
rounding~\cite{Jain01,FleischerJW06}, e.g., for \textsc{Steiner Tree}. 
In the present paper, we ignore such results, and focus on the topic of
half-integrality. 

In this work, to discover half-integral relaxations, we take a slightly
different approach to the problem from most of the above, inspired rather
by the work of Kolmogorov~\cite{Kolmogorov12DAM}. In essence, we
start from the observation that a half-integral relaxation, unlike a
generic 2-approximate LP-relaxation, actually defines a polynomial-time
solvable problem on a \emph{discrete} search space of $\{0,\half,1\}^n$. 
Thus, we argue that the search for half-integral relaxations, and even for
half-integral polytopes, would benefit from the application of tools
designed to characterise exactly solvable problems, e.g., tools from the
study of constraint satisfaction problems. 

\subsection{CSPs and LP-relaxations}
Constraint satisfaction problems (CSPs) make for a general setting in
which the complexity of various problems can be studied in a systematic
way. In the most common setting, one studies generalisations of SAT: Given
a (one-time fixed) set $\Gamma$ of relation types, what is the complexity
of deciding the satisfiability of a formula which consists of a
conjunction of applications of relations $R \in \Gamma$? For example, by
fixing the domain to be Boolean, and letting $\Gamma$ contain all
3-clauses, one would encode the problem 3-SAT.

For optimisation problems, a generalisation of \emph{valued} CSPs (VCSPs)
has been proposed. Roughly, in this setting, instead of using relations,
one fixes a set $\cF$ of \emph{cost functions}; an instance consists of a
set of applications of functions $f_i \in \cF$, and the task is to
minimise (or maximise) the sum of the values of the functions in the
input. 
One particular case (which has been studied extensively in approximation)
is when the cost functions all take values $0$ and $1$ only, thus encoding
a ``soft version'' of a constraint; e.g., $f(u,v)=[u=v=0]$ (taking cost $1$ if
$u=v=0$, cost $0$ otherwise) would be the soft version of a constraint
$(u \lor v)$.  (In some approximation literature the maximisation
version of VCSP for such soft versions of constraints is taken as the
definition of the CSP problem itself.) 
Again, the interest is in identifying which sets $\cF$ of cost functions
imply polynomial-time solvable versus NP-hard problems, or more
closely what approximation properties the resulting CSP would have. 

The use of various relaxations has been of critical importance to the
solutions for these problems. For approximation, the best results have
been attained using SDP relaxations, and
Raghavendra~\cite{Raghavendra08} showed that assuming the unique games
conjecture~\cite{Khot02}, a particular SDP relaxation achieves the
optimal approximation ratio for every Max CSP problem. However, for the
question of whether finding an \emph{exact} solution is in P or NP-hard, 
it turns out, somewhat surprisingly, that it suffices to use a simple LP-relaxation (known as
the \emph{basic LP}, being essentially a simpler version of the
appropriate level in the Sherali-Adams hierarchy).

To be precise, it follows from a sequence of work by Thapper and
\v{Z}ivn\'y and by Kolmogorov~\cite{ThapperZ12FOCS,Kolmogorov12CoRR,ThapperZ13STOC} 
that for every set of finite-valued cost functions $\cF$, either the basic LP solves
the resulting VCSP exactly, or the VCSP problem is APX-hard. 
Thus, despite our excursion into CSPs, the connection to LP-relaxations
and polytope theory remains, in particular as the LP-relaxation remains
the only known method of solving the problem for several of the covered
problem classes. 




Our application of this framework takes the following shape. 
Assume an NP-hard VCSP problem, defined by a class of cost functions~$\cF$
on a finite domain $D$ (i.e., the search space of the problem is $D^n$).
If our problem has a half-integral LP-relaxation, then there should also
exist a class $\cF'$ of ``relaxed'' versions of the cost functions,
working in a search space $(D')^n$ (e.g., $D'$ would be $D$ extended by
the half-integral values), such that $\cF'$ defines a polynomial-time
solvable problem. We call such a class $\cF'$ a \emph{discrete relaxation}
of the original problem, and refer to values from the original domain $D$
(e.g., $\{0,1\}$) as \emph{integral} values, and values from $D' \setminus
D$ (e.g., \half) as \emph{relaxed} values.
(We also need some technical requirements; 
see Section~\ref{section:discrete-relaxation}.) 

Assuming that such a discrete relaxation $\cF'$ is found, we may then use
an algorithm, akin to the LP-branching algorithms
of~\cite{CyganPPW13MWC,NarayanaswamyRRS12,LokshtanovNRRS12CoRR}, to solve our
original problem in FPT time, parameterized
by the size of the \emph{relaxation gap}.
The connection to half-integrality lies in the basic LP of the relaxed
class $\cF'$; in our examples, $\cF'$ is a half-integral relaxation of
$\cF$, and the basic LP can be used to construct a simpler LP-relaxation
for the original problem, which then is found to be half-integral.

%
%

\subsection{Our results}
We show that many known half-integrality results, and several new ones,
can be explained by applying the above framework using the class of 
\emph{$k$-submodular} functions as discrete relaxations.
This includes the above cases of \textsc{(Submodular Cost) Vertex Cover},
\textsc{Almost 2-SAT}, and \textsc{Node Multiway Cut}, as well as a
further generalisation of the first two called \textsc{Bisubmodular Cost
  2-SAT}. In addition, we construct new, possibly unexpected half-integral
LP-relaxations for the \textsc{Group Feedback Vertex Set} and
\textsc{Unique Label Cover} problems, leading to significantly improved
FPT algorithms; see below.


%
%
%
%
%
%

The framework immediately implies an integral LP-formulation of the
half-integral relaxations of the above-mentioned problems (i.e., an
integral polytope over a larger set of variables); however, the
resulting formulation has for many problems an inconveniently large
dimension, preventing it to be used in full generality. To work around
this problem, we construct an alternative, half-integral LP-relaxation
with fewer variables, inspired by the basic LP and the construction in~\cite{GargVY04}. 

\textsc{Unique Label Cover} is the problem which lies at the heart of the
unique games conjecture~\cite{Khot02}, which is of central importance to
the field of approximation algorithms. Previous work by Chitnis et
al.~\cite{ChitnisCHPP12} gave an $O^*(|\Sigma|^{O(p^2 \log p)})$-time FPT
algorithm for the problem using a highly involved probabilistic approach
(here, $\Sigma$ is the label set, and $p$ is the solution cost). 
Via our new LP-relaxation, we solve the problem in time
$O^*(|\Sigma|^{2p})$, for both the edge- and vertex-deletion versions;
furthermore, our result is deterministic.

\textsc{Group Feedback Vertex Set} (GFVS) is a powerful generalisation of
\textsc{Feedback Vertex Set} and \textsc{Odd Cycle Transversal}; we refer
to Section~\ref{section:gfvs} and the cited literature for details. 
The FPT study of this problem was initiated by
Guillemot~\cite{Guillemot11}; Cygan~et~al.~\cite{CyganPP12} showed that
the problem is FPT in a very general form (technically, when the input
provides only black-box oracle access to the group), 
with a running time of $O^*(2^{O(k\log k)})$. They note that in this
general form, GFVS subsumes \textsc{Subset Feedback Vertex Set}, for which
an~$O^*(2^{O(k \log k)})$-time algorithm was previously given~\cite{CyganPPW13}. 
They note that their running time seems difficult to improve with their
methods, and asks whether their result could be optimal under 
ETH (the Exponential-Time Hypothesis~\cite{ImpagliazzoPZ01}).

Using the above-mentioned LP-relaxation, we would get an algorithm only
for the case that the group is given in explicit form (i.e., not as an
oracle); in particular, we would have to limit ourselves to groups of
polynomial size. However, many useful cases of GFVS (including the
reductions from \textsc{Feedback Vertex Set} and 
\textsc{Subset Feedback Vertex Set}) use exponential-sized groups, and 
hence require the oracle form. To cover this case, we provide an
alternative LP-relaxation of the problem, which has an exponential number
of constraints, but which can be solved using a separation oracle. 
This implies an $O^*(4^k)$-time FPT algorithm for \textsc{Group Feedback
  Vertex Set} with group given via oracle access, providing the first
single-exponential FPT algorithms for GFVS and for \textsc{Subset Feedback
  Vertex Set}, hence answering the questions of Cygan et
al.~\cite{CyganPP12}. The new running times are optimal under ETH.

\subsubsection{Linear-time FPT algorithms}
As we have described above, the LP-branching based on discrete relaxations is a promising approach to establish FPT
algorithms and to reduce $f(k)$ part of the running time.
However, its $\mathrm{poly}(n)$ part is not so small since it relies on linear programming to solve the relaxations.
Reducing the $\mathrm{poly}(n)$ part is also an important task in FPT algorithms.
Especially, there have been many researches on FPT algorithms whose $\mathrm{poly}(n)$ part is only linear (linear-time
FPT), e.g., \textsc{Tree-Width}~\cite{Bodlaender96} and \textsc{Crossing
Number}~\cite{kawarabayashi07linear}.
Very recently, linear-time FPT algorithms for \textsc{Almost 2-SAT} have been developed
independently by Ramanujan and Saurabh~\cite{Ramanujan14}, and Iwata, Oka and Yoshida~\cite{Iwata14}.
The idea of the algorithm by Iwata~et~al. is to reduce the computation of LP relaxation to a minimum cut,
and actually, this approach works for solving several of our relaxation problems.
This approach generalise the linear-time FPT algorithm for \textsc{Almost 2-SAT} and gives the first linear-time FPT
algorithm for edge-deletion \textsc{Unique Label Cover} that runs in $O(|\Sigma|^{2p} m)$ time.
Thus the LP-branching based on discrete relaxations has a potential to reduce both $f(k)$ and
$\mathrm{poly}(n)$ simultaneously.

\subsection{Related work}
Hochbaum~\cite{Hochbaum02} gave a general framework for half-integral
relaxations of certain optimisation problems (as discussed above), via a
form of integer program called IP2 (which in turn is solved via relaxation
to a polynomial-time solvable problem class called \emph{monotone} IP2). 
Without going into too much technical detail, we note that monotone IP2s
are covered in a VCSP framework by problems submodular \emph{on a
  chain}~\cite{JonssonKT11,IwataFF01,Schrijver00}, and that the
Boolean-domain case of IP2 reduces directly to \textsc{Almost 2-SAT},
a.k.a. \textsc{2-CNF Deletion}. However, we have not reconstructed a
direct VCSP interpretation of the full case of half-integral IP2. 
Hochbaum~\cite{Hochbaum02} asks in her paper whether the problems of
\textsc{Node Multiway Cut} and \textsc{Multicut on Trees} can be brought
into her framework; the problem of \textsc{Multicut on Trees} remains open
to us. 

Kolmogorov~\cite{Kolmogorov12DAM} gave close connections between
functions with half-integral minima and \emph{bisubmodular} functions, in
particular showing that bisubmodular functions correspond (in a certain
sense) to a class of (continuous-domain) functions referred to as
\emph{totally half-integral}. See Section~\ref{section:bisubmod} for more
details.  



Submodular and bisubmodular functions also occur as rank functions of,
respectively, matroids~\cite{OxleyBook} and
delta-matroids~\cite{Bouchet95}; there are also connections to polytope
theory (e.g.,~\cite{BouchetC95}). 
Similar, but less well-explored connections exist for $k$-submodular
functions; see the theory of
multi-matroids~\cite{Bouchet97,Bouchet98,Bouchet01,BouchetIV}, 
and the polytope connection given by Huber and Kolmogorov~\cite{HuberK12}.

Group-labelled graphs (as in \textsc{Group Feedback Vertex Set}) and
bijection-labelled graphs (as in \textsc{Unique Label Cover}) have been
explored from a graph-theory perspective, in particular with respect to
path-packing; see~\cite{ChudnovskyCG08,ChudnovskyGGGLS06,KawarabayashiW06}
and~\cite{Pap07,Pap08}.

\section{Preliminaries}
\label{section:prel}

\subsection{Valued CSPs}
Let $D$ be a fixed, finite domain. A \emph{cost function} on $D$ (of arity
$r$) is a function $f: D^r \rightarrow \R$. A \emph{valued constraint} is an
application $f(v_1,\ldots,v_r)$ of a cost function $f: D^r \rightarrow \R$
to a tuple of variables $(v_1, \ldots, v_r)$. For simplicity, we disallow
repeated variables in constraints; this will make no difference for our
results but will simplify some notation.  A \emph{valued CSP instance}
(VCSP instance) is defined by a set $V$ of variables and a list of valued
constraints $f_1(v_{1,1}, \ldots, v_{1,r_1}), \ldots, f_m(v_{m,1}, \ldots,
v_{m,r_m})$, where $v_{i,j} \in V$ for each $i, j$; given an assignment
$\phi: V \rightarrow D$ and a VCSP instance $I$, we define the \emph{total
  cost} of $\phi$ for $I$ as 
$
f_I(\phi) = \sum_{i=1}^m f_i(\phi(v_{i,1}), \ldots, \phi(v_{i,r_i})).
$
Given a (not necessarily finite) set $\cF$ of cost functions on domain
$D$, the \emph{valued CSP problem} VCSP$(\cF)$ is the following problem:
given a VCSP instance $I$ on variable set $V$, where every cost function
$f_i$ is contained in $\cF$, and a number $k$, find an assignment 
$\phi: V \rightarrow D$ such that $f_I(\phi) \leq k$. 

A \emph{crisp} constraint is one which cannot be broken (e.g., of infinite
or prohibitive cost). Given a relation $R$, let the \emph{soft version of $R$}
denote the valued constraint such that $f(X)=0$ if $R(X)$ holds, and $f(X)=1$ otherwise. 

%

We will be most interested in the class of \emph{$k$-submodular}
functions, defined as follows. Fix a domain $D=\{0,1,\ldots,k\}$, and let
$\sqcap, \sqcup$ be symmetric, idempotent operations such that
$0 \sqcap x = 0$ for any $x \in D$; $0 \sqcup x = x$ for any $x \in D$;
and $x \sqcap y = x \sqcup y = 0$ for any $x, y \in D\setminus \{0\}$ with
$x \neq y$. 
A function $f: D^r\rightarrow \R$ is $k$-submodular if $f(X)+f(Y) \geq f(X
\sqcap Y)+f(X \sqcup Y)$ for all $X, Y \in D^r$. 
The case $k=2$ is referred to as \emph{bisubmodular functions}.

\subsection{The basic LP relaxation}
Since it is fundamental to our paper, let us explicitly define the LP
which lies behind all the above tractability results. Let $\cF$ be a
finite set of cost functions over a domain $D$, and let $I$ be an instance of VCSP$(\cF)$
on variable set $V=\{v_1,\ldots,v_n\}$ and with valued
constraints $f_i(v_{i,1}, \ldots, v_{i,r_i})$, $1 \leq i \leq m$.
The \emph{basic LP relaxation} (BLP) of~$I$ is defined as follows.
(The definition given in~\cite{ThapperZ12FOCS} is
  slightly different, but can easily be verified to be equivalent to the
   formulation below for our case.)
Introduce variables $\mu_{v=d}$ for every $v \in V$ and $d \in D$,
and $\lambda_{f_i, \sigma}$ for every valued constraint $f_i$ in $I$ 
and every $\sigma \in D^{r_i}$. The (BLP) is defined as follows.
  \begin{alignat*}{2}
    \min               & \sum_{i=1}^m \sum_{\sigma \in D^{r_i}}
                   f_i(\sigma(1), \ldots, \sigma(r_i)) \cdot \lambda_{f_i, \sigma} \\
    \mathrm{s.t.}  &   \sum_{d \in D} \mu_{v=d} = 1 &\quad & \forall v \in V \\
                 & \sum_{\sigma \in D^{r_i}: \sigma(j)=d} \lambda_{f_i, \sigma} = \mu_{v=d} &\quad & 
                   \forall 1 \leq i \leq m, 1 \leq j \leq r_i, d \in D, v=v_{i,j} \\
                 & 0 \leq \lambda_{f_i,\sigma}, \mu_{v=d} \leq 1 
  \end{alignat*}
Note that the size of the LP depends badly on function arity, e.g.,
we introduce $|D|^r$ variables for a single $r$-ary valued constraint $f$. 
However for every finite set of functions, as required above, this arity
is bounded and the LP is of polynomial size.
We will later in the paper define smaller, equivalent LP-relaxations for
particular problem classes. 

To reiterate, it is a consequence of~\cite{ThapperZ12FOCS} that if $\cF$
is a set of $k$-submodular functions, then the above LP solve VCSP($\cF$)
precisely. 

\subsection{Polymorphisms and fractional polymorphisms}
A key tool in the characterisation of CSP complexity is the
\emph{algebraic method}. For a domain $D$, an operation $h: D^t
\rightarrow D$, and a list of tuples $A_1, \ldots, A_t \in D^\ell$, 
define $h(A_1, \ldots, A_t) \in D^\ell$ as the result of applying $h$
\emph{column-wise} to the tuples, i.e., if $A(j)$ denotes the $j$-th
entry of a tuple $A$, we let
$h(A_1, \ldots, A_t)$ be the tuple $T \in D^\ell$ such that
$T(i)=h(A_1(i), \ldots, A_t(i))$.
Given a relation $R \subseteq D^r$, a \emph{polymorphism} of $R$ is an
operation $h: D^t \rightarrow D$ such that for any tuples $A_1, \ldots,
A_t \in R$, we have $h(A_1,\ldots,A_t) \in R$ (i.e., the relation $R$ is
closed under the operation of applying $h$ column-wise on any set of $t$
tuples in $R$). For a set of relations $\Gamma$, we say that $\Gamma$ has
a polymorphism $h$ if $h$ is a polymorphism of every $R \in \Gamma$. 
It is known that the complexity of classical (feasibility) CSP$(\Gamma)$
is characterised by the set of polymorphisms of the allowed relation types
$\Gamma$, however, no complete dichotomy is known for this question. 

We will need only the following notion: A \emph{majority} polymorphism is
a polymorphism $h: D^3 \rightarrow D$ such that
$h(x,x,y)=h(x,y,x)=h(y,x,x)=x$ for any $x, y \in D$. 
It is known that for any set of relations $\Gamma$ with a majority
polymorphism, the solution set for any formula over $\Gamma$ can be
described using only \emph{binary} relations (derivable from $\Gamma$);
see~\cite{JeavonsCC98}.

For valued constraints, the notions must be expanded to
\emph{fractional} polymorphisms; see~\cite{ThapperZ12FOCS,ThapperZ13STOC}
for definitions, and for an exact characterisation of the VCSP dichotomy
results. For this paper, we will be content with a simpler notion. 
Let $f: D^r \rightarrow \R$ be a cost function. A \emph{binary
  multimorphism} of $f$ is a pair of operations 
$\langle h_1, h_2\rangle: D^2 \rightarrow D$ such that for any $A, B \in
D^r$, we have $f(A)+f(B) \geq f(h_1(A,B)) + f(h_2(A,B))$. 
Similarly to above, $\langle h_1,h_2 \rangle$ is a multimorphism of a set
$\cF$ of cost functions if it is a multimorphism of every $f \in \cF$. 
The prime example would be the \emph{submodular} functions, which are
defined on domain $D=\{0,1\}$ by the multimorphism $\langle \cap, \cup
\rangle$ (i.e., $f(X)+f(Y) \geq f(X \cap Y)+f(X \cup Y)$); it is well known
that submodular functions can be minimised efficiently
(e.g.,~\cite{IwataFF01,Schrijver00}). 
Other examples of function classes $\cF$ which imply that VCSP$(\cF)$ is
tractable include (among other cases) functions submodular on an arbitrary
lattice, defined as having the multimorphism $\langle \lor, \land \rangle$,
and functions (weakly or strongly) submodular on a tree;
see~\cite{ThapperZ12FOCS} for details.

\section{Discrete Relaxations and FPT Branching}
\label{section:discrete-relaxation}

We now describe our approach more precisely. 

\begin{definition}
Let~$f: D^r \rightarrow \R$ be a finite-valued cost function. 
A \emph{discrete relaxation} of~$f$ on domain~$D' \supset D$ is a
function~$f': (D')^r \rightarrow \R$, such that 
(i)~$\min_{\bar x \in (D')^r} f'(\bar x)=\min_{\bar x \in D^r} f(\bar x)$, and 
(ii)~$f(\bar x)=f'(\bar x)$ for every~$\bar x \in D^r$. A discrete relaxation
of a set of cost functions $\cF=\{f_1, \ldots, f_t\}$ is a set of cost
functions~$\cF'=\{f_1',\ldots,f_t'\}$ on a domain~$D' \supset D$,
such that~$f_i'$ is a discrete relaxation of~$f_i$ for each~$i \in [t]$.
Finally, given an instance~$I$ of VCSP$(\cF)$, the \emph{relaxed instance}~$I'$
of VCSP$(\cF')$ is created by replacing every cost function~$f_i$ in~$I$ by 
its corresponding relaxation~$f_i'$.
The \emph{(additive) relaxation gap} of $I$ is OPT$(I)$-OPT$(I')$.
\end{definition}

Note that we can have $\mathrm{OPT}(I) > \mathrm{OPT}(I')$ despite every
individual cost function $f_i$ having an identical minimum (e.g., if
setting $v=d'$ for every variable $v$ minimises every constraint, for some
$d' \in D' \setminus D$). 
If~$\cF$ is integer-valued, let the \emph{scaling factor} of the
relaxation~$\cF'$ be the smallest rational~$c$ such that~$c \cdot f_i'$ is
integral for every~$f_i'\in\cF'$. In this case, we say that $\cF'$ is a
\emph{$c$-relaxation} of $\cF$ (note that this does not necessarily imply
that $I'$ is an approximation). 

\begin{definition}
Let $\cF$ be a set of cost functions on a domain~$D$, with a discrete
relaxation $\cF'$ on domain $D'$. We refer to the values of~$D$ as the 
\emph{original} values, and~$D' \setminus D$ as the \emph{relaxed} values.
In an assignment~$\phi: V \rightarrow D'$, we say
that a variable~$v \in V$ is \emph{integral in $\phi$} if~$\phi(v) \in D$;
otherwise,~$v$ is \emph{relaxed in~$\phi$}.  
An assignment~$\phi$ is integral if it uses only original values, i.e., if every
variable~$v \in V$ is integral in~$\phi$. 
Borrowing a term from Kolmogorov~\cite{Kolmogorov12DAM}, we say that the relaxation is
\emph{persistent} if, for any optimal assignment~$\phi^*$ of a relaxed instance~$I'$, 
there is an optimal \emph{integral} assignment~$\phi$ that agrees with~$\phi^*$ on the
latter's integral values (i.e., if~$\phi^*(x)$ is integral, then~$\phi(x)=\phi^*(x)$). 
\end{definition}

As a slight technical point, note that persistence is a function of the
division of the domain $D'$ into integral and relaxed parts, and does not
explicitly require a reference to an original function on a domain $D$
being relaxed. In our main case, we will deal with functions on a domain of
$D=\{1,\ldots,k\}$, which have relaxations on a domain $D'=\{0,\ldots,k\}$
which are \emph{$k$-submodular}. Thus, we will have a single relaxed
domain value of~$0$.  

To illustrate the notions, we show the application to \textsc{Vertex
  Cover}. 
Consider the Boolean domain~$D=\{0,1\}$. Let~$f_{\lor}$ be defined by
$f_{\lor}(0,0)=1$, and $f_{\lor}(x,y)=0$ otherwise (i.e., $f_{\lor}$ is
the soft version of the relation $(x \lor y)$), and let~$f_0(x)=x$
(corresponding to the soft version of requiring $x=0$). 
Then VCSP$(f_{\lor},f_0)$ is NP-hard, as it encodes \textsc{Vertex Cover}
when $f_{\lor}$ is treated as a crisp constraint.
On the other hand, let~$D'=\{0,\half,1\}$, and define the relaxations
$f'_{\lor}(x,y)=\max(0, 1-x-y)$ and $f_0'(x)=x$.
Then this is a discrete relaxation of the original problem, which
furthermore is a persistent 2-relaxation and can be solved in polynomial
time, as it corresponds to the classical LP-relaxation of \textsc{Vertex
  Cover} (see Nemhauser and Trotter~\cite{NemhauserT75}). 
Furthermore, the relaxed functions are bisubmodular if $D'$ is renamed as
$(0,\half,1) \mapsto (1,0,2)$.

This example also roughly illustrates the connections between tractable
discrete relaxations and half-integrality. From~\cite{ThapperZ12FOCS} we
have that for every tractable set of cost functions $\cF$, and every
instance $I$ of VCSP$(\cF)$, the optimum of the basic LP relaxation (BLP)
coincides with OPT$(I)$. Since the results of~\cite{ThapperZ12FOCS} 
support weighted functions (e.g., an input of $w_i \cdot f_i(\cdot)$ 
rather than just $f_i(\cdot)$), and since such weights only occur in the
cost function of the LP, it must be that every vertex of the LP is
integral, i.e., that (BLP) is an integral LP. Now, rather than a
half-integral LP, this is an integral LP on a different, larger
set of variables, however, in the cases considered in this paper
(bisubmodular and $k$-submodular functions), we will see that such a larger
LP can (at least in specific cases) be mapped down to a half-integral LP
on the original variable set. 

%
%
%
Persistent relaxations are key to providing FPT algorithms, as the following shows. 


\begin{lemma}
\label{lm:relaxfpt}
Let~$\cF$ be a set of integer-valued cost functions on~$D$, and let~$\cF'$
be a persistent $c$-relaxation of~$\cF$ on domain~$D'$, which includes all
hard constants from~$D$ (i.e., for each $d \in D$ there is either a crisp constraint $(v=d)$ 
or a valued constraint $f_d(v)$ for which $v=d$ is the unique minimum). 
Given black-box access to a solver
for VCSP$(\cF')$, we can solve an instance~$I$ of VCSP$(\cF)$
using~$\Oh^*(|D|^{ck})$ calls to the black-box solver and polynomial
additional work, where~$k=\mathrm{OPT}(I)-\mathrm{OPT}(I')$ is the additive
relaxation gap.
\end{lemma}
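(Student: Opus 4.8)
The plan is to solve $I$ by a bounded-depth branching procedure that repeatedly invokes the black-box solver on relaxed instances, using persistence to fix variables that the relaxed optimum has already decided integrally. First I would call the solver on the relaxed instance $I'$ to obtain an optimal relaxed assignment $\phi^*$. If $\phi^*$ is already integral, then by property (ii) of the discrete relaxation it is an optimal integral assignment and we are done (note $\mathrm{OPT}(I)=\mathrm{OPT}(I')$ in this case, i.e.\ $k=0$). Otherwise, pick any variable $v$ that is relaxed in $\phi^*$. By persistence, there is an optimal integral assignment $\phi$ that agrees with $\phi^*$ on all of $\phi^*$'s integral variables; in particular $\phi$ assigns $v$ some value $d \in D$. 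We do not know which $d$, so we branch over all $|D|$ choices: for each $d \in D$, form the instance $I_d$ obtained from $I$ by adding the hard constant constraint forcing $v=d$ (this is available by the hypothesis that $\cF'$ includes all hard constants from $D$, so the relaxed instances we build stay within $\mathrm{VCSP}(\cF')$), and recurse on $I_d$.

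The key step is to bound the recursion depth by $ck$. I would argue this via the relaxation gap as a measure of progress. Let $g(I) = \mathrm{OPT}(I) - \mathrm{OPT}(I')$ be the (integer, since $\cF$ is integer-valued and $\cF'$ is a $c$-relaxation so $\mathrm{OPT}(I')$ lies in $\tfrac1c\Z$ — actually the cleaner invariant is that $c\cdot g(I)$ is a nonnegative integer) gap. The claim is that in at least one branch $d$, we have $c\cdot g(I_d) \le c\cdot g(I) - 1$. To see this: fix the optimal integral $\phi$ guaranteed by persistence, and let $d^* = \phi(v) \in D$. In branch $d^*$, $\phi$ is still a feasible integral assignment of the same cost, so $\mathrm{OPT}(I_{d^*}) \le \mathrm{OPT}(I)$; meanwhile the relaxed optimum can only increase under an added constraint, and crucially $\phi^*$ is no longer feasible for $I'_{d^*}$ since $v$ was relaxed in $\phi^*$. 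A careful accounting — using that $\phi^*$ restricted away from $v$, together with any integral choice at $v$, bounds $\mathrm{OPT}(I'_{d^*})$ from above only up to the cost change at the constraints touching $v$ — needs to show $\mathrm{OPT}(I'_{d^*}) \ge \mathrm{OPT}(I') + \tfrac1c$, i.e.\ the relaxed optimum strictly increases by at least one unit of the scaling factor. This is where I expect the main obstacle: one must rule out that forcing $v$ to its integral value $d^*$ keeps the relaxed optimum unchanged. The argument should use persistence applied to $I'_{d^*}$ itself together with the fact that in $I'$ the value $d^*$ at $v$ was strictly suboptimal in the relaxed sense (otherwise $\phi^*$ would not have chosen a relaxed value there, given that the relaxed solver returns an optimum) — made rigorous by contradiction: if $\mathrm{OPT}(I'_{d^*}) = \mathrm{OPT}(I')$ then modifying $\phi^*$ to set $v := d^*$ would yield an optimal relaxed assignment with one fewer relaxed variable, and iterating would produce an integral optimal relaxed assignment of cost $\mathrm{OPT}(I')$, contradicting $g(I) = k > 0$ after we've driven the gap to its claimed value. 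I would isolate this into a small sub-claim and prove it by this "de-relaxation of $\phi^*$" exchange argument.

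Granting the progress bound, the recursion tree has depth at most $ck$ and branching factor $|D|$, hence at most $|D|^{ck}$ leaves; each leaf is an instance of $\mathrm{VCSP}(\cF)$ with $g \le 0$, i.e.\ $\mathrm{OPT}(I_{\mathrm{leaf}}) = \mathrm{OPT}(I'_{\mathrm{leaf}})$, which the black-box solver resolves directly (returning an integral optimum by property (ii)), and at each internal node we make one solver call plus polynomial bookkeeping to build the $|D|$ child instances. Summing over the tree gives $\Oh^*(|D|^{ck})$ black-box calls and polynomial additional work, and returning the best integral assignment found over all leaves yields an optimal assignment of $I$ by induction on the gap. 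The only subtlety in the wrap-up is confirming that the relaxed instances produced along the way genuinely remain instances of $\mathrm{VCSP}(\cF')$ — which is exactly guaranteed by the hypothesis that $\cF'$ contains all hard constants of $D$ (either as crisp constraints or as valued functions with a unique minimizer, the latter handled by adding a sufficiently large multiple of $f_d$).
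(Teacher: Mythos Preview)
Your approach has a genuine gap at exactly the point you flag as ``the main obstacle''. The claim you need---that forcing the relaxed variable $v$ to its correct integral value $d^*$ strictly increases the relaxed optimum---is simply false in general. A trivial counterexample: add to any instance with positive gap a fresh variable $v$ appearing in no constraint. Then $v$ can be $0$ or any $d\in D$ in an optimal relaxed assignment; if the solver happens to return $\phi^*$ with $\phi^*(v)=0$, your algorithm picks $v$, branches, and in the branch $v=d^*$ the relaxed optimum is unchanged and the gap does not drop. Your ``de-relaxation'' exchange argument does not rescue this: knowing $\mathrm{OPT}(I'_{d^*})=\mathrm{OPT}(I')$ only tells you \emph{some} optimal relaxed assignment has $v$ integral, not that it has \emph{one fewer} relaxed variable than $\phi^*$ (other coordinates may become relaxed), so the ``iterating'' step is not well-founded as a contradiction. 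Consequently your depth bound of $ck$ fails; along the correct root-to-leaf path you may take up to $n$ steps, and the tree can have size $|D|^{\Theta(n)}$.

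The paper's proof sidesteps this with one extra idea you are missing. Before branching on $v$, it \emph{probes} every $d\in D$: if some $d$ leaves the relaxed optimum unchanged, then (by persistence applied to an optimal relaxed assignment with $v=d$) one may safely fix $v=d$ \emph{without} branching and move to the next variable. Branching happens only when \emph{every} $d\in D$ strictly increases the relaxed optimum, and then the increase is at least $1/c$ in \emph{every} child---not just the ``correct'' one---so a computable budget $k - (\mathrm{OPT}(I'_{\text{current}})-\mathrm{OPT}(I'))$ drops by $\geq 1/c$ at each branching node, bounding the branching depth by $ck$ globally. This also repairs a second issue in your write-up: you track the gap $g(I)=\mathrm{OPT}(I)-\mathrm{OPT}(I')$, which is not computable during the algorithm (you do not know $\mathrm{OPT}(I)$), and you only argue progress along one branch, which does not bound the whole tree.
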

\begin{proof}
Let $I$ be the input instance, and $I'$ the relaxed instance. 
Let $x^*=\mathrm{OPT}(I')$, and let $k$ be a (guessed) bound on the
relaxation gap. Pick an arbitrary variable $v \in V$, and attempt to
enforce $(v=d)$ for every $d \in D$ in turn (e.g., by a sufficient number
of copies of the valued constraint $f_d(v)$). If there is a value $d \in D$
such that enforcing $(v=d)$ fails to increase the optimal cost of $I'$, 
then add the enforcing of $(v=d)$ to $I'$, and proceed with another
variable (if possible); this is legal since the approximation is persistent. 
If every variable $v \in V$ is part of a forced assignment, then we have
an integral solution, which must be optimal since $I'$ is a relaxation. 
In the remaining case, every enforced assignment $v=d$ raises the cost of
$I'$. In this case, we simply recurse into $|D|$ directions according to all
possible assignments; in each branch, the gap parameter $k$ has decreased
by at least $1/c$. Halt a recursion if the gap parameter reaches $0$.
We get a tree with branching factor $|D|$ and depth at most $ck$, implying
the result. 
\maybeqed{} \end{proof}

For some problems, with some extra work, we can remove the factor~$|D|$ from
the base of the above running time; however, this is not possible in
general unless FPT=W[1] (see Section~\ref{section:ksubmod}). 

In the rest of this section, we focus on the case when the relaxation is a
bisubmodular function, and show how this case explains and extends certain
results of half-integrality from the literature; in the rest of the paper,
we focus on cases of $k$-submodular functions, and new results which follow from
those. 

\subsection{Case study: Submodular and bisubmodular functions}
\label{section:bisubmod}

As mentioned in Section~\ref{section:prel}, a \emph{bisubmodular function} is
defined as a function $f: \{0,1,2\}^r \rightarrow \R$ which satisfies a
certain multimorphism equation ($f(A)+f(B) \geq f(A \sqcup B)
+ f(A \sqcap B)$ for all $A, B \in \{0,1,2\}^r$). However, a more fitting
interpretation may be to remap the domain to $D'=\{0,\half,1\}$, whereupon
the operations $\sqcap, \sqcup$ can be defined as $\{(x \sqcap y), 
(x \sqcup y)\}= \{\lceil (x+y)\rceil/2,\lfloor (x+y) \rfloor/2\}$,
where $\sqcup$ rounds away from \half and $\sqcap$ towards \half.
In this setting, we would interpret \half as a \emph{relaxed} value, and
$0$ and $1$ as \emph{integral}. 
Kolmogorov~\cite{Kolmogorov12DAM} showed that with this domain split,
bisubmodular functions are persistent. 
Furthermore, bisubmodular functions can be efficiently minimised even in
a value oracle model~\cite{FujishigeI05}.

Thus, by applying Lemma~\ref{lm:relaxfpt}, we get that for any class of
integer-valued cost functions $\cF$ on a domain $\{0,1\}^n$, with a
bisubmodular discrete $c$-relaxation, the problem VCSP$(\cF)$ is FPT with
a running time of $O^*(2^{ck})$, parameterized by the relaxation gap $k$
(where we will find that the factor $c=2$ suffices for all our cases).
%
We re-derive some known FPT consequences.

\begin{corollary}[\cite{NarayanaswamyRRS12}]
\label{cor:vc-a2sat}
\textsc{Vertex Cover Above LP}, \textsc{Min Ones 2-CNF Above LP}, and
\textsc{Almost 2-SAT} are all FPT with a running time of~$\Oh^*(4^k)$.%
\end{corollary}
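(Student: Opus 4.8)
The plan is to derive all three statements as instances of the general machinery just developed, namely Lemma~\ref{lm:relaxfpt} together with the persistence and tractability of bisubmodular functions. First I would observe that each of the three problems has a standard formulation as a VCSP over the Boolean domain $D=\{0,1\}$: \textsc{Vertex Cover} uses the crisp clause $f_\lor$ (the soft $(x\lor y)$ promoted to a hard constraint) together with the weight functions $f_0(x)=x$ exactly as in the \textsc{Vertex Cover} example worked out above; \textsc{Min Ones 2-CNF} uses all four types of binary clauses as crisp constraints plus $f_0$ (and, if needed, $f_1$); and \textsc{Almost 2-SAT} uses the soft versions of all binary clauses. In the ``Above LP'' versions the parameter $k$ is precisely the additive relaxation gap $\mathrm{OPT}(I)-\mathrm{OPT}(I')$, and for plain \textsc{Almost 2-SAT} the LP optimum is $0$ so the gap coincides with the standard parameter.

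Second, I would exhibit a bisubmodular discrete $2$-relaxation of each function class on the domain $D'=\{0,\half,1\}$. For the soft clause $(x\lor y)$ one takes $f'_\lor(x,y)=\max(0,1-x-y)$; for the soft clause $(x\lor\bar y)$ one takes $f'(x,y)=\max(0,y-x)$, and symmetrically for the remaining two clause types; and for the weight functions one keeps $f_0'(x)=x$, $f_1'(x)=1-x$. One checks that each of these satisfies conditions (i) and (ii) of the definition of a discrete relaxation — they agree with the original function on $\{0,1\}^r$ and their minimum over $\{0,\half,1\}^r$ equals the Boolean minimum — and that, after the renaming $(0,\half,1)\mapsto(1,0,2)$, each is bisubmodular, i.e.\ admits the $\langle\sqcap,\sqcup\rangle$ multimorphism. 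The scaling factor is $2$ since the relaxed values lie in $\half\Z$, so this is a $2$-relaxation; and the class trivially includes the hard constants via $f_0,f_1$ (or via crisp unit constraints). For crisp binary clauses one uses the soft relaxed version multiplied by a large enough weight, which does not affect bisubmodularity.

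Third, I would invoke the two facts about bisubmodular functions quoted in Section~\ref{section:bisubmod}: by Kolmogorov~\cite{Kolmogorov12DAM} the relaxation is persistent under the $0/\half/1$ integral–relaxed split, and by Fujishige and Iwata~\cite{FujishigeI05} bisubmodular functions can be minimised in polynomial time (so VCSP$(\cF')$ has a polynomial-time solver, in particular the black box of Lemma~\ref{lm:relaxfpt}). Applying Lemma~\ref{lm:relaxfpt} with $|D|=2$ and $c=2$ then gives $\Oh^*(2^{2k})=\Oh^*(4^k)$ calls to this solver plus polynomial overhead, which is the claimed bound for each of the three problems.

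The only genuinely non-routine points are the two verifications in the second step: that the proposed relaxed functions really are bisubmodular (most simply checked by verifying the multimorphism inequality $f(A)+f(B)\ge f(A\sqcap B)+f(A\sqcup B)$ coordinate-wise, since all the functions involved are sums of unary and binary terms), and that the relaxed instance's optimum equals the LP optimum so that the ``Above LP'' parameter coincides with the relaxation gap — the latter follows from the observation in the text that for a tractable (here bisubmodular) class the basic LP is exact, hence $\mathrm{OPT}(I')$ equals the BLP value of $I'$, which in turn is the natural half-integral LP value of $I$ after the variable map. Everything else is a direct substitution into Lemma~\ref{lm:relaxfpt}. I expect the coordinate-wise bisubmodularity check to be the part requiring the most care, though it is short.
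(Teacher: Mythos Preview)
Your proposal is correct and follows essentially the same route as the paper: exhibit the relaxed clause functions $\max(0,1-x-y)$, $\max(0,x-y)$, $\max(0,x+y-1)$ and the unary weights on $\{0,\half,1\}$, verify bisubmodularity, and invoke Lemma~\ref{lm:relaxfpt} with $|D|=2$, $c=2$. The only cosmetic difference is that the paper realises the hard constant $(x=1)$ via $f_\lor(x,z_0)$ with an auxiliary $z_0$ forced to $0$, whereas you use $f_1(x)=1-x$ directly; both work, and your added remark on why the ``Above LP'' parameter coincides with the relaxation gap is a useful clarification the paper leaves implicit.
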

\begin{proof}
For \textsc{Vertex Cover}, we simply repeat the construction in the example.
Let~$D'=\{0,\half,1\}$ as above, and define~$f_{\lor}(x,y)=\max(0,1-x-y)$ and~$f_0(x)=x$.
It can be verified that~$f_\lor$ and~$f_0$ are both bisubmodular
functions; by always using~$f_\lor$ at a weight of at least~$2n$, we may
emulate a crisp (unbreakable) or-constraint. Furthermore, we have
assignments $(x=0)$ and $(x=1)$: in the former case via $2n$ copies of
$f_0(x)$; in the latter, via $2n$ copies of $f_{\lor}(x,z_0)$ where $z_0$
is some new variable forced to take value $0$. 
Thus Lemma~\ref{lm:relaxfpt} applies. 

To capture \textsc{Min Ones 2-CNF} and \textsc{Almost 2-SAT}, we observe
that the further functions~$f_{\land}(x,y) = \max(0,x+y-1)$
and~$f_{\rightarrow}(x,y)=\max(0, x-y)$ are also bisubmodular, and
furthermore valid relaxations of the corresponding soft versions of 2-clauses.
\maybeqed{} \end{proof}

By the existence of a value oracle minimiser, we can extend to showing that 
the problem \textsc{Bisubmodular Cost 2-SAT}, defined below,
is FPT with a running time of $O^*(2^k)$ 
(Since bisubmodular functions are closed under adding or subtracting a constant,
we may assume that~$f$ attains the value zero on~$\{0,\half,1\}^V$, 
hence the total cost parameter $k$ has the same power as a relaxation gap parameter would.)

\parameterizedproblem{Bisubmodular Cost 2-SAT}%
{2-CNF $F$ on variable set~$V$, 
 non-negative bisubmodular function~$f: \{0,\half,1\}^V \rightarrow \Z$ (with
 black box access), 
 integer~$k$.}%
{$k$}
{Is there a satisfying assignment~$\phi: V \rightarrow \{0,1\}$ for~$F$ with~$f(\phi)\leq k$, where~$f(\phi)=f(\phi(v_1),\ldots,\phi(v_n))$ is
  the value of $f$ under $\phi$?}


\begin{corollary}
\label{cor:bisub:2cnf}
\textsc{Bisubmodular Cost 2-SAT} is FPT, with a running time of~$\Oh^*(2^k)$.
\textsc{Submodular Cost 2-SAT} under the same parameter is FPT with 
a running time of~$\Oh^*(4^k)$, even for non-monotone submodular cost
functions. 
\end{corollary}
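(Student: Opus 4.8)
Both parts run the discrete-relaxation branching from the proof of Lemma~\ref{lm:relaxfpt}, with \emph{bisubmodular} functions in the role of the relaxed cost functions. We use throughout that bisubmodular functions on $\{0,\half,1\}^V$ are persistent (Kolmogorov~\cite{Kolmogorov12DAM}), can be minimised in the value-oracle model (Fujishige and Iwata~\cite{FujishigeI05}), and are closed under nonnegative combinations; that the relaxations $f_\lor(x,y)=\max(0,1-x-y)$, $f_\land(x,y)=\max(0,x+y-1)$, $f_\to(x,y)=\max(0,x-y)$ of the soft $2$-clauses are bisubmodular (proof of Corollary~\ref{cor:vc-a2sat}); and that the gadgets $f_0(x)=x$ and $f_\lor(v,z_0)$ with $z_0$ forced to $0$ supply the hard constants $v=0$, $v=1$ required by Lemma~\ref{lm:relaxfpt}. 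For each part the plan is to build a bisubmodular relaxed instance, note it is solvable and persistent, run the branching, and read off the running time from the scaling factor of the relaxation.

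\emph{\textsc{Bisubmodular Cost 2-SAT}.} Since bisubmodular functions are closed under adding a constant, first subtract $\min_{\{0,\half,1\}^V}f$ (one oracle call) so that $f\ge 0$ with minimum $0$, decreasing $k$ accordingly (and rejecting if $k$ becomes negative). The relaxed instance $I'$ on $D'=\{0,\half,1\}$ consists of $f$ together with the relaxed soft $2$-clauses of $F$ and the constant gadgets, all taken at a large \emph{even} weight; this is a nonnegative combination of bisubmodular functions, hence bisubmodular, hence solvable by~\cite{FujishigeI05} and persistent by~\cite{Kolmogorov12DAM}, and every function involved is integer-valued, so the scaling factor is $c=1$. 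Now run the branching of Lemma~\ref{lm:relaxfpt}: repeatedly fix a variable to $0$ or $1$ whenever this does not raise $\mathrm{OPT}(I')$ (legal by persistence), otherwise branch into the two cases $v=0,v=1$, each of which raises $\mathrm{OPT}(I')$ by at least $1/c=1$; prune when the gap is exhausted; once every variable is fixed we have an optimal integral assignment. The recursion tree has branching factor $2$ and depth at most $k$, so $O^*(2^k)$ calls to the bisubmodular minimiser suffice. Here $k$ plays the role of a relaxation gap since $\mathrm{OPT}(I')\ge 0$ while an affirmative instance has an integral feasible assignment of cost at most $k$, which has the same cost in $I'$.

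\emph{\textsc{Submodular Cost 2-SAT}.} The new ingredient is a bisubmodular discrete relaxation of an arbitrary, possibly non-monotone, submodular $f:\{0,1\}^V\to\Z$ (after subtracting $\min f$ we may assume $f\ge 0$). For $x\in\{0,\half,1\}^V$ let $x^\downarrow$ and $x^\uparrow$ be obtained by replacing every $\half$ with $0$ and with $1$ respectively, viewed as subsets of $V$, and set $f'(x)=\tfrac12\bigl(f(x^\downarrow)+f(x^\uparrow)\bigr)$. Properties (i),(ii) of a discrete relaxation hold trivially ($x^\downarrow=x^\uparrow=x$ on $\{0,1\}^V$, and $f'(x)\ge\min_{\{0,1\}^V}f$ with equality at an integral minimiser), so the content is that $f'$ is bisubmodular: writing $U=A\sqcup B$, $L=A\sqcap B$, we must show $f(A^\downarrow)+f(A^\uparrow)+f(B^\downarrow)+f(B^\uparrow)\ge f(U^\downarrow)+f(U^\uparrow)+f(L^\downarrow)+f(L^\uparrow)$. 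A coordinatewise check of the six cases for $(A_v,B_v)$, using the componentwise definitions of $\sqcup,\sqcap$ on $\{0,\half,1\}$, shows that with $Q=A^\uparrow\cap B^\uparrow$ and $R=A^\downarrow\cup B^\downarrow$ we have $L^\downarrow=A^\downarrow\cap B^\downarrow$, $U^\downarrow=Q\cap R$, $U^\uparrow=Q\cup R$, $L^\uparrow=A^\uparrow\cup B^\uparrow$; the inequality then follows by summing three instances of the submodular uncrossing inequality for $f$, applied to $(A^\downarrow,B^\downarrow)$, to $(A^\uparrow,B^\uparrow)$, and to the incomparable pair $(Q,R)$. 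With $f'$ in hand, feed $f'$, the relaxed $2$-clauses and the constant gadgets (all bisubmodular; $f'$ is evaluated with two oracle calls to $f$) into the branching of Lemma~\ref{lm:relaxfpt}; since $f'$ is only half-integer-valued the scaling factor is $c=2$ over $D=\{0,1\}$, and the lemma gives $O^*(|D|^{ck})=O^*(2^{2k})=O^*(4^k)$.

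\emph{Main obstacle.} The crux is the bisubmodularity of $f'$ in the submodular case: spotting the decomposition $L^\downarrow=A^\downarrow\cap B^\downarrow$, $U^\downarrow=(A^\uparrow\cap B^\uparrow)\cap(A^\downarrow\cup B^\downarrow)$, $U^\uparrow=(A^\uparrow\cap B^\uparrow)\cup(A^\downarrow\cup B^\downarrow)$, $L^\uparrow=A^\uparrow\cup B^\uparrow$, which collapses the target inequality into a sum of three uncrossing steps; once this is guessed, the verification is routine case work. A secondary point is the bookkeeping in the bisubmodular case that keeps the running-time base at $2$ rather than $4$ — possible precisely because a bisubmodular cost function is already integer-valued on $\{0,\half,1\}^V$, whereas a submodular one genuinely needs the half-integral $2$-relaxation $f'$.
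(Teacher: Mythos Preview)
Your proof is correct and follows essentially the same route as the paper: enforce the 2-CNF crisply via large-weight bisubmodular clause relaxations, use the integer-valuedness of $f$ to get scaling factor~$1$ in the bisubmodular case, and in the submodular case relax via $f'(x)=\tfrac12(f(x^\downarrow)+f(x^\uparrow))$ and prove bisubmodularity by three submodular uncrossings. Your $f'$ is precisely the Lov\'asz extension restricted to $\{0,\half,1\}^V$ (with $x^\downarrow=A_1$, $x^\uparrow=A_h$ in the paper's notation), and your identities $L^\downarrow=A^\downarrow\cap B^\downarrow$, $U^\downarrow=Q\cap R$, $U^\uparrow=Q\cup R$, $L^\uparrow=A^\uparrow\cup B^\uparrow$ together with the three uncrossings reproduce exactly the paper's chain of inequalities.
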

\begin{proof}
First, we may enforce the crisp 2-CNF formula $F$, as previously noted,
by creating large-weight finite-valued constraints for the 2-clauses. 

For bisubmodular cost functions, the corollary follows in a straight-forward
manner.  Let~$M$ be a value large enough to dominate the cost of~$f$ (such
a value can be found, if nothing else, by repeating the below with
gradually higher values of $M$), and construct a new
bisubmodular cost function~$f'=f+\sum_{C \in   F} M\cdot f(C)$,
where~$f(C)$ for a 2-clause~$C$ is the corresponding function defined in
Corollary~\ref{cor:vc-a2sat}. Then any minimizer of~$f'$ must satisfy the
LP-relaxation of~$F$. Since~$f$ is already integer-valued, our ``scaling
factor'' is 1, and the running time follows. 

For submodular functions, we observe that the Lov\'asz extension, evaluated
on $\{0,\half,1\}^V$, is a bisubmodular function, and thus a bisubmodular
relaxation with scaling factor 2. 
To be explicit, consider some $A \in \{0,\half,1\}^V$, 
decomposed as $A=A_1 + \frac 1 2 A_{\half}$ for $A_1, A_{\half} \subseteq V$,
and write~$A_h=A_1\cup A_{\half}$; proceed similarly for a second point~$B$.
By the definition of the Lov\'asz extension and submodularity we have
\begin{align*}
2\hat f(A) + 2\hat f(B) 
   & =    f(A_1) + f(A_h) + f(B_1) + f(B_h) \\
   & \geq f(A_1 \cap B_1) + f(A_1 \cup B_1) +  f(A_h \cap B_h) + f(A_h \cup B_h) \\
   & \geq f(A_1 \cap B_1) + f(A_h \cup B_h) +  f((A_1 \cup B_1) \cap (A_h \cap B_h)) 
          \\ & \quad +  
      f((A_1 \cup B_1) \cup (A_h \cap B_h)),
\end{align*}
where it can be verified that the last four terms are exactly the same as would be
produced by applying the bisubmodular operators~$\sqcap, \sqcup$ on~$A, B$ directly and
evaluating the result. 
\maybeqed{} \end{proof}

The particular case of \textsc{Submodular Vertex Cover} was previously
shown to have a half-integral relaxation~\cite{IwataN09}; the above shows
that this problem is also FPT. 

Although it is difficult to get a good handle on the expressive power of
bisubmodular functions in general, let us mention that beyond submodular
functions, the class also covers \emph{twistings} $f(S \Delta X)$ of
submodular functions $f(X)$ (for some fixed $S \subseteq V$), sums of such
twistings, and (perhaps more generally) rank functions of
delta-matroids~\cite{Bouchet95}. 

In the appendix, we make a note observing that the use of a 2-CNF formula
$F$ precisely captures the ``crisp expressive power'' of bisubmodular
relaxations (in the same way as a ring family for submodular functions; see Schrijver~\cite{SchrijverBook}). 

\subsection{Edge- versus vertex-deletion problems}
\label{section:vdel}
Finally, we note that  the above discussion is generally described on an \emph{edge} or
\emph{constraint deletion} level (e.g., a typical pre-relaxation cost
function is a function $f: \{0,1\}^r \rightarrow \{0,1\}$ encoding
the soft version of some relation $R \subseteq \{0,1\}^r$). 
In several problems (in particular in the following sections), one may
wish to also express the \emph{vertex} or \emph{variable deletion}
version. This can be done as follows. For a variable $v$, occurring in $d$
different constraints, we introduce a separate variable $v(1), \ldots,
v(d)$ for each occurrence, we give each individual constraint on these new
variables high enough weight that it will be treated as crisp, and we
impose a valued constraint $(v(1)=\ldots=v(d))$ (a \emph{soft wide equality}),
which takes value $0$ if all occurrences of $v$ are identical and value $1$ otherwise. 
These constraints would effectively encode whether a variable $v$ has been
deleted (with constraint weight 1, e.g., every occurrence $v(i)$ of $v$
can take whatever value it needs to satisfy its constraint) or not. 
Note that these soft wide equalities are defined on the original domain,
and hence need to admit an appropriate discrete relaxation; for the case
of $k$-submodular relaxations, this is possible. 

A bigger problem is that these constraints have unbounded arity. 
For bisubmodular functions, this is acceptable, both since 
we may use a value oracle model, and since it has an implementation
as a 2-CNF formula with additional variables, e.g., 
$(v(1) \rightarrow y) \land \ldots \land (v(d) \rightarrow y) \land (y
\rightarrow z) \land (z \rightarrow v(1)) \land \ldots \land (z \rightarrow
v(d))$. Unfortunately, neither of these options is available for
$k$-submodular functions; we will instead need to construct a different
LP. 

\section{On the power of $k$-submodular relaxations}
\label{section:ksubmod}

We now investigate the power of $k$-submodular functions for discrete
relaxation, that is, we investigate the class of cost functions $f$ on 
a domain $D=\{1,\ldots,k\}$ which have discrete relaxations $f'$ on
the domain $D'=\{0,\ldots,k\}$ such that $f'$ is a $k$-submodular function.
We will find that this covers both some well-known half-integrality
results (e.g., the \textsc{Multiway Cut} problem~\cite{GargVY04}) and
several new results that one might not have suspected (e.g., half-integral
relaxations of \textsc{Group Feedback Vertex Set} and \textsc{Unique Label
  Cover}). 

We begin with establishing the basic essential properties. 

\begin{lemma}
\label{lemma:ksubmodpers}
The class of~$k$-submodular functions, on domain~$D'=\{0,\ldots,k\}$, is
persistent with respect to a choice of integral domain~$D=\{1,\ldots,k\}$.
Furthermore, it contains all hard constants from~$D$; specifically, for 
each $d \in D$ there is a unary valued constraint $f_d(v)$ which has $v=d$ as a
unique minimum. 
\end{lemma}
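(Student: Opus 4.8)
The plan is to handle the two assertions separately, the hard-constants part by an explicit construction and the persistence part by a short exchange argument using the $k$-submodular multimorphism twice. For the hard constants, for each $d\in D=\{1,\ldots,k\}$ I would take the unary function $f_d$ with $f_d(d)=0$, $f_d(0)=1$ and $f_d(x)=2$ for all $x\in D\setminus\{d\}$. Checking that $f_d$ is $k$-submodular reduces to three types of pairs $(x,y)$: if $x=y$, or if one of $x,y$ equals $0$, both sides of $f_d(x)+f_d(y)\ge f_d(x\sqcap y)+f_d(x\sqcup y)$ coincide; and if $x,y$ are distinct and both nonzero, then $x\sqcap y=x\sqcup y=0$, so the right-hand side equals $2f_d(0)=2$ while the left-hand side is at least $0+2=2$, since at most one of $x,y$ equals $d$. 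As $f_d(d)=0$ is strictly below every other value, $v=d$ is its unique minimum; note moreover that $f_d$ is integer-valued, so using it does not enlarge the scaling factor in Lemma~\ref{lm:relaxfpt}.

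For persistence, let $I'$ be a relaxed instance all of whose cost functions are $k$-submodular and write $g=f_{I'}$; since $k$-submodularity is closed under summation and under adding dummy coordinates, $g$ itself satisfies $g(A)+g(B)\ge g(A\sqcap B)+g(A\sqcup B)$ for all $A,B\in(D')^V$, with $\sqcap,\sqcup$ applied coordinatewise. Fix any optimal assignment $\phi^*$ of $I'$ (a global minimizer of $g$), let $N=\{v\in V:\phi^*(v)\ne 0\}$ be the set of its integral coordinates and $R=V\setminus N$, and for an integral assignment $\psi\in D^V$ let $\psi''$ denote the assignment that equals $\phi^*$ on $N$ and equals $\psi$ on $R$. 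The crux is the inequality $g(\psi'')\le g(\psi)$. I would prove it by applying the multimorphism inequality twice: to the pair $(\psi,\phi^*)$, giving $g(\psi)+g(\phi^*)\ge g(\psi\sqcap\phi^*)+g(\psi\sqcup\phi^*)$, and to the pair $(\phi^*,\psi\sqcup\phi^*)$, giving $g(\phi^*)+g(\psi\sqcup\phi^*)\ge g(\phi^*\sqcap(\psi\sqcup\phi^*))+g(\phi^*\sqcup(\psi\sqcup\phi^*))$. A coordinatewise check --- using idempotence, the rules $0\sqcap x=0$, $0\sqcup x=x$, and $x\sqcap y=x\sqcup y=0$ for distinct nonzero $x,y$, together with the fact that $\psi$ is integral so $\psi(v)\ne 0$ for all $v$ --- shows that $\phi^*\sqcup(\psi\sqcup\phi^*)=\psi''$ and $\phi^*\sqcap(\psi\sqcup\phi^*)=\psi\sqcap\phi^*$. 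Adding the two inequalities, cancelling the common term $g(\psi\sqcup\phi^*)$, and using $g(\phi^*)\le g(\psi\sqcap\phi^*)$ (as $\phi^*$ is a global minimizer) gives $g(\psi'')\le g(\psi)$. Taking $\psi$ to be an optimal integral assignment of $I'$ (one minimizing $g$ over the finite set $D^V$), we get that $\psi''$ is integral, agrees with $\phi^*$ on every coordinate where $\phi^*$ is integral, and is itself an optimal integral assignment --- which is exactly what persistence asserts.

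I expect the main obstacle to be choosing the right pair of multimorphism applications. The single step $\psi\mapsto\psi\sqcup\phi^*$ already keeps $\psi$ on $R$ and matches $\phi^*$ on the coordinates of $N$ where $\psi$ and $\phi^*$ agree, but it collapses to $0$ on the coordinates of $N$ where they disagree (both values being nonzero there), so it is neither integral nor does it match $\phi^*$ on all of $N$. The resolution is that a second step against $\phi^*$ repairs precisely those coordinates, since $\phi^*(v)\sqcup 0=\phi^*(v)$, while leaving $R$ untouched, since $0\sqcup\psi(v)=\psi(v)$; the bookkeeping identity $\phi^*\sqcap(\psi\sqcup\phi^*)=\psi\sqcap\phi^*$ is what makes the two inequalities telescope into $g(\psi'')\le g(\psi)$.
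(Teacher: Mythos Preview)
Your proof is correct and follows essentially the same route as the paper: the persistence argument applies the $k$-submodular inequality twice (to $(\psi,\phi^*)$ and then to $(\phi^*,\psi\sqcup\phi^*)$) and invokes optimality of $\phi^*$, exactly as in the paper's derivation $f(X)\ge f((X\sqcup X^*)\sqcup X^*)$, with your additional identity $\phi^*\sqcap(\psi\sqcup\phi^*)=\psi\sqcap\phi^*$ making the cancellation more explicit. For the hard constants the paper uses $f_d(d)=0$, $f_d(0)=\half$, $f_d(d')=1$, whereas you take the scaled version $0,1,2$; both are $k$-submodular with $v=d$ as unique minimizer, and your integer-valued choice is a harmless (arguably nicer) variant.
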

\begin{proof}
For persistence, consider the following derivation. Let~$f$ be a cost
function,~$X^*$ a relaxed optimum, and~$X$ an integral optimum.
\begin{align*}
f(X) + 2f(X^*) &\geq f(X \sqcap X^*) + f(X \sqcup X^*) + f(X^*) \\
               &\geq f(X \sqcap X^*) + f((X \sqcup X^*) \sqcap X^*) + 
f((X \sqcup X^*) \sqcup X^*) \\
               &\geq 2f(X^*) + f((X \sqcup X^*) \sqcup X^*),
\end{align*}
where the first two lines are due to application of~$k$-submodularity
equality, and the last line is since~$f(X^*)$ is a relaxed optimum. 
Thus~$f(X) \geq f((X \sqcup X^*) \sqcup X^*)$ for any integral optimum~$X$ and
relaxed optimum~$X^*$. Observe now that the latter operation preserves all
coordinates from~$X$ where~$X^*$ takes value zero, and replaces all other
coordinates (where~$X^*$ is integral) by the value from~$X^*$. Thus the
right-hand-side of this equation is an integral optimum which agrees
with~$X^*$ on the integral coordinates of the latter.

For the last part, we define $f_d(v)$ such that $f_d(d)=0$;~$f_d(0)=\half$; and~$f_d(d')=1$ for 
any~$d' \in D, d' \neq d$. 
\maybeqed{} \end{proof}

\begin{corollary}
\label{cor:ksubmod:direct}
For any set $\cF$ of bounded-arity functions on a domain $\{1,\ldots,k\}$,
with a known $k$-submodular $c$-relaxation $\cF'$, the problem VCSP$(\cF)$
is FPT with a running time of $O^*(k^{cp})$, where $p$ is the relaxation
gap.
\end{corollary}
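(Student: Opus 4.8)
The plan is to derive Corollary~\ref{cor:ksubmod:direct} as an immediate consequence of the machinery already in place, namely Lemma~\ref{lm:relaxfpt} together with Lemma~\ref{lemma:ksubmodpers}. First I would observe that the hypotheses of Lemma~\ref{lm:relaxfpt} are met: we are given a set $\cF$ of integer-valued cost functions on $D=\{1,\ldots,k\}$ (bounded arity is not needed for the branching lemma itself, but it will be needed to invoke the solver efficiently), and a $k$-submodular $c$-relaxation $\cF'$ on $D'=\{0,\ldots,k\}$. By Lemma~\ref{lemma:ksubmodpers}, the class of $k$-submodular functions is persistent with respect to the integral domain $D=\{1,\ldots,k\}$, and it contains all hard constants from $D$ (each $d\in D$ is the unique minimum of a unary valued constraint $f_d$). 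Hence $\cF'$ is a persistent $c$-relaxation that includes all hard constants, exactly as Lemma~\ref{lm:relaxfpt} requires.

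Second, I would supply the black-box solver for VCSP$(\cF')$ that Lemma~\ref{lm:relaxfpt} assumes. Since $\cF'$ consists of $k$-submodular functions of bounded arity, the basic LP relaxation (BLP) has polynomial size, and by the result of Thapper and \v{Z}ivn\'y (cited in Section~\ref{section:prel}) the BLP solves VCSP$(\cF')$ exactly; this gives the required polynomial-time solver. This is the point where bounded arity is genuinely used: without it the BLP would have $|D'|^r$ variables per constraint and need not be polynomial.

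Third, I would plug these facts into Lemma~\ref{lm:relaxfpt}: with $|D|=k$ and scaling factor $c$, that lemma solves an instance $I$ of VCSP$(\cF)$ using $O^*(|D|^{cp})=O^*(k^{cp})$ calls to the solver (where $p=\mathrm{OPT}(I)-\mathrm{OPT}(I')$ is the relaxation gap) plus polynomial additional work. Since each call is polynomial-time, the total running time is $O^*(k^{cp})$, which is the claimed bound.

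There is no real obstacle here; the statement is essentially a packaging of the two preceding lemmas plus the Thapper--\v{Z}ivn\'y tractability of $k$-submodular VCSPs. The only point requiring a little care is the interface between Lemma~\ref{lm:relaxfpt} (which is phrased relative to a black-box solver) and the concrete fact that bounded-arity $k$-submodular VCSPs are polynomial-time solvable via the BLP, so I would make sure to state that link explicitly rather than leaving it implicit.
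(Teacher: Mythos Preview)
Your proposal is correct and matches the paper's approach exactly: the corollary is stated without explicit proof in the paper, and the intended argument is precisely the one you give—combine the persistence and hard-constant claims of Lemma~\ref{lemma:ksubmodpers} with the branching scheme of Lemma~\ref{lm:relaxfpt}, using the basic LP (polynomial-sized thanks to bounded arity, and exact for $k$-submodular functions by Thapper--\v{Z}ivn\'y) as the black-box solver. The paper even flags the same point you do about bounded arity, remarking immediately after the corollary that ``the restriction of arity is due to the size of the Basic LP relaxation.''
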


The restriction of arity is due to the size of the Basic LP relaxation.
Unfortunately, as mentioned in Section~\ref{section:vdel},
this is a significant restriction if one wants to support vertex deletion problems.

In the rest of this section, we first establish a basic collection of
functions with $k$-submodular relaxations (and make a note on the
structure of $k$-submodular optima), then provide an alternate
LP-relaxation for this particular set of functions, to get
around the problem of arity. Finally, we make a note on the parameterized
complexity of the \textsc{Unique Label Cover} problem. 
We then  study the \textsc{Group Feedback Vertex Set} problem in Section~\ref{section:gfvs}.

\subsection{Basic $k$-submodular functions}
Now, let us establish some basic $k$-submodular relaxations.

\begin{lemma}
\label{lemma:usefulksubmod}
The following cost functions on a domain~$D=\{1,\ldots,k\}$
have~$k$-submodular relaxations.
We let $x, y$ denote variables and $d, d'$  domain values.
\begin{enumerate}
\item
Any unary function;
\item
the soft version of a constraint~$(x = \pi(y))$, for any permutation~$\pi$ on~$D$;
\item 
the soft version of a constraint~$(x=d \lor y=d')$ for~$d, d' \in D$;
\item 
the soft version of the constraint~$(x_1=\ldots=x_r)$. 
\end{enumerate}
The scaling factor in all cases is 2. 
\end{lemma}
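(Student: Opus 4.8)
The plan is to exhibit, for each of the four function types, an explicit relaxation $f'$ on $D'=\{0,\ldots,k\}$ and verify the $k$-submodularity inequality $f'(X)+f'(Y)\ge f'(X\sqcap Y)+f'(X\sqcup Y)$ together with the two requirements from the definition of a discrete relaxation (agreement with $f$ on $D^r$, and equal minima). Since all the functions in question are unary or binary except item (4), and since $k$-submodularity is preserved under taking nonnegative combinations and under adding constants, the essential work is a small finite case check, and the scaling factor $2$ will show up because the natural relaxed values of $f'$ on tuples involving $0$ will be half-integers (as already seen for $f_d$ in Lemma~\ref{lemma:ksubmodpers}).

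First I would handle item (1): for a unary $f:D\to\R$, set $f'(0)=\tfrac12(\min_{d}f(d)+\text{second-smallest value})$ — more simply, $f'(0)$ equal to the minimum of $f$ over $D$ — which trivially satisfies both relaxation conditions, and $k$-submodularity for a unary function on a domain with the given $\sqcap,\sqcup$ is automatic once one notes that for $x,y\in D$ distinct we need $f'(x)+f'(y)\ge 2f'(0)$, so $f'(0)\le\tfrac12(f'(x)+f'(y))$ for all pairs; taking $f'(0)$ to be the minimum over $D$ suffices. (In fact $f'(0)=\min_D f$ is the canonical choice used implicitly throughout.) For item (2), the soft permutation constraint, I would define $f'(d,\pi(d))=0$ for $d\in D$, $f'(d,d')=1$ for $d'\neq\pi(d)$ with both integral, and then fill in the rows/columns through $0$: $f'(0,0)=0$, $f'(0,d)=f'(d,0)=\tfrac12$. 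One checks $k$-submodularity by going through the cases of where $X=(x_1,y_1)$ and $Y=(x_2,y_2)$ land; the only interesting cases are when a coordinate pair is $\{d,d'\}$ with $d\neq d'$, both nonzero, where $\sqcap=\sqcup=0$ in that coordinate, and one verifies the drop of $1$ on the left is matched. For item (3), the constraint $(x=d\lor y=d')$, the relaxation is $f'$ equal to $1$ exactly on $(a,b)$ with $a\in D\setminus\{d\}$ and $b\in D\setminus\{d'\}$, value $\tfrac12$ when exactly one coordinate is $0$ and the other is "bad", and $0$ otherwise; again a finite check. For item (4), the soft wide equality, the relaxation sends $(x_1,\ldots,x_r)$ to $0$ if all $x_i$ agree, and otherwise to $1$ if at least two \emph{nonzero} values disagree, with intermediate values $\tfrac{j}{2r}$ or similar depending on how many coordinates are $0$ — here I would instead argue structurally: $k$-submodularity of the relaxed wide equality follows because applying $\sqcup$ and $\sqcap$ coordinatewise to two tuples $X,Y$ cannot increase "disagreement", since in every coordinate $\sqcup,\sqcap$ either keep a value, collapse a disagreement to $0$, or leave an agreement alone. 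The clean way is to define $f'(X)=1-\tfrac1r\cdot(\text{number of coordinates agreeing with the majority pattern})$ appropriately normalised so that it is $0$ on constant tuples and $1$ on tuples with all coordinates pairwise "conflicting"; but one must be careful to make $f'$ genuinely a relaxation (minima agree — both are $0$ via the constant-$0$ tuple) and scaling factor exactly $2$.

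The main obstacle I expect is item (4): the wide equality has unbounded arity, so unlike the other cases one cannot just tabulate a small matrix, and one must find a closed-form $f'$ and prove the inequality in general. The trick is to observe that for each coordinate $i$, the contribution to the "conflict count" behaves submodularly under $\langle\sqcap,\sqcup\rangle$ because of the structure of the two operations: if $X(i)=Y(i)$ the coordinate is unchanged on both sides; if $X(i)\neq Y(i)$ and one of them is $0$, then $\{X(i)\sqcap Y(i),X(i)\sqcup Y(i)\}=\{0,\text{the nonzero one}\}$, so the multiset of coordinate values is preserved; and if $X(i),Y(i)$ are distinct and both nonzero, then both become $0$. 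So the pair of output tuples $(X\sqcap Y, X\sqcup Y)$ is obtained from $(X,Y)$ by, in some coordinates, swapping values between the two tuples (harmless) and, in others, replacing a distinct nonzero pair by $(0,0)$. One then only needs that replacing a coordinate's pair $(a,b)$, $a\neq b$ both nonzero, by $(0,0)$ does not increase $f'(X)+f'(Y)-f'(X\sqcap Y)-f'(X\sqcup Y)$ — i.e. that $f'$ as a function of each coordinate is "monotone towards $0$" in the right sense — which pins down the correct definition of $f'$ and closes the proof. The other three items are routine finite verifications, and the uniform claim "scaling factor $2$" is immediate once each $f'$ is written down, since all relaxed values lie in $\tfrac12\Z$ but not all in $\Z$.
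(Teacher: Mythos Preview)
Your treatment of items (1)--(3) is essentially correct and matches the paper: the relaxations you write down for the unary case, the bijection case, and the disjunction case are exactly those the paper uses, and the verification is a finite case check in each.

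The gap is in item (4). Neither of your proposed relaxations is the right one. Your first suggestion, with intermediate values $j/(2r)$, would have scaling factor $2r$, not $2$, contradicting the lemma. Your second suggestion, $f'(X)=1-\tfrac{1}{r}\cdot(\text{agreement with majority})$, is not even a discrete relaxation: on a tuple like $(1,1,\ldots,1,2)\in D^r$ it would return $1/r$, not $1$, so it fails to agree with $f$ on $D^r$. The correct relaxation (the one the paper uses) is much simpler and takes only three values: $f'(X)=0$ if $X$ is constant, $f'(X)=1$ if $X$ contains two distinct nonzero values, and $f'(X)=\tfrac12$ in the remaining case (i.e., $X$ mixes $0$ with exactly one nonzero value). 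This has scaling factor~$2$ and is a genuine relaxation.

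Your structural argument also does not close as stated. You claim that coordinatewise ``swapping'' between the two tuples is harmless for $f'(A)+f'(B)$, but this is false for the correct $f'$: if $A=(d,d,\ldots,d)$ and $B=(0,0,\ldots,0)$ (both cost $0$), then swapping a single coordinate produces two tuples each of cost~$\tfrac12$. The point is that the passage from $(X,Y)$ to $(X\sqcap Y,X\sqcup Y)$ is not a sequence of independent single-coordinate swaps but a simultaneous sorting (all nonzeros in ``one-zero'' coordinates go to the $\sqcup$-output), and the monotonicity you want does not hold step by step. The paper instead proves $k$-submodularity of the three-valued $f'$ by a direct case analysis on the total input cost ($0$, $\tfrac12+0$, $\tfrac12+\tfrac12$, $1+\tfrac12$, etc.), which is short once you have the right $f'$ in hand.
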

\begin{proof}
We supply only the relaxations here; the proof that each relaxation is actually $k$-submodular
is straight-forward case analysis, deferred to the appendix. 

\emph{1.} For the first case, we may simply relax by stating~$f'(0)=\min_{d \in D} f'(d)$. 
We may also use a slightly stronger version, as follows. 
Put~$d_1=\arg \min_{d \in D} f(d)$, and~$d_2=\arg \min_{d \in D: d \neq d_1} f(d)$. 
Then we may use
\[
f'(0) = \frac{f(d_1) + f(d_2)}{2}.
\]
In particular, this covers ``hard constants'' on~$D$.

\emph{2.} For the second case, define a relaxation~$f$ such that~$f(0,0)=0$
and~$f(a,0)=f(0,a)=\half$ if~$a \neq 0$. 

\emph{3.} For the third case, with specified domain elements $d, d' \in D$,
let~$f_{d,d'}$ on~$D'$ be the extension of the original valued constraint to~$D'$ as 
follows:~$f_{d,d'}(d,0)=f_{d,d'}(0,d')=f_{d,d'}(0,0)=0$,
and~$f_{d,d'}(0,d'')=f_{d,d'}(d'',0)=\half$ for all remaining cases. 

\emph{4.} For the soft wide equality function, define a relaxation as follows. If a tuple contains distinct
integral values, the cost is 1; if a tuple contains some integral value and
the value 0, the cost is \half; if the tuple is constant, the cost is 0.

This completes the cases.
\maybeqed{} \end{proof}

Via Corollary~\ref{cor:ksubmod:direct}, this implies that VCSP$(\cF)$ is
FPT when $\cF$ contains bounded-arity versions of the above cost
functions. The constraint $(x=d \lor y=d')$ is included mostly for
completeness (see below, regarding the solution structure), although it
does allow for a generalisation of how \textsc{Almost 2-SAT} could be
encoded into a bisubmodular cost function. The case of bijection
constraints is more interesting, as it allows for a direct encoding of
\textsc{Unique Label Cover} (see Section~\ref{section:ulc}) and problems
related to \textsc{Group Feedback Edge/Vertex Set} problems (see
Section~\ref{section:gfvs}).  Finally, the soft wide equality constraints
imply that we could \emph{in principle} handle vertex-deletion, if we had
a better underlying solver than the Basic LP; this is tackled in
Section~\ref{section:betterlp}.

As for bisubmodular functions, we show that the cases of Lemma~\ref{lemma:usefulksubmod} 
are sufficient to capture the crisp expressive power of functions with $k$-submodular
relaxations; the proof is in the appendix. 
Interestingly, this coincides with the language of so-called \emph{0/1/all constraints}
of Cooper et al.~\cite{CooperCJ94}, who showed this to be the unique maximal tractable 
CSP language closed under all permutations of the domain (see~\cite{CooperCJ94}).

\begin{lemma}
\label{lemma:ksubstructure}
Let $f$ be a $k$-submodular function on $D^n$, and let $P \subseteq D^n$
be the set of points $X$ that minimise $f(X)$. Let $P_{\mathrm{int}}=P \cap
\{1,\ldots,k\}^n$. Then $P_{\mathrm{int}}$ can be described as the set of
solutions to a formula over arbitrary unary constraints and constraints
$(x=a \lor y=b)$ and $(x=\pi(y))$ (defined as in Lemma~\ref{lemma:usefulksubmod}). 
\end{lemma}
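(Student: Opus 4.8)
The plan is to exhibit a majority polymorphism of $P_{\mathrm{int}}$ and then appeal to the $2$-decomposability of relations that admit a majority polymorphism. Recall first the standard fact that the set $P$ of minimisers of a $k$-submodular function is closed under $\sqcap$ and $\sqcup$: if $f(X)=f(Y)=\min f$, then $f(X)+f(Y)\geq f(X\sqcap Y)+f(X\sqcup Y)$ together with $f(X\sqcap Y),f(X\sqcup Y)\geq\min f$ forces equality throughout, so $X\sqcap Y, X\sqcup Y\in P$. Next, consider the \emph{dual discriminator} $\nabla$ on $\{1,\dots,k\}$, defined by $\nabla(a,b,c)=b$ if $b=c$ and $\nabla(a,b,c)=a$ otherwise; note that $\nabla$ outputs the majority value whenever two of its arguments agree and its first argument when all three differ, so $\nabla$ is a majority operation.

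I would then verify, by a short coordinate-wise case check, that on integral tuples $X,Y,Z$ the composed operation $\bigl(X\sqcup(Y\sqcap Z)\bigr)\sqcup(Y\sqcap Z)$ equals $\nabla(X,Y,Z)$ exactly: per coordinate there are only five patterns to consider (all three equal; each of the three ``exactly two equal'' cases; all three distinct), and one uses that $Y\sqcap Z$ carries the common value of $Y$ and $Z$ where they agree and $0$ elsewhere, which makes each case immediate. Since $X,Y,Z\in P$ implies $Y\sqcap Z\in P$, hence $X\sqcup(Y\sqcap Z)\in P$, hence $\nabla(X,Y,Z)\in P$, and since $\nabla$ plainly sends integral tuples to integral tuples, it follows that $P_{\mathrm{int}}$ is preserved by $\nabla$.

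As $\nabla$ is a majority operation, $P_{\mathrm{int}}$ is $2$-decomposable: by~\cite{JeavonsCC98} it equals the conjunction of its unary projections and its binary projections $R_{ij}$, each of which again admits $\nabla$ as a polymorphism. It then remains to show that every binary relation $R\subseteq\{1,\dots,k\}^2$ with polymorphism $\nabla$ has the claimed shape. The key observation, obtained by feeding well-chosen triples of tuples of $R$ into $\nabla$, is that whenever $R[a]=\{b:(a,b)\in R\}$ has at least two elements it is the full set of second coordinates of $R$, and symmetrically with the roles of the two coordinates exchanged. Partitioning the first coordinates of $R$ into those with a full such set and those with a singleton set (and noting that when both types occur, all singleton-type values are forced to share one common image), one checks that $R$ is necessarily one of: a product $A\times B$, i.e.,\ two unary constraints; the graph of a bijection between two subsets of $\{1,\dots,k\}$, which extends to a total permutation $\pi$ and so is the constraint $(x=\pi(y))$ together with two unary constraints; or a relation $\{a\}\times B\,\cup\,C\times\{b\}$ with $b\in B$, which is precisely $(x=a\lor y=b)$ restricted by the unary constraints $x\in\{a\}\cup C$ and $y\in B$. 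Conjoining the unary and binary projections of $P_{\mathrm{int}}$ then yields the required formula. (One could instead quote that the $\nabla$-invariant relations are exactly the 0/1/all constraints of~\cite{CooperCJ94}, but the analysis above is short enough to carry out directly.)

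The main obstacle is this last case analysis, and in particular the ``mixed'' case: confirming that a relation in which some column is full and some column is a singleton is genuinely captured by a single constraint $(x=a\lor y=b)$, rather than by a more general ``$(x\in A\lor y=b)$'' with $|A|\ge 2$ — the latter turning out not to be $\nabla$-invariant as soon as the domain has three or more elements, which is exactly what keeps the classification tight. The remaining ingredients are routine: closure of $P$ under $\sqcap,\sqcup$ is classical, the identity for $\nabla$ is a one-line verification per coordinate pattern, and the reduction to binary constraints is a direct invocation of~\cite{JeavonsCC98}.
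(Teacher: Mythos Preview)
Your proof is correct and follows a genuinely different route from the paper's. Both arguments begin by noting that $P$ is closed under $\sqcap,\sqcup$, manufacture a majority polymorphism from these, and invoke 2-decomposability via~\cite{JeavonsCC98}. The difference is where the majority polymorphism lives. The paper builds $h(A,B,C)=((A\sqcap B)\sqcup(A\sqcap C))\sqcup(B\sqcap C)$ on the \emph{extended} domain $\{0,\ldots,k\}$ (this $h$ outputs $0$ when its three integral arguments are pairwise distinct, so it does not restrict to an operation on $\{1,\ldots,k\}$); it therefore 2-decomposes the full minimiser set $P$ over $\{0,\ldots,k\}$, and only afterwards restricts each binary projection to $\{1,\ldots,k\}^2$, carrying out the case analysis via closure under $\sqcap,\sqcup$. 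You instead find a different composite, $(X\sqcup(Y\sqcap Z))\sqcup(Y\sqcap Z)$, which on integral tuples coincides with the dual discriminator $\nabla$ and in particular stays integral; this lets you 2-decompose $P_{\mathrm{int}}$ directly over $\{1,\ldots,k\}$ and then classify the $\nabla$-invariant binary relations. Your route is arguably cleaner: it avoids the detour through the larger domain, and it makes the identification with the 0/1/all constraints of~\cite{CooperCJ94} immediate (the paper observes this coincidence only after finishing its own case analysis). The paper's route, by contrast, keeps the entire argument phrased in terms of $\sqcap,\sqcup$, which is closer to the $k$-submodular viewpoint and yields slightly more information about the binary projections of $P$ itself (not just $P_{\mathrm{int}}$).
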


Note that this does capture the whole structure of minima of $k$-submodular functions,
due to the special way in which we treat the element $0$. Furthermore, and more
strongly, this does not limit the expressive power of $k$-submodular functions in 
general, as it focuses purely on the structure of minima. 
(See discussion in appendix.)

For our purposes, it also implies that if $R$ is a relation on domain $D$ whose soft version 
has a $k$-submodular relaxation, then $R$ can be expressed as a conjunction over the constraints above.
However, we do not know whether the \emph{soft version} of $R$ can in this case necessarily be 
\emph{implemented} as such a formula (taking costs $0$ and $1$ only). 

\subsection{A half-integral LP formulation}
\label{section:betterlp}

We now proceed to give an alternate half-integral LP-formulation for the
$k$-submodular relaxations given in Lemma~\ref{lemma:usefulksubmod}. 
The construction is somewhat modelled after the half-integral LP for
\textsc{Node Multiway Cut} given by Garg~et~al.~\cite{GargVY04}.
Let the input be an instance $I$ of VCSP$(\cF)$ with $m$ constraints, 
where $\cF$ is the set of cost functions given in Lemma~\ref{lemma:usefulksubmod}. 
Let the variable set of the VCSP be $V=\{v_1,\ldots,v_n\}$. We split every variable 
$v_i \in V$ in the CSP into $k$ variables $v_{i,d}$, 
one for every $d \in [k]:= \{1, \ldots, k\}$. 
Further, for every constraint $f_j$ of $I$, we introduce a variable $z_j$
to take care of the cost of $f_j$. Define a set $A$ to contain all pairs $(i,d)$
such that an assignment $(v_i=d)$ is to be enforced. 
The \emph{framework} constraints of the LP are as follows.
\begin{alignat*}{2}
\min\quad & \sum_j z_j \\
\mathrm{s.t.}\, & v_{i,a}+v_{i,b} \leq 1 &\,& \forall i \in [n], a,b \in [k], a \neq b \\
              & v_{i,d}=1             &\,& \forall (i,d) \in A \\
              & v_{i,d}, z_j \geq 0   & & \forall i \in[n], d \in [k], j \in [m]
\end{alignat*}
Further constraints bound the value of $z_j$; throughout, we use the relaxation
functions of Lemma~\ref{lemma:usefulksubmod}.
If $f_j(v_i)$ is a unary cost function, 
let $f_j(0):=(f_j(d_1)+f_j(d_2))/2$, where
$d_1 = \arg \min_{x \in [k]} f_j(x)$ and 
$d_2= \arg \min_{x \in [k], x \neq d_1} f_j(x)$. 
We constrain $z_j$ as follows.
\begin{equation}
\label{zj:unary}
z_j \geq f_j(0) + (2v_{i,d}-1)(f_j(d)-f_j(0)) \quad \forall d \in [k].
\end{equation}
If $f_j$ is the soft version of $(v_p = \pi(v_q))$, for some
permutation $\pi$ on $[k]$, 
constrain $z_j$  as follows.
\begin{equation}
\label{zj:bijection}
z_j \geq |v_{p,\pi(d)} - v_{q,d}| \quad \forall d \in [k].
\end{equation}
Here, $z \geq |x-y|$ is shorthand for the two separate equations $z \geq
x-y$ and $z \geq y-x$. 
If $f_j$ is the soft version of $(v_p=a \lor v_q=b)$
for some $a, b \in [k]$,
constrain $z_j$ as follows.
\begin{equation}
\label{zj:lor}
z_j \geq 1-v_{p,a}-v_{q,b}.
\end{equation}
Recall that $z_j \geq 0$ is additionally always in effect.
Finally, if $f_j$ is the soft wide equality $(v_{i_1}=\ldots=v_{i_r})$, for 
some $i_1, \ldots, i_r \in [n]$,
constrain $z_j$ as follows.
\begin{equation}
\label{zj:broadeq}
z_j \geq |v_{i_p,d} - v_{i_q,d}| \quad \forall d \in [k], p,q \in [r].
\end{equation}
Again, the absolute value is shorthand for a split into two equations. 
This completes the description of the new LP. 
We will now show its half-integrality. 
The proof goes through a series of exchange arguments, but ultimately
the result comes down to showing that the new LP has an optimum 
which corresponds exactly to an integral optimum of the basic LP,
using the relaxation functions of Lemma~\ref{lemma:usefulksubmod}.

We need some terminology. 
Let $v_i \in V$ be a variable of the CSP, and let
$v_i^*:=(v_{i,1},\ldots,v_{i,k})$ denote the vector of corresponding
variables in the above LP. We say that $v_{i,d}$ is \emph{tight} in an
assignment if there exists some $d' \in [k], d \neq d'$ such that
$v_{i,d}+v_{i,d'}=1$, and that $v_i$ has a \emph{standard assignment} 
if $v_{i,d}$ is tight for every $d \in [k]$. 
Thus in a standard assignment, $v_i^*$ is
characterised by the \emph{mode} $\arg \max_{d \in [k]} v_{i,d}$ 
and its \emph{frequency} $\max_{d \in [k]} v_{i,d}$. 
An assignment $v_i=d$ in the CSP, for $d \neq 0$,
corresponds to a standard assignment with mode $d$ and frequency $1$,
while an assignment $v_i=0$ in the CSP corresponds to a standard
assignment with frequency $\half$. Let the \emph{half-integral} standard
assignments be those whose frequency is either $\half$ or $1$. 

We give the proof in two parts, first showing that there is an LP-optimum
where every variable vector $v_i^*$ takes a standard assignment, then
showing that in fact, this assignment can be taken to be half-integral. 
By further observing that in a half-integral assignment, each 
cost variable $z_j$ takes the value of the corresponding $k$-submodular
2-relaxation of Lemma~\ref{lemma:usefulksubmod}, we complete the proof. 

\begin{lemma}
\label{lemma:lp-standard-assignment}
Let $\phi^*$ be an optimum to the above LP, and let $X$ be the set of
variables $v_{i,d}$ which are not tight in $\phi^*$, and such that
$v_{i,d}<\half$. Let $\phi'(\varepsilon)$ equal $\phi^*+\varepsilon X$,
with variables $z_j$ readjusted accordingly. Then for some
$\varepsilon>0$, $\phi'(\varepsilon)$ is another 
optimal assignment to the LP. 
\end{lemma}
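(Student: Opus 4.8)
The plan is to argue that $\phi'(\varepsilon)$ stays feasible for all small enough $\varepsilon>0$ and that its objective $\sum_j z_j$ does not exceed $\mathrm{OPT}$; since $\phi^*$ is already optimal, this forces $\phi'(\varepsilon)$ to be optimal as well. (If $X=\emptyset$ there is nothing to prove, so assume $X\neq\emptyset$.)

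For feasibility, I would first observe that every variable $v_{i,d}$ with $(i,d)\in A$ has value $1$ and so is not in $X$; hence the equalities $v_{i,d}=1$ are untouched, and nonnegativity is preserved since we only raise variables. The only framework inequalities that raising $X$-variables could threaten are the $v_{i,a}+v_{i,b}\le 1$; but if either $v_{i,a}$ or $v_{i,b}$ lies in $X$ it is by definition not tight, so $v_{i,a}+v_{i,b}<1$ strictly in $\phi^*$, and this survives as long as $\varepsilon$ is below half the (positive) slack. Resetting each $z_j$ to the least value meeting its lower-bound constraints (and $z_j\ge 0$) is always possible. So any $\varepsilon$ below the minimum of the finitely many positive slacks gives a feasible point; it then remains to bound $\varepsilon$ further so that the objective does not grow.

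For the objective, write $z_j^{\mathrm{new}}(\varepsilon)$ for the reset value of $z_j$, which is the maximum of $0$ and of one function per lower-bound constraint of $z_j$, each affine in $\varepsilon$ with slope determined by which incident variables lie in $X$. It suffices to show that every lower bound that is \emph{active} at $\varepsilon=0$ (i.e.\ attains the optimal $z_j$) has nonpositive slope: then the inactive bounds stay strictly below for small $\varepsilon$, so $z_j^{\mathrm{new}}(\varepsilon)\le z_j^{\mathrm{new}}(0)$ on some $[0,\varepsilon_j]$, and taking $\varepsilon\le\min_j\varepsilon_j$ (and below the feasibility bound) yields $\sum_j z_j^{\mathrm{new}}(\varepsilon)\le\mathrm{OPT}$, hence optimality. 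This reduces everything to a case check over the families of Lemma~\ref{lemma:usefulksubmod}. For the disjunction bound~(\ref{zj:lor}) the slope is $-[v_{p,a}\in X]-[v_{q,b}\in X]\le 0$, trivially. For a bijection bound~(\ref{zj:bijection}), suppose the active one is $z_j=v_{p,\pi(d)}-v_{q,d}\ge 0$ with positive slope, which needs $v_{p,\pi(d)}\in X$ (so $v_{p,\pi(d)}<\half$ and not tight) and $v_{q,d}\notin X$; since $v_{q,d}\le v_{p,\pi(d)}<\half$, the latter can only be because $v_{q,d}$ is tight, with $v_{q,e}=1-v_{q,d}>\half$ for some $e\neq d$. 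Then the bound $z_j\ge v_{q,e}-v_{p,\pi(e)}$ gives $v_{p,\pi(e)}\ge 1-v_{p,\pi(d)}$, while $v_{p,\pi(d)}+v_{p,\pi(e)}\le 1$ gives the reverse, so $v_{p,\pi(d)}+v_{p,\pi(e)}=1$ — contradicting that $v_{p,\pi(d)}$ is not tight. The wide-equality bounds~(\ref{zj:broadeq}) succumb to the same chase with $\pi$ the identity and the index pair ranging over the equality.

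I expect the unary bound~(\ref{zj:unary}) to be the one fiddly point, needing a short computation rather than a structural argument. A unary bound for $d$ has positive slope only when $v_{i,d}\in X$ and $f_j(d)>f_j(0)$, which forces $d\neq d_1$ and, since then $2v_{i,d}-1<0$, also forces the optimal value $z_j<f_j(0)$. On the other hand the $(d_1)$-bound gives $z_j\ge f_j(0)+(2v_{i,d_1}-1)(f_j(d_1)-f_j(0))$; plugging in $f_j(0)=\tfrac12(f_j(d_1)+f_j(d_2))$ and simplifying, the two inequalities can hold together only with $v_{i,d_1}=1$ and $v_{i,d}=0$, so that $v_{i,d}+v_{i,d_1}=1$ and $v_{i,d}$ is tight after all — contradicting $v_{i,d}\in X$ (the degenerate case $f_j(d_1)=f_j(d_2)$ is even simpler, since then $z_j\ge f_j(0)$ unconditionally). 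With all active bounds shown to have nonpositive slope, the objective does not increase on a neighbourhood of $0$, and the lemma follows.
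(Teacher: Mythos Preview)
Your approach is essentially the paper's: check feasibility for small $\varepsilon$, then go constraint-by-constraint and argue that no readjusted $z_j$ increases. Your reduction to ``every active lower bound has nonpositive slope'' is a clean way to phrase this, and your arguments for the disjunction, bijection, and wide-equality families are correct and match the paper's (the paper runs the bijection chase from the side of the variable being raised rather than the one staying put, but it is the same contradiction).

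The one place where your sketch slips is the unary case. From $z_j<f_j(0)$ and the $d_1$-bound alone you only obtain $v_{i,d_1}>\tfrac12$, not $v_{i,d_1}=1$ and $v_{i,d}=0$ as you claim. To actually force tightness you must combine the \emph{equality} from the active $d$-bound with the $d_1$-bound: since $d\neq d_1$ gives $f_j(d)-f_j(0)\geq f_j(d_2)-f_j(0)=f_j(0)-f_j(d_1)$, the comparison
\[
(2v_{i,d}-1)(f_j(d)-f_j(0)) \;\geq\; (2v_{i,d_1}-1)(f_j(d_1)-f_j(0))
\]
simplifies (after flipping signs) to $1-2v_{i,d}\leq 2v_{i,d_1}-1$, i.e.\ $v_{i,d}+v_{i,d_1}\geq 1$, hence equality by the framework constraint, hence $v_{i,d}$ is tight. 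With this correction the argument goes through; the paper handles this case instead by shifting to $f_j(0)=0$ and splitting on whether $f_j(d_1)<0$ and whether $v_{i,d_1}\geq\tfrac12$.
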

\begin{proof}
By readjusting the variables $z_j$, we mean that every variable $z_j$ is
given the smallest possible feasible value, given the assignments to the
variables $v_{i,d}$ fixed by $\phi'(\varepsilon)$. 
Clearly, there is some $\varepsilon>0$ such that $\phi'(\varepsilon)$ 
is a feasible assignment; we will further verify that the readjustment of
the variables $z_j$ does not increase the total cost. 
This is done on a constraint-by-constraint basis.

\begin{claim}
\label{claim:unary}
Let $f_j$ be a unary cost function on a variable $v_i$ in the CSP,
and $z_j$ constrained as in (\ref{zj:unary}). 
For a sufficiently small $\varepsilon>0$, the value of $z_j$ does not
increase. 
\end{claim}
\begin{proof}
Let $d_1$ and $d_2$ be the first and second minimising values of $f_j$, as above.
We assume for simplicity that $f_j(0)=0$ (even at the risk of having
$f_j(d_1)<0$), by adjusting every value of $f_j(\cdot)$ by $-f_j(0)$.
Observe that the value of $z_j$ changes by this by only a constant. 
We also readjust $z_j \geq 0$ to $z_j \geq -f_j(0)$; thus this is a simple
shift of the value of $z_j$.
We can simplify (\ref{zj:unary}) as follows:
\[
z_j \geq (2v_{i,d}-1)f_j(d) \quad \forall d \in [k].
\]
First assume that $f_j(d_1)=f_j(d_2)=f_j(0)=0$; thus $f_j(d)\geq 0$
for every $d$. In particular, for $d=d_1$ the equation reads $z_j \geq 0$.
To raise the value of $z_j$, some variable $v_{i,d}$ must have a value
greater than $\half$, but such a variable would not be changed.

Now, assume that we have $f(d_1)<0$, thus $f(d_1)+f(d_2)=0$. 
If $v_{i,d_1}<\half$ 
then $z_j > 0$, but raising the value of $v_{i,d_1}$ does not
increase $z_j$; in this case, the only other possible tight value for $z_j$
would be some $d$ such that $v_{i,d}>\half$, but again, such a variable
would not be readjusted. 

Otherwise $v_{i,d_1}\geq \half$, but then $v_{i,d} \leq 1-v_{i,d_1} \leq \half$ 
for every $d \neq d_1$. Inserting $d=d_2$ into the equation we have
a right-hand-side of $(2v_{i,d_2}-1)f_j(d_2) \leq (1-2v_{i,d_1})f_j(d_2)
= (2v_{i,d_1}-1)f_j(d_1)$, matching the equation for $d=d_1$; 
for every other value of $d$, the equation has at least as high slope. 
Thus no non-tight value other than $d_1$ can define the value of $z_j$. 
\maybeqed{} \end{proof}

\begin{claim}
\label{claim:bijection}
Let $f_j$ be the soft version of the constraint $(v_p = \pi(v_q))$,
and $z_j$ constrained as in (\ref{zj:bijection}).
For a sufficiently small $\varepsilon>0$, the value of $z_j$ does not
increase. 
\end{claim}
\begin{proof}
Assume that $v_{q,b}$ is raised, immediately increasing the value of
$z_j$. Let $a=\pi(b)$. Then $v_{p,a}$ cannot be raised by $X$, hence
either $v_{p,a} \geq \half$ or $v_{p,a}$ is a tight value. But since
$v_{q,b}<\half$, in the former case the value of $z_j$ will not increase;
hence $v_{p,a} \leq v_{q,b}$ and $v_{p,a}+v_{p,a'}=1$ for some $a'\in [k]$.
Let $b'=\pi^{-1}(a')$. Then $v_{p,a'}-v_{q,b'} > (1-v_{p,a}) - (1-v_{q,b})
= v_{q,b}-v_{p,a}$, contradicting the claim that the equation
$v_{q,b}-v_{p,a}$ maximises $z_j$. 
\maybeqed{} \end{proof}

\begin{claim}
\label{claim:lor}
Let $f_j$ be the soft version of the constraint $(v_p=a \lor v_q=b)$,
and $z_j$ constrained as in (\ref{zj:lor}).
For a sufficiently small $\varepsilon>0$, the value of $z_j$ does not
increase. 
\end{claim}
\begin{proof}
The right-hand-side of (\ref{zj:lor}) has no positive coefficients for any $v_{i,d}$. 
\maybeqed{} \end{proof}

\begin{claim}
\label{claim:broadeq}
Let $f_j$ be the soft equality $(v_{i_1}=\ldots=v_{i_r})$, for
some $i_1, \ldots, i_r \in [n]$,
and let $z_j$ be constrained as in (\ref{zj:broadeq}).
For a sufficiently small $\varepsilon>0$, the value of $z_j$ does not increase. 
\end{claim}
\begin{proof}
Note that the value of $z_j$ equals the largest cost of a soft binary 
equality $(v_p=v_q)$ for $p, q \in \{i_1,\ldots,i_r\}$. 
By Claim~\ref{claim:bijection}, for a sufficiently small $\varepsilon>0$,
no such binary equality increases in cost, hence neither does $z_j$.
\maybeqed{} \end{proof}

Thus, for every constraint $f_j$ there is some value $\varepsilon>0$
such that $\phi'(\varepsilon)$ does not incur a larger cost for $f_j$ than
$\phi^*$. Since this is a finite number of bounds, taking the minimum
still yields some $\varepsilon>0$ and the proof finishes. 
\maybeqed{} \end{proof}

This implies that there is some LP-optimum $\phi^*$ such
that computing $X$ from $\phi^*$ yields an empty set.
(This follows by, e.g., considering that optimum
$\phi^*$ which maximises $\sum_{i,d} v_{i,d}$.)
In such an LP-optimum $\phi^*$, every variable $v_{i,d}$ with $v_{i,d}<\half$ is tight, 
and hence every variable $v_{i,d}$ is tight (by consider a corresponding variable $v_{i,d'}\leq \half$),
i.e., $\phi^*$
is a standard assignment. We proceed to show that there is a half-integral optimum.

\begin{lemma}
\label{lemma:pinchspred}
Let $\phi^*$ be an optimum which is a standard assignment. 
Let $X^+=\{v_{i,d}: 1>\phi^*(v_{i,d})>\half\}$
and $X^-=\{v_{i,d}: 0<\phi^*(v_{i,d})<\half\}$.
For some sufficiently small $\varepsilon>0$,
we have that $\phi^*+\varepsilon(X^+-X^-)$
and $\phi^*-\varepsilon(X^+-X^-)$ are both optimal assignments. 
\end{lemma}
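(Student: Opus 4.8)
The plan is to exploit the rigid shape forced on a standard assignment, verify that the two perturbed points stay feasible, and then deduce their optimality from convexity of the objective rather than from a constraint-by-constraint check.

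\emph{Shape of a standard assignment.} Fix a CSP-variable $v_i$ and write $f_i=\max_{d\in[k]}\phi^*(v_{i,d})$ for its frequency and $m_i$ for its mode. I first claim that $\phi^*(v_{i,d})=1-f_i$ for every $d\neq m_i$ and that $f_i\in[\half,1]$. Since $v_{i,m_i}$ is tight there is $d'\neq m_i$ with $\phi^*(v_{i,m_i})+\phi^*(v_{i,d'})=1$, and $\phi^*(v_{i,d'})\le f_i$ forces $1-f_i\le f_i$, i.e., $f_i\ge\half$. For an arbitrary $d\neq m_i$, the constraint $v_{i,d}+v_{i,m_i}\le1$ gives $\phi^*(v_{i,d})\le1-f_i$, while tightness of $v_{i,d}$ gives $\phi^*(v_{i,d})=1-\phi^*(v_{i,d''})\ge1-f_i$ for some $d''$; hence equality. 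So, up to relabelling, each block equals $(f_i,1-f_i,\dots,1-f_i)$. Consequently $X^+$ is exactly the set of mode coordinates $v_{i,m_i}$ of the blocks with $f_i\in(\half,1)$, and $X^-$ is exactly the set of the remaining $k-1$ coordinates of those same ``active'' blocks; every block with $f_i=\half$ or $f_i=1$ — in particular every block touched by a fixed constraint $v_{i,d}=1$, $(i,d)\in A$ — contributes nothing to $X^+\cup X^-$ and is left unchanged by both perturbations.

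\emph{Feasibility.} Write $\phi^{\pm}=\phi^*\pm\varepsilon(X^+-X^-)$. The only LP constraints involving the $v$-variables are the fixed assignments (untouched), nonnegativity, and the within-block inequalities $v_{i,a}+v_{i,b}\le1$. Inside an active block the mode moves to $f_i\pm\varepsilon$ and each non-mode coordinate to $(1-f_i)\mp\varepsilon$: a mode/non-mode pair still sums to exactly $1$, and a non-mode/non-mode pair sums to $2(1-f_i)\mp2\varepsilon$, which is $<1$ because $2(1-f_i)<1$ strictly (as $f_i>\half$); nonnegativity and membership in $[0,1]$ hold whenever $\varepsilon$ is below the positive threshold $\min_i\min(f_i-\half,\,1-f_i)$ over the active blocks. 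Thus for all sufficiently small $\varepsilon>0$ the $v$-parts of $\phi^{\pm}$ are feasible, and we readjust each $z_j$ to its smallest feasible value.

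\emph{Optimality.} For a feasible assignment $\psi$ of the $v$-variables, let $c(\psi)$ denote the minimum of $\sum_j z_j$ over all $z$ feasible together with $\psi$; explicitly $c(\psi)=\sum_j\max\bigl(0,\max_\ell g_{j,\ell}(\psi)\bigr)$, where the $g_{j,\ell}$ are the affine lower bounds on $z_j$ supplied by (\ref{zj:unary})--(\ref{zj:broadeq}). Being a sum of pointwise maxima of affine functions, $c$ is convex on the (convex) polytope of feasible $v$-assignments, and the LP optimum $\mathrm{OPT}$ equals $\min_\psi c(\psi)$, attained by the $v$-part $\psi^*$ of $\phi^*$. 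Writing $\psi^{\pm}$ for the $v$-parts of $\phi^{\pm}$, we have $\psi^*=\tfrac12\psi^++\tfrac12\psi^-$, so convexity gives $\mathrm{OPT}=c(\psi^*)\le\tfrac12 c(\psi^+)+\tfrac12 c(\psi^-)$; since $c(\psi^+),c(\psi^-)\ge\mathrm{OPT}$, both equal $\mathrm{OPT}$. Hence $\phi^+$ and $\phi^-$, with their $z_j$ readjusted as above, are both optimal LP assignments.

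The main obstacle is the first step: establishing that a standard assignment forces every block into the rigid form $(f_i,1-f_i,\dots,1-f_i)$ with $f_i\ge\half$. Once that is in place, feasibility is the one-line sum check above and optimality follows immediately from convexity, so no delicate per-constraint slope analysis (as in Lemma~\ref{lemma:lp-standard-assignment}) is needed here.
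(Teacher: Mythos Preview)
Your proof is correct and takes a genuinely different route from the paper.

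The paper argues constraint-by-constraint: for each of the four constraint types it verifies (via separate claims) that the readjusted value of $z_j$ is a locally linear function of the perturbation parameter $\xi$ on some interval $|\xi|\le\varepsilon$; since the total cost is then linear near $\xi=0$ and $\phi^*$ is optimal, the slope must vanish and both perturbed points are optimal. Your argument bypasses all of this case analysis. You first make explicit the rigid block structure $(f_i,1-f_i,\dots,1-f_i)$ forced by the definition of a standard assignment (the paper uses this structure implicitly in its claims but never states it), use it to check feasibility of $\psi^{\pm}$ in one line, and then observe that the projected objective $c(\psi)=\sum_j\max(0,\max_\ell g_{j,\ell}(\psi))$ is convex and that $\psi^*$ is the midpoint of $\psi^{\pm}$; optimality of both then drops out immediately.

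Your approach is shorter and more conceptual, and it makes clear that nothing about the \emph{specific} constraint families (\ref{zj:unary})--(\ref{zj:broadeq}) is needed here beyond the fact that each $z_j$ is bounded below by finitely many affine functions of the $v$-variables --- the argument would work unchanged for any LP of this shape. The paper's per-constraint analysis, while longer, does establish the slightly stronger fact that each individual $z_j$ is locally linear (not merely that the sum is locally constant), but this extra information is not used anywhere else.
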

\begin{proof}
It is clear that both suggested assignments are feasible and standard
for some $\varepsilon>0$.
Let $\phi'(\xi)=\phi^*+\xi (X^+-X^-)$; we will verify that there is some 
$\varepsilon>0$ such that for every constraint $f_i$, the cost of $f_i$ is
a linear function in $\xi$ for $|\xi|\leq \varepsilon$.
Since $\phi^*$ is an optimal assignment, this must imply that all these
linear cost functions cancel and the cost is invariant under $\xi$. 
We again proceed by type of constraint.

\begin{claim}
\label{claim2:unary}
Let $f_j$ be a unary cost function on a variable $v_i$ in the CSP.
For a sufficiently small $\varepsilon>0$, the value of $z_j$ 
is locally linear in $\xi$. 
\end{claim}
\begin{proof}
Let $v_i$ be the involved variable, and let $d$ be the mode of $v_i$.
We assume that $v_i$ is not already half-integral (since then, $v_i$ would
be kept constant). 
Let $d_1, d_2$ be the two minimising values, as before. If $f(d)>f(d_2)$,
then the equation for value $d$ is the sole maximising equation for $z_j$,
which is thus locally linear. If $d=d_1$, then the maximising equations
are for values $d_1$ and any $d'$ such that $f_j(d')=f_j(d_2)$. 
If the former instantiation of equation~(\ref{zj:unary}) has slope $\alpha$, 
then all latter instantiations have
slope $-\alpha$, thus modification by $\xi$ is locally linear. 
Otherwise, $d$ and $d_1$ are the unique maximising equations, and again
the slopes are each others' opposites, making $\xi$ locally linear. 
This finishes the claim.
\maybeqed{} \end{proof}

\begin{claim}
\label{claim2:bijection}
Let $f_j$ be the soft version of the constraint $(v_p = \pi(v_q))$. 
For a sufficiently small $\varepsilon>0$, the value of $z_j$ 
is locally linear in $\xi$. 
\end{claim}
\begin{proof}
Let $a$ be the mode of $v_p$ and $b$ be the mode of $v_q$. 
Observe that the cost of $z_j$ equals $|v_{p,a}-v_{q,b}|$ if $a=\pi(b)$, 
otherwise $v_{q,b}-v_{p,\pi(b)} =
(1-v_{q,\pi^{-1}(a)})-(1-v_{p,a})=v_{p,a}-v_{q,\pi^{-1}(a)} = z_j$,
and the latter holds for any standard assignments to $v_p$ and $v_q$. 
If one variable, say $v_q$, is already half-integral, then this yields a
linear function (in particular as the absolute value in the first case is
non-zero, given that $v_q$ is half-integral but $v_p$ not). 
If both variables are fractional, the first case applies, and
$v_{p,a}=v_{q,b}$, then observe that $v_{p,a}$ and $v_{q,b}$ are modified
identically by $\xi$. Finally, in any other case $z_j$ is determined by a
locally linear function of the involved variables $v_{p,d}, v_{q,d}$. 
\maybeqed{} \end{proof}

\begin{claim}
\label{claim2:lor}
Let $f_j$ be the soft version of the constraint $(v_p=a \lor v_q=b)$. 
For a sufficiently small $\varepsilon>0$, the value of $z_j$ is locally
linear in $\xi$. 
\end{claim}
\begin{proof}
Let $z=1-v_{p,a}-v_{q,b}$. If $z>0$, then $z_j=z$ and $z_j$ is determined
solely by this equation. If $z<0$, then $z_j=0$ up to some local
adjustment $\xi$. Finally, if $z=0$, either $v_p$ and $v_q$ are both
half-integral, and $z_j$ is constant in $\xi$, or $v_p$ and $v_q$ are
adjusted by $\xi$ in opposite directions, again leaving $z_j$ constant. 
\maybeqed{} \end{proof}

\begin{claim}
\label{claim2:broadeq}
Let $f_j$ be the soft equality $(v_{i_1}=\ldots=v_{i_r})$, for
some $i_1, \ldots, i_r \in [n]$. For a sufficiently small $\varepsilon>0$,
the value of $z_j$ is locally linear in $\xi$. 
\end{claim}
\begin{proof}
W.l.o.g., let us use $i_t=t$ for each $t \in [r]$. Observe that for every
variable $v_p, p \in [r]$, with mode $d$, the cost of the pair $(v_p,v_q)$
equals $|v_{p,d}-v_{q,d}|$ for every other variable $v_q, q \in [r]$.

Let $v_1$ be a variable among the set which maximises the frequency (i.e.,
if any variable is integral, then $v_1$ is integral). Let $a$ be the mode
of $v_1$. Let $v_r$ be a variable which minimises $v_{p,a}$, thus
$(v_1,v_r)$ maximises the cost of $z_j$. 

If $v_{r,a} \geq \half$, then observe that no variable $v_p$ for $p\in[r]$
has a mode other than $a$, hence the tight pairs are exactly pairs
$(v_p,v_q)$ where $v_{p,a}=v_{1,a}$ and $v_{q,a}=v_{r,a}$. If $v_1$ is
integral and $v_r$ half-integral, then this cost is unaffected by $\xi$;
if $v_1$ is integral but $v_r$ is not half-integral or vice versa, 
then the cost is a linear function of $\xi$; and if neither case occurs,
then for every pair of LP variables $(v_{p,a},v_{q,a})$, the pair are
adjusted equally by $\xi$ and $z_j$ is constant. 
This finishes the case $v_{r,a}\geq \half$.

Thus assume that $v_{r,a}<\half$, and let $b$ be the mode of $v_r$. 
If $v_{r,b}=v_{1,a}$, then edges which maximise $z_j$ go only between
variables of this frequency; either this frequency is $1$, in which case
we have contradictory integral assignments and $z_j=1$ independent of
$\xi$, or $\xi$ modifies all these maximal frequencies identically, thus
the situation is preserved by the modification and the cost is modified
linearly in $\xi$. 

Otherwise, let $U$ be all variables $v_i$ such that $v_{i,a}=v_{1,a}$, 
and let $W$ be all variables $v_i$ such that $v_{i,a}=v_{r,a}$. 
The pairs $(v_p,v_q)$ which maximise $z_j$ are exactly those where $v_p\in
U$ and $v_q \in W$, furthermore, the cost of such an edge is exactly
$v_{p,a}-v_{q,a}$ (by the initial observation). Furthermore, this
situation is preserved by some local variation of $\xi$; our conditions
are $v_{1,a} > \half > v_{r,a}$ and $v_{1,a} > v_{r,b}$, both of which are
stable for some range of $\xi$. Finally we observe that all costs
$v_{p,a}-v_{q,a}$ in fact equal $v_{1,a}-v_{r,a}$ also after modification
by $\xi$, hence $z_j$ is locally linear. 
\maybeqed{} \end{proof} 

Since every constraint $f_j$ is found to have locally linear cost while  
$|\xi| \leq \varepsilon$ for some $\varepsilon>0$, and since there is a
finite number of constraints, there is some $\varepsilon>0$ such that
$|\xi| \leq \varepsilon$ implies that every constraint $f_j$ varies
linearly with $\xi$. By optimality of $\phi^*$, the total cost must thus
be locally constant. 
\maybeqed{} \end{proof}

We can now finish our result.

\begin{theorem}
\label{thm:newlp-halfint}
The above LP has a half-integral optimum, which can be found in polynomial
time, and which corresponds directly to an optimal assignment for the
original CSP. 
\end{theorem}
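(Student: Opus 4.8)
The plan is to build on the two exchange lemmas already established and then read off an optimal relaxed assignment. First, following the discussion after Lemma~\ref{lemma:lp-standard-assignment}, I would start from an LP-optimum $\phi^*$ that is a \emph{standard assignment}: solve the LP, and among its optima take one maximising $\sum_{i,d} v_{i,d}$. Then the set $X$ of Lemma~\ref{lemma:lp-standard-assignment} is empty, so every $v_{i,d}$ with value below $\half$ is tight, and considering for each $v_{i,d}\ge\half$ a partner coordinate $v_{i,d'}\le\half$, every $v_{i,d}$ is tight, i.e.\ $\phi^*$ is standard. In such a $\phi^*$ each vector $v_i^*$ is described by a mode $d_i$ and a frequency $f_i\in[\half,1]$ with $v_{i,d_i}=f_i$ and $v_{i,d}=1-f_i$ for $d\neq d_i$ (forced by tightness when $k\ge3$, immediate when $k=2$).

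Second, I would push $\phi^*$ to a \emph{half-integral} standard assignment --- one with every $f_i\in\{\half,1\}$ --- by iterating the perturbation of Lemma~\ref{lemma:pinchspred}. While some $f_i\in(\half,1)$, form $X^+$ and $X^-$ as in that lemma (the mode coordinates of the fractional variables and, respectively, their non-mode coordinates) and move in direction $-(X^+-X^-)$; this drives every fractional frequency towards $\half$ while leaving integral and frequency-$\half$ variables fixed and preserving feasibility and standardness. Lemma~\ref{lemma:pinchspred} shows every $z_j$ is locally linear, hence the total cost is locally constant; since no structural change (no coordinate reaching $0$ or $1$, no frequency crossing $\half$) happens before step length $\varepsilon^*=\min\{\,f_i-\half : f_i\in(\half,1)\,\}>0$, the cost stays constant along the whole segment, and at its end at least one further variable has become half-integral. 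After at most $n$ such steps (each, like the initial LP, computable in polynomial time) we reach a half-integral optimum in which every $v_{i,d}\in\{0,\half,1\}$.

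Third, I would verify that this half-integral optimum encodes an optimal relaxed assignment. Map $\phi^*$ to $\psi:V\to\{0,\dots,k\}$ by $\psi(v_i)=d_i$ if $f_i=1$ and $\psi(v_i)=0$ if $f_i=\half$; this respects the forced assignments in $A$. The key point is a case check, over the four families of constraints of Lemma~\ref{lemma:usefulksubmod}, that at a half-integral standard assignment the smallest feasible value of $z_j$ --- the value it takes at the optimum --- is exactly $f'_j(\psi)$ for the relaxation $f'_j$ of Lemma~\ref{lemma:usefulksubmod}: for~(\ref{zj:unary}) one uses $f_j(d_1)+f_j(d_2)\le f_j(d)+f_j(d')$ to see that the mode inequality is the binding one; for~(\ref{zj:bijection}) and~(\ref{zj:broadeq}) one just evaluates the possible values $0,\half,1$ against the relaxations of the bijection and soft-wide-equality functions (the latter splitting further according to whether the integral members share a mode and whether a frequency-$\half$ member is present); and~(\ref{zj:lor}) is immediate since its right-hand side has no positive coefficients. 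Summing over $j$ gives LP-optimum $=\sum_j z_j = f'_{I'}(\psi)\ge \mathrm{OPT}(I')$, while conversely any $\psi':V\to\{0,\dots,k\}$ respecting $A$ yields a feasible LP-point of cost $f'_{I'}(\psi')$ (set the $v$'s as above and each $z_j:=f'_j(\psi')$, feasible by the same case analysis), so the LP-optimum is at most $\mathrm{OPT}(I')$. Hence the two are equal and $\psi$ is an optimal assignment of the relaxed instance $I'$, equivalently an integral optimum of the basic LP of $\cF'$.

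I expect the main obstacle to be the case analysis of the third step, in particular the soft-wide-equality sub-cases and keeping track of which inequality is binding in~(\ref{zj:unary}) once $f_j$ has been shifted so that $f_j(0)=0$; a secondary subtlety is that ``an optimal assignment for the original CSP'' should be read here as an optimal assignment of the relaxed instance, with half-integrality of the $v_{i,d}$ a conclusion rather than a hypothesis. Steps~1 and~2 are largely packaged by Lemmas~\ref{lemma:lp-standard-assignment} and~\ref{lemma:pinchspred}, so the only new content there is the ``take an extreme optimum, then iterate the pinch'' scaffolding, which should go through routinely.
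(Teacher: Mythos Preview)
Your proposal is correct and follows the same route as the paper: obtain a standard optimum via Lemma~\ref{lemma:lp-standard-assignment} and the secondary objective $\sum_{i,d} v_{i,d}$, use Lemma~\ref{lemma:pinchspred} to get half-integrality, and then check constraint-by-constraint that the $z_j$'s at a half-integral point match the relaxations of Lemma~\ref{lemma:usefulksubmod}. The only difference is how half-integrality is extracted from Lemma~\ref{lemma:pinchspred}: the paper observes that for the \emph{same} secondary-maximising optimum one must already have $X^+=X^-=\emptyset$ (since for $k>2$ the perturbation $-\varepsilon(X^+-X^-)$ strictly increases $\sum_{i,d} v_{i,d}$, contradicting extremality), whereas you instead iterate the pinch toward $\half$ explicitly, using that at every point of the segment $[0,\varepsilon^*]$ the sets $X^+,X^-$ are unchanged so the local constancy from Lemma~\ref{lemma:pinchspred} propagates along the whole segment. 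Your version is a bit more laborious but has the minor advantage of treating $k=2$ uniformly, and your reading of ``optimal assignment for the original CSP'' as the relaxed instance $I'$ is the intended one.
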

\begin{proof}
Let $x^*$ be the optimal value of the above LP, and let $\phi^*$ be an
assignment which achieves this cost, and subject to this maximises
$\sum_{i=1}^n \sum_{d=1}^k v_{i,d}$. Then $\phi^*$ must be a standard
assignment by Lemma~\ref{lemma:lp-standard-assignment}.
Furthermore, we must have $X^+=X^-=\emptyset$ as computed in
Lemma~\ref{lemma:pinchspred}: otherwise, by Lemma~\ref{lemma:pinchspred}
some ``local adjustment'' $\xi$ is possible, but for $k>2$ such an
adjustment would strictly increase the optimum of the secondary LP.  
Thus $\phi^*$ is a standard assignment with $X^+=X^-=\emptyset$, i.e.,
half-integral. 

For the last part, simply verify for each of the four constraint types
that the cost $z_j$ when evaluated at a half-integral point equals exactly
that of the $k$-submodular relaxations given in
Lemma~\ref{lemma:usefulksubmod}. 
\maybeqed{} \end{proof}

\subsection{The parameterized complexity of Unique Label Cover} 
\label{section:ulc}

We now focus specifically on consequences for the problem 
\textsc{Unique Label Cover}. This is the defining problem of the
Unique Games Conjecture~\cite{Khot02}, which is of central importance to
the theory of approximation. In our terms, \textsc{Unique Label Cover} 
corresponds to the problem VCSP$(\cF)$ where $\cF$ contains the soft versions of
all constraints $(x=\pi(y))$ for bijections $\pi$ on a domain $D=[k]$ for some
$k$. 
In the below, we will consider both edge- and vertex-deletion versions of
the problem; we will let $\Sigma$ denote the label set of an instance
(corresponding to the domain $D$), and $p$ the minimum instance cost
(i.e., the minimum number of edges resp.\ vertices one needs to delete to
get a satisfiable remaining instance). Observe that there is a simple
reduction from the edge-deletion version to the vertex-deletion version. 
The problem was previously considered from an FPT perspective by Chitnis~et~al.~\cite{ChitnisCHPP12}, who provided an FPT algorithm in the two
parameters $|\Sigma|,p$, with a running time of $\Oh^*(|\Sigma|^{O(p^2 \log
  p)})$, using highly advanced algorithmic methods. We observe that we can
improve the running time.

\begin{corollary} 
\label{cor:ulc-alg}
\textsc{Unique Label Cover} is FPT, both in edge- and vertex-deletion
variants, with a running time of $\Oh^*(|\Sigma|^{2p})$, where $\Sigma$ is
the label set of the instance and $p$ is its cost (i.e., the minimum
number of non-satisfied edges resp.\ vertices). 
\end{corollary}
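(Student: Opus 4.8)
The plan is to combine the half-integral LP of Theorem~\ref{thm:newlp-halfint} with the FPT branching scheme of Lemma~\ref{lm:relaxfpt}. First I would handle the edge-deletion version directly: an instance of edge-deletion \textsc{Unique Label Cover} over label set $\Sigma$ is precisely an instance of VCSP$(\cF)$ where $\cF$ consists of the soft versions of constraints $(x=\pi(y))$ for bijections $\pi$ on $D=\Sigma$, together with the hard constants $f_d$ needed by Lemma~\ref{lm:relaxfpt} (available by Lemma~\ref{lemma:ksubmodpers}). By Lemma~\ref{lemma:usefulksubmod}, these constraints have $k$-submodular $2$-relaxations with $k=|\Sigma|$, and by Lemma~\ref{lemma:ksubmodpers} the $k$-submodular relaxation is persistent and contains all hard constants. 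Hence Lemma~\ref{lm:relaxfpt} applies with $c=2$ and $|D|=|\Sigma|$, giving an algorithm making $O^*(|\Sigma|^{2p})$ calls to a solver for the relaxed problem, where $p$ is the relaxation gap; since every relaxation function attains its minimum~$0$ on integral inputs, the relaxation gap equals the deletion cost. The one point that needs care is that the black-box solver for VCSP$(\cF')$ must run in polynomial time — and this is exactly what Theorem~\ref{thm:newlp-halfint} provides, since the new LP is of polynomial size (all the constraints in Lemma~\ref{lemma:usefulksubmod} used here have bounded arity), has a half-integral optimum computable in polynomial time, and its optimum value coincides with the $k$-submodular relaxation optimum. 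So each branching step reduces to one LP solve plus polynomial bookkeeping.

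Next I would treat the vertex-deletion version. As the excerpt notes, there is a simple reduction from edge-deletion to vertex-deletion, but here we need the converse direction, so I would instead invoke the gadget described in Section~\ref{section:vdel}: replace each vertex $v$ occurring in $d$ constraints by copies $v(1),\dots,v(d)$, give each bijection constraint a large (crisp) weight, and add a soft wide equality $(v(1)=\dots=v(d))$ of weight~$1$. The wide-equality constraint has a $k$-submodular relaxation by case~4 of Lemma~\ref{lemma:usefulksubmod}, again with scaling factor~$2$, and it is included as one of the supported constraint types of the LP in Section~\ref{section:betterlp} (constraint~\eqref{zj:broadeq}). So the resulting VCSP is still of the form handled by Theorem~\ref{thm:newlp-halfint}: the LP is polynomial-sized (the wide equalities contribute polynomially many inequalities) and half-integral. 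The branching of Lemma~\ref{lm:relaxfpt} then runs on the enlarged variable set; but since we branch on CSP variables — here the domain is still $\Sigma$ — and each branch reduces the gap by at least $1/2$, the branching tree again has arity $|\Sigma|$ and depth $2p$, where $p$ is the number of deleted vertices (which again equals the relaxation gap, because each undeleted vertex contributes~$0$ to the wide-equality cost). This gives $O^*(|\Sigma|^{2p})$ for the vertex-deletion variant as well.

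Finally I would observe that the whole procedure is deterministic, in contrast to the probabilistic algorithm of Chitnis~et~al.~\cite{ChitnisCHPP12}, since solving the half-integral LP and performing the persistence-guided branching are both deterministic polynomial-time operations. The main obstacle — and the place where one must be slightly careful — is the bookkeeping that the relaxation gap equals the stated cost~$p$: one has to note that in both variants every relaxed cost function attains value~$0$ on every integral (original-domain) tuple satisfying the corresponding crisp constraint, so $\mathrm{OPT}(I')=0$ on a satisfiable instance and more generally $\mathrm{OPT}(I')$ equals zero while $\mathrm{OPT}(I)$ counts exactly the unsatisfied edges (resp.\ the deleted vertices); thus $k=\mathrm{OPT}(I)-\mathrm{OPT}(I')=p$. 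There is also a minor subtlety in the vertex-deletion reduction that the ``crisp'' bijection constraints must be enforced by sufficiently large finite weights so that they behave as unbreakable within the relaxed problem; this is routine, exactly as in the proof of Corollary~\ref{cor:vc-a2sat}. With these points checked, the corollary follows immediately.
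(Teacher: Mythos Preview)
Your proposal is correct and follows essentially the same route as the paper: invoke Lemma~\ref{lm:relaxfpt} with the $k$-submodular $2$-relaxations of Lemma~\ref{lemma:usefulksubmod} (bijections for edge deletion, plus the Section~\ref{section:vdel} gadget with soft wide equalities for vertex deletion), using the LP of Theorem~\ref{thm:newlp-halfint} as the polynomial-time solver. The only cosmetic difference is that for the edge-deletion case the paper cites Corollary~\ref{cor:ksubmod:direct} (hence the Basic LP) rather than Theorem~\ref{thm:newlp-halfint}, but since the bijection constraints are binary either LP works equally well.
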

\begin{proof}
For the edge deletion case, the result follows directly from the basic LP
relaxation (e.g., invoking Corollary~\ref{cor:ksubmod:direct} using
constraint set $\cF$ as above and relaxations given by
Lemma~\ref{lemma:usefulksubmod}). 

For vertex deletion, we follow the outline sketched in Section~\ref{section:vdel}.
For every edge-constraint in the input, we
create $2p+1$ copies of the corresponding soft constraint, to make it too
costly to break. For every vertex $v \in V$, we split $v$ into $t:=d(v)$
copies $v(1), \ldots, v(t)$, and place one such copy in every edge $uv$
involving the vertex $v$ (and hence in all $2p+1$ valued constraints
stemming from the edge). Finally, we introduce a soft equality constraint
$(v(1)=\ldots=v(t))$, which can be broken at cost 1 with a net effect
equivalent to that of deleting $v$.

To solve this problem, we can then invoke the generic result of
Lemma~\ref{lm:relaxfpt}, using the $k$-submodular relaxations of
Lemma~\ref{lemma:usefulksubmod} and the LP-formulation given in
Section~\ref{section:betterlp} (due to Theorem~\ref{thm:newlp-halfint}). 
\maybeqed{} \end{proof}

Chitnis~et~al.~\cite{ChitnisCHPP12} showed that the problem is W[1]-hard,
even in the edge-deletion version, when parameterized by $p$ alone 
(when $|\Sigma|$ occurs in the input) by a reduction from \textsc{$k$-Clique}.
This implies a conditional lower bound on the running time via the ETH-hardness of 
\textsc{$k$-Clique} (see~\cite{ChenCFHJKX05,ChenHKX06}); however, 
despite the above improvement, the upper and lower bounds still do not meet. 
We leave it as an open question whether a running time like $O^*(c^p|\Sigma|^{o(p)})$ 
would contradict ETH.

Finally, we observe that the improved branching used in Section~\ref{section:gfvs}
for \textsc{Group Feedback Vertex Set} partially applies here, implying a
running time bound of $O(4^p|\Sigma|^c)$, where $c$ is the number of
connected components after OPT has been removed. (In particular, for
the edge-deletion version we may slightly refine this to $2^{2p-c}|\Sigma|^c)$, 
and observe $c \leq p+1$, assuming that $G$ is connected.) 

\section{Group Feedback Vertex Set}
\label{section:gfvs}

We now consider the application of the above techniques to the problem of
\textsc{Group Feedback Vertex Set}. We first review a few notions
(essentially following Guillemot~\cite{Guillemot11} and Cygan et
al.~\cite{CyganPP12}). 
Let $\Gamma$ be a finite group with identity element $1_{\Gamma}$. 
A \emph{$\Gamma$-labelled graph} is a graph $G=(V,E)$ with a labelling
$\lambda: E \rightarrow \Gamma$ such that
$\lambda(u,v)\lambda(v,u)=1_{\Gamma}$ for every edge $uv \in E$.
A \emph{consistent labelling} for a $\Gamma$-labelled graph $G$ is a
labelling $\phi: V \rightarrow \Gamma$ such that for every $uv \in E$,
$\phi(u)\lambda(u,v)=\phi(v)$. 
We now define the problem.

\parameterizedproblem{Group Feedback Vertex Set}{A group $\Gamma$, a 
$\Gamma$-labelled graph $G=(V,E)$ with labelling $\lambda$, and an integer $k$.}%
{$k$}{Is there a set $X \subseteq V$ with $|X|\leq k$ such that $G
  \setminus X$ has a consistent labelling?}

For a path $P=v_1 \ldots v_r$, we let
$\lambda(P)=\lambda(v_1,v_2) \cdot \ldots \cdot \lambda(v_{r-1},v_r)$;
similarly, for a cycle $C=v_1v_2 \ldots v_rv_1$, we let
$\lambda(C)=\lambda(v_1,v_2) \cdot \ldots \cdot \lambda(v_r,v_1)$. 
We say that $C$ is \emph{non-null} if $\lambda(C) \neq 1_{\Gamma}$. 
An important aspect of the problem is the following ``dual'' view on
consistency. 

\begin{lemma}[\cite{Guillemot11}]
\label{lemma:guillemot}
A $\Gamma$-labelled graph $G$ has a consistent labelling if and only if it
contains no non-null cycles.
\end{lemma}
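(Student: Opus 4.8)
The plan is to prove the two directions separately: the ``only if'' direction is an immediate telescoping computation, and the ``if'' direction is a spanning-tree construction of the desired labelling.

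For the forward direction, suppose $\phi$ is a consistent labelling and let $C = v_1 v_2 \cdots v_r v_1$ be any cycle. Applying the defining identity $\phi(v_i)\lambda(v_i,v_{i+1}) = \phi(v_{i+1})$ successively along the vertices of $C$ (indices mod $r$) and keeping the factors in left-to-right order, I would obtain $\phi(v_1)\lambda(C) = \phi(v_1)$, hence $\lambda(C) = 1_{\Gamma}$, so $C$ is null. Contrapositively, a graph containing a non-null cycle admits no consistent labelling.

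For the backward direction, assume $G$ has no non-null cycle. Since a labelling can be built component by component, assume $G$ is connected; fix a spanning tree $T$ rooted at a vertex $r$, and for each $v$ let $P_v$ be the unique path from $r$ to $v$ in $T$. Set $\phi(r) = 1_{\Gamma}$ and $\phi(v) = \lambda(P_v)$. For a tree edge $uv$ with $u$ the parent of $v$, we have $\lambda(P_v) = \lambda(P_u)\lambda(u,v)$, i.e. $\phi(v) = \phi(u)\lambda(u,v)$, by construction. For a non-tree edge $uv$, the paths $P_u$ and $P_v$ together with the edge $uv$ form the fundamental cycle $C$, which is null by hypothesis; reading $C$ as $P_u$ from $r$ to $u$, then the edge $uv$, then $P_v$ reversed from $v$ back to $r$, and using $\lambda(v,u) = \lambda(u,v)^{-1}$ (so that the label of a reversed path is the inverse of the label of the path), the equation $\lambda(C) = 1_{\Gamma}$ becomes $\phi(u)\lambda(u,v)\phi(v)^{-1} = 1_{\Gamma}$, i.e. $\phi(u)\lambda(u,v) = \phi(v)$. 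The reverse orientation $\phi(v)\lambda(v,u) = \phi(u)$ follows by multiplying through on the right by $\lambda(v,u)$. Hence $\phi$ is consistent.

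The part that needs the most care is the non-abelian bookkeeping: I want to make sure that, as the fundamental-cycle product is read off, the inverses introduced by traversing $P_v$ backwards land in exactly the right positions so that $\lambda(C) = 1_{\Gamma}$ rearranges to the consistency equation in its correct left-to-right form, rather than to a conjugate or a reversal. For the same reason it is worth noting that whether a cycle is non-null is independent of the base vertex and the direction of traversal (changing the base conjugates $\lambda(C)$, reversing inverts it, and neither changes whether the product equals $1_{\Gamma}$). The spanning tree is what lets me sidestep the messier alternative of defining $\phi(v)$ via an arbitrary walk from $r$ to $v$, which would force a preliminary lemma that every closed walk has trivial label (an induction on walk length, splitting at a repeated vertex and using $\lambda(u,v)\lambda(v,u) = 1_{\Gamma}$ for back-and-forth steps); the spanning-tree version is cleaner and is the one I would write up.
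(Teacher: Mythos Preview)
The paper does not actually prove this lemma; it is stated with a citation to Guillemot and used as a black box. So there is no paper proof to compare against.

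Your proof is correct and is the standard argument. One small point of care: the closed walk $P_u \cdot (uv) \cdot \overline{P_v}$ you call ``the fundamental cycle'' is in general not a simple cycle, since $P_u$ and $P_v$ share the prefix up to the lowest common ancestor $w$ of $u$ and $v$. The actual fundamental (simple) cycle sits at $w$, and your closed walk's label is its conjugate by $\lambda(P_w)$. You clearly anticipated this, since you remark that changing the base vertex conjugates $\lambda(C)$ and hence preserves nullity; that observation is exactly what closes the gap. If you want the write-up to be airtight, either (i) state explicitly that the shared prefix cancels and the equation $\phi(u)\lambda(u,v)\phi(v)^{-1}=1_\Gamma$ reduces to nullity of the simple fundamental cycle based at $w$, or (ii) invoke your base-change remark at that point rather than only in the final paragraph.
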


Since the consistency condition simply needs to verify the bijections on
the edges, the \textsc{Group Feedback Vertex Set} problem is a special
case of \textsc{Unique Label Cover}, and is thus covered by the result of
Section~\ref{section:ulc}. However, it turns out we can do much better. 
The following will be the main conclusion of the current section. 

\begin{theorem}
\label{theorem:gfvs}
The \textsc{Group Feedback Vertex Set} problem can be solved in time
$O^*(4^k)$, even when the group $\Gamma$ is given via oracle access only. 
\end{theorem}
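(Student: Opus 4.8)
The plan is to bypass the machinery of Section~\ref{section:betterlp}: the basic LP (and its smaller cousin in Theorem~\ref{thm:newlp-halfint}) splits every vertex into one variable per group element, hence becomes exponential-sized as soon as $\Gamma$ is large or only given by an oracle. Instead, following Guillemot's dual characterisation (Lemma~\ref{lemma:guillemot}), a set $X\subseteq V$ is a solution of \textsc{Group Feedback Vertex Set} precisely when $X$ meets every non-null cycle of $G$ (equivalently every non-null closed walk, since such a walk always contains a non-null cycle). So I would build a \emph{covering-type} relaxation in the style of the Garg~et~al.\ distance LP for \textsc{Node Multiway Cut}~\cite{GargVY04}: one variable $x_v\ge 0$ per vertex, objective $\min\sum_v x_v$, and one constraint $\sum_{v\in C}x_v\ge 1$ for every non-null cycle $C$. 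For the vertex-deletion problem this is exactly the integer program we want, relaxed. It has polynomially many variables but exponentially many constraints, so it is solved by the ellipsoid method with a separation oracle.

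The separation oracle is what keeps us inside the oracle model for $\Gamma$. Given a candidate $x\ge 0$, we must decide whether some non-null cycle has $x(C)<1$. Put the weights on vertices (split each vertex into an in/out pair joined by an arc of weight $x_v$, all other arcs of weight $0$ and carrying the edge label $\lambda$), and for each source $s$ run a Dijkstra-type shortest-path search that maintains, for every settled vertex $u$, the group element $\phi(u)=\lambda(P_{s\to u})$ along its tree path. The first time the search would relax an arc $(u,u')$ with $u'$ already carrying a label $\phi(u')\neq\phi(u)\lambda(u,u')$, the closed walk $P_{s\to u}\cdot(u,u')\cdot P_{u'\to s}^{-1}$ is a minimum-weight non-null closed walk through $s$, from which a non-null cycle of no larger weight is extracted. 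This uses only polynomially many group multiplications, inversions and equality tests; $\Gamma$ is never enumerated. The same computation, run once on $G\setminus X$, also certifies consistency and produces a consistent labelling when we terminate.

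The heart of the proof — and the step I expect to be the main obstacle — is showing that this LP is \emph{half-integral} and \emph{persistent}: there is an optimum $x^*$ with $x^*_v\in\{0,\half,1\}$ for every $v$, and there is an optimal integral solution $X^*$ with $\{v:x^*_v=1\}\subseteq X^*$ and $\{v:x^*_v=0\}\cap X^*=\emptyset$. The route I would take is to identify a cover of the non-null cycles of $G$ with a ``column multiway cut'' in the (infinite) lifted graph $\widehat G$ with vertex set $V\times\Gamma$ and arcs $(v,g)\to(w,g\lambda(v,w))$, where a non-null cycle through $v$ corresponds to a $(v,g)$--$(v,g')$ path with $g\neq g'$ and where only whole columns $\{(v,\cdot)\}$ may be deleted. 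One then transports the uncrossing/flow argument and the persistence of the $1$-vertices from \textsc{Node Multiway Cut}~\cite{GargVY04,CyganPPW13MWC}, arguing everything at the level of the LP and its dual (a fractional packing of non-null cycles) rather than by running an algorithm on $\widehat G$; the non-abelian group and the coupling of the $|\Gamma|$ copies of each vertex are exactly what makes this delicate. An alternative, self-contained route is to redo the Garg~et~al.\ LP-uncrossing argument directly on the distance formulation, adapted to group labels.

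Granted half-integrality and persistence, the algorithm is the LP-branching scheme used for \textsc{Node Multiway Cut} in~\cite{CyganPPW13MWC} (in the spirit of Lemma~\ref{lm:relaxfpt}, but branching with factor $2$ instead of $|D|$ — which is precisely what working with the deletion-indicator LP, rather than the label-assignment basic LP, buys us, and which is why oracle-sized groups cause no blow-up). Compute a suitably extreme half-integral optimum $x^*$; if it is integral we are done by Lemma~\ref{lemma:guillemot}. Otherwise, by persistence we put every $1$-vertex into the solution and forbid deletion of every $0$-vertex, clean up, and on the remaining all-$\half$ instance pick any vertex $v$ and branch into ``$v$ deleted'' (add $x_v=1$) and ``$v$ undeletable'' (add $x_v=0$). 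By the persistence property the extreme optimum can be chosen so that each of these two constraints strictly increases the LP value, hence by half-integrality increases it by at least $\half$, so the measure $k-\mathrm{LP}$ (bounded by $k$ at the root) drops by at least $\half$ in each branch. With branching factor $2$ and depth at most $2k$ this gives the claimed $O^*(4^k)$ running time, each node costing only a polynomial number of group-oracle calls; restricted to explicitly given polynomial-size groups one could instead run this scheme on top of Theorem~\ref{thm:newlp-halfint}, but the separation-oracle LP is what makes the bound hold in full generality. This also improves the $O^*(|\Sigma|^{2p})$ bound obtainable from Corollary~\ref{cor:ulc-alg} when \textsc{Group Feedback Vertex Set} is viewed as \textsc{Unique Label Cover}.
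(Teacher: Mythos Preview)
Your proposal has a genuine gap: the non-null-cycle covering LP you write down is \emph{not} half-integral, so the uncrossing argument you anticipate cannot succeed. A clean counterexample is $K_4$ with group $\Gamma=\Z_2^6$, labelling each edge by its own basis vector; then every simple cycle is non-null, and your LP is exactly the \textsc{Feedback Vertex Set} cycle-cover LP on $K_4$. Summing the four triangle constraints gives $3\sum_v x_v\ge 4$, so $\sum_v x_v\ge 4/3$, and this is attained only at $x_v=1/3$ for all $v$. Hence the unique LP optimum is not half-integral, and no persistence statement of the form you need can hold. (More generally, FVS reduces to GFVS with an oracle group, and the FVS cycle-cover LP is well known not to be half-integral.)

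The paper does not use a cycle-cover LP. Instead it first reduces branching to a \emph{rooted} subproblem, \textsc{Half-integral GFVS with Assignments}: at all times there is a designated vertex $t$ with $\phi(t)=1_\Gamma$ enforced, and the branching of Lemma~\ref{lemma:gfvs:invoke} grows a connected component of integrally assigned vertices, always picking a half-deleted vertex adjacent to this component and branching on ``delete $v$'' versus ``assign $v$ its forced label''. The relaxation solved at each node is the \emph{double path system}: one variable $z_i$ per vertex, and a constraint $z(P_a)+z(P_b)+z(v)\ge 1$ for every pair of $t$--$v$ paths with $\lambda(P_a)\neq\lambda(P_b)$. Crucially (Lemma~\ref{lemma:double-path-lollipop}) the infeasibility witness is not $z(C)<1$ but $z(C)+2z(P_u)+z(u)<1$, where $P_u$ is a shortest $t$--$u$ path to a vertex $u$ on $C$; the extra ``distance from the root'' term is exactly what makes this LP equivalent to the reference LP of Section~\ref{section:betterlp} (Lemmas~\ref{lemma:double-path-necessary} and~\ref{lemma:double-path-sufficient}) and hence half-integral via Theorem~\ref{thm:newlp-halfint}. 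Your separation-oracle idea is close in spirit to Lemma~\ref{lemma:double-path-oracle}, but applied to the wrong LP; and your branching rule ``set $x_v=0$'' lacks the label information that the paper's rooted formulation supplies.
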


Previous work by Guillemot~\cite{Guillemot11} and by Cygan et
al.~\cite{CyganPP12} established that the problem is FPT, however, the
best achieved running time was $O^*(2^{O(k \log k)})$~\cite{CyganPP12}. 
We follow Cygan~et~al.~\cite{CyganPP12} in the definitions of the oracle
access model: we assume that we have access to an oracle which can
multiply two elements, invert an element, produce the identity element
$1_{\Gamma}$, and verify whether two elements are equal.

\subsection{An improved branching algorithm}
\label{section:improved}

We begin by describing the improved branching process that lies behind
Theorem~\ref{theorem:gfvs}. We assume that $\Gamma$ is given via
oracle access, e.g., we are dealing with VCSP$(\cF)$ for a humongous domain $\Gamma$. 
Let \textsc{GFVS with Assignments} for group $\Gamma$
denote 
\textsc{Group Feedback Vertex Set} enhanced with a requirement that
certain variables take certain values in the optimum.
Furthermore, let \textsc{Half-integral GFVS with
  Assignments} refer to the $k$-submodular 2-relaxation of this problem, 
as given by Lemma~\ref{lemma:usefulksubmod}.
In the following, we assume that each invocation of \textsc{Half-integral
  GFVS with Assignments} returns an optimal solution 
(rather than just a cost). 

\begin{lemma}
\label{lemma:gfvs:invoke}
\textsc{Group Feedback Vertex Set} can be solved via $\Oh^*(4^k)$
invocations of \textsc{Half-integral GFVS with Assignments}.
\end{lemma}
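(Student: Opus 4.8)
The plan is to run a branching recursion on instances of \textsc{GFVS with Assignments}, in the spirit of the generic branching of Lemma~\ref{lm:relaxfpt}, the essential trick being to replace its $|\Gamma|$-way branching step (hopeless when $\Gamma$ is huge or only oracle-given) by a $2$-way step. At each recursion node I call \textsc{Half-integral GFVS with Assignments} and take from it an optimal relaxed solution $\phi^*$ together with its cost. Since property~(ii) of a discrete relaxation makes every genuine GFVS solution a feasible relaxed solution of the same cost, $\mathrm{OPT}(I')\le\mathrm{OPT}(I)$, so if the relaxed optimum already exceeds the remaining budget I prune. If $\phi^*$ is integral (every variable gets a value in $\Gamma$), it is an optimal integral solution and we stop. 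Otherwise, by persistence of $k$-submodular relaxations (Lemma~\ref{lemma:ksubmodpers}), I permanently commit each variable that is integral in $\phi^*$ to its $\phi^*$-value, which leaves $\mathrm{OPT}$ unchanged; for the progress argument below I take $\phi^*$, among all relaxed optima, to be integral on a maximal set of variables (a Nemhauser--Trotter-style canonical choice, cf.~\cite{NemhauserT75}).

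What remains is a vertex $v$ that $\phi^*$ (partially) deletes, and here I use the rigidity of consistent labellings, Lemma~\ref{lemma:guillemot}: in any solution $G\setminus X$ with a consistent labelling $\phi$, if $v\notin X$ and $v$ has a committed neighbour $u$ then $\phi(v)=\phi(u)\lambda(u,v)$ is forced, and if two committed neighbours force different values for $v$ then $v\in X$ is forced (no branching needed). When no deleted vertex of $\phi^*$ touches the committed part, the deleted vertices span whole connected components; picking a vertex $r$ of such a component, twisting the labelling along a spanning tree lets us assume $\phi(r)=1_\Gamma$ in any solution not deleting $r$. Thus in every case the chosen vertex $v$ (resp.\ $r$) has exactly two options: it is deleted --- remove it, decrease the budget by $1$, recurse --- or it is kept and committed to its unique admissible value, and we recurse. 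Correctness (the two branches cover all solutions; the twist and the commitments never destroy an optimum) is immediate from Lemmas~\ref{lemma:guillemot} and~\ref{lemma:ksubmodpers}.

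For the running time I mimic the recursion-tree analysis of Lemma~\ref{lm:relaxfpt} with branching factor $2$ instead of $|\Gamma|$ and scaling factor $c=2$: it suffices to exhibit a nonnegative potential bounded by $2k$ and taking values in multiples of $\tfrac12$ that provably drops by at least $\tfrac12$ along each of the two branches, for then the tree has depth at most $2k$ and hence $O(4^k)$ nodes, one oracle call each. The natural candidate combines the number of deletions made so far on the branch with the current value of $\mathrm{OPT}(I')$: in the ``keep'' branch the deletion count is unchanged, but committing $v$ to its forced integral value strictly raises $\mathrm{OPT}(I')$, because the canonical choice of $\phi^*$ forbids $v$ from being ``undecided''; in the ``delete'' branch the deletion count rises by one, and re-inserting the deleted vertex into any relaxed solution of $G\setminus v$ --- with the soft-wide-equality copies of Section~\ref{section:vdel} set to agree with its neighbours --- bounds how far $\mathrm{OPT}(I')$ can have fallen. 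Turning these two statements into a clean $\tfrac12$-drop of a single potential is the technical crux, and the place I expect the real work: it is a claim about the fine structure of $k$-submodular (equivalently, half-integral) optima --- in the spirit of Lemma~\ref{lemma:ksubstructure} and of the Nemhauser--Trotter theorem --- that goes beyond the bare persistence of Lemma~\ref{lemma:ksubmodpers}, and choosing the potential and the canonical optimum so that both branches are guaranteed to make measurable progress is the one non-routine ingredient.
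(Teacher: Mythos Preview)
Your overall strategy matches the paper's: branch two ways (delete $v$ versus commit $v$ to its unique forced label), use persistence to propagate integral values, and invoke the group-shift observation when no half-deleted vertex is adjacent to the committed region. The correctness sketch is sound.

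The gap is exactly where you flag it, and your proposed fix does not quite close it. With the natural potential $k_0-|X|-\mathrm{OPT}(I')$, the ``keep'' branch is fine (your maximality argument gives a strict rise of $\mathrm{OPT}(I')$, hence a drop of at least $\tfrac12$), but in the ``delete'' branch the reinsertion argument only shows $\mathrm{OPT}(I')$ falls by at most $1$, not at most $\tfrac12$; so the potential can stay put. The paper does not try to repair this. Instead it observes that \emph{if} the delete branch leaves the budget unchanged, then reinserting $v$ into the new relaxed optimum must cost exactly $1$, which forces two distinct integral values among $v$'s copies---so the new optimum contains strictly more integral assignments than before. The algorithm therefore only \emph{branches} when both directions strictly decrease the budget; when one direction does not, it follows that direction deterministically (making progress in the form of new integral commitments, which is bounded polynomially) and re-solves. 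This yields a branching tree of depth at most $2k$ with polynomial work between branching nodes, giving the $O^*(4^k)$ bound. In short: you do not need a single potential that drops in both branches; you need to recognise that a non-dropping branch is not a branch at all.
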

\begin{proof}
The improvement is centred around the following observation.

\begin{claim}
Let $(G,\Gamma,\lambda,k)$ be an instance of \textsc{Group Feedback Vertex
  Set} (without assignments). Let $v \in V$ be an arbitrary vertex. Then
either $v$ is deleted by every optimal solution, or there is an optimal
solution with a consistent labelling $\phi$ where $\phi(v)=1_{\Gamma}$. 
\end{claim}
\begin{proof}
Let $X \subseteq V$ be an optimal solution with $v \notin X$, and let
$\phi$ be the corresponding consistent labelling. Then for any $\gamma \in
\Gamma$, $\phi'(u)=\phi(u) \cdot \gamma$ defines another consistent
labelling of the graph. In particular, we can choose
$\gamma=\phi(v)^{-1}$. 
\maybeqed{} \end{proof}

Thus, we initialise our algorithm by picking an arbitrary $v \in V$, and
branch on deleting $v$ (at cost 1) or fixing an assignment
$(v=1_{\Gamma})$ (at cost at least \half, assuming the input is not already
consistent). We will grow a connected component of integrally assigned
vertices, in each iteration selecting a new relaxed vertex $v$
neighbouring this component, and branch on whether $v$ is deleted or not.
Whenever we ``run out'' of such candidate vertices $v$, we simply restart
the process with a new arbitrary assignment.

Concretely, we do the following. As before, we split every vertex into
different variables $v(i)$ for all its edge occurrences, then replicate
each edge constraint $2k+1$ times to prevent edges from being broken. We
maintain a set $A$ of enforced assignments $(v(i)=d)$ and a set $X$ of
explicit deletions, both initially empty. We let $k_0 \gets k$ be our
initial budget bound. Our branching algorithm then proceeds as follows:
Let $\phi^*$ be an optimal solution for the \textsc{Half-integral GFVS
  with Assignments} instance corresponding to $G$, $X$ and $A$ (where $X$
is implemented by simply omitting the corresponding soft equality
constraints from the instance construction), and let $x^*$ be the cost of
$\phi^*$. Compute $k=k_0-|X|-x^*$; if $k_0<0$, reject. Add to $A$ any
integral assignments of $\phi^*$ not already contained in it, and add to
$X$ any variables $v$ such that $A$ contains $(v(i)=d)$ and $(v(j)=d')$
for some integral values $d \neq d'$. If there is a \emph{half-deleted}
vertex $v$ (i.e., a vertex such that the cost of its soft equality
constraint is $\half$), let $d$ be the non-zero value assigned to some
occurrence of $v$. Compute two new instances, one where assignments
$(v(i)=d)$ are added to $A$ for all occurrences of $v$, and one where $v$
is added to $X$. If either of these instances does not lead to a decreased
budget, then we claim that the corresponding solution must contain at
least one new integral assignment $u(i)=d$, $u \in V$. In the former case,
this is clear; in the latter case, observe that replacing $v$ from $X$
into the instance as a soft equality constraint yields a valid relaxed
solution, thus it must be that $v$ uses two distinct integral assignments
in the new relaxed optimum (note that $v(i)=d$ for some $i$ due to
assignments in $A$). Finally, if both new instances lead to a decrease in
$k$, branch accordingly in both directions.

The remaining case is that every vertex $v$ is either fully deleted or not
deleted at all in the current optimal relaxed assignment. But then, all
assigned vertices form connected components, whose every neighbour in the
original graph $G$ is contained in $X$. In other words, the remaining
graph $G \setminus X$ contains a connected component of entirely relaxed
vertices; we may then pick an arbitrary occurrence $v(i)$ of an unassigned
vertex $v$, and add $(v(i)=1)$ to $A$ (leading ultimately to a solution where
$v$ is either fully assigned or fully deleted).

Throughout, the correctness of our operations rely upon the persistence of
the relaxation \textsc{Half-integral GFVS with Assignments}. The branching
tree has a branching factor of $2$, and a depth of at most $2k$, and in
every node we make a polynomial number of calls to \textsc{Half-integral
  GFVS with Assignments}. 
\maybeqed{} \end{proof}

By the above, we get an $O^*(4^k)$-time algorithm for \textsc{Group
  Feedback Vertex Set} when the group $\Gamma$ is given explicitly, e.g., via invocation of Theorem~\ref{thm:newlp-halfint} of
Section~\ref{section:betterlp} to solve the \textsc{Half-integral GFVS with
  Assignments} subproblems.  
The case of oracle-access only to $\Gamma$ is handled next.

\subsection{Oracle-access groups}

Unlike in the last section, when $\Gamma$ is given only via oracle access
it could be that $\Gamma$ contains an exponential number of elements
(indeed, the simplest reduction from \textsc{Feedback Vertex Set} uses the
group $\Gamma=Z_2^m$). To handle this, we redesign the LP to not
keep track of vertices' explicit assignment, but only whether each vertex
has been deleted or not. We introduce one variable $z_i$ for each vertex
$v_i \in V$, and an exponential number of constraints (solved via a
separation oracle) as follows. By a simple reduction, assume that a
unique assignment $(t=1_{\Gamma})$ is required. A \emph{double path} ending in $v\in
V$ is a pair $(P_a, P_b)$ of paths from $t$ to $v$, such that 
$\lambda(P_a) \neq \lambda(P_b)$. Let $z(P)$ denote the sum of $z_i$ for
\emph{internal} vertices of a path $P$. Then the \emph{length} of
$(P_a,P_b)$ is defined as $z(P_a,P_b):=z(P_a)+z(P_b)+z(v)$ (note that
internal vertices common to both paths are counted twice). 
The length of a cycle $C$ is defined as $z(C)=\sum_{v_i\in C} z_i$. 
For simplicity, for a vertex $v=v_i$, we write $z(v)$ for $z_i$ 
(to avoid having to explicitly state all vertex indices). 
Our constraints will state that the length of every double path is at least $1$. 
Call the resulting constraints a \emph{double path system}. 
A set of weights $z_i$ under which every double path has length at least $1$
is said to be \emph{double-path-hitting}.
We will show that double path systems can be used to solve the 
\textsc{Half-Integral GFVS with Assignments} problem (half-integral GFVS, for short), 
even for groups with oracle access, which then combined with Lemma~\ref{lemma:gfvs:invoke} 
yields an FPT algorithm for GFVS. 


We now proceed with the proofs. We first show that vertex-deletion information 
is sufficient, then we show that the double path system actually provides
this information.

\begin{lemma}
\label{lemma:gfvs-del-is-enough}
Assume a solver for \textsc{Half-integral GFVS with Assignments} which
reveals the costs of the soft equality constraints of the instance, but no
more information. From this we can construct an optimal assignment.
\end{lemma}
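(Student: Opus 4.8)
The plan is to show that knowing only which vertices are half-deleted or fully deleted (i.e., the costs of the soft equality constraints), together with oracle access to $\Gamma$, suffices to reconstruct a genuine consistent labelling on the surviving part of the graph. First I would observe that the half-integral solution partitions $V$ into three classes: vertices $v$ whose soft equality has cost $0$ (``integral'', fully assigned to some non-zero value), vertices with cost $\half$ (``half-deleted''), and vertices with cost $1$ (``fully deleted''). By persistence (Lemma~\ref{lemma:ksubmodpers}) and the structure of the relaxation functions in Lemma~\ref{lemma:usefulksubmod}, the integral vertices are exactly those we do \emph{not} want to delete, while the cost-$1$ vertices are deleted; the half-deleted vertices are the ``boundary'' that the branching in Lemma~\ref{lemma:gfvs:invoke} will resolve, but for the purposes of \emph{this} lemma, we only need to exhibit a consistent labelling of the graph induced by the integral vertices together with a valid interpretation of the half-deleted ones.

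The key step is the reconstruction itself. Within the subgraph $G'$ induced by the non-fully-deleted vertices, consider the connected components of the integrally-assigned vertices. For each such component, pick a representative vertex and declare its label to be $1_\Gamma$; then propagate labels along edges using the group operation — for an edge $uv$ with $\phi(u)$ known, set $\phi(v) = \phi(u)\lambda(u,v)$. This is well-defined precisely when the component contains no non-null cycle, which holds because the half-integral solution has cost $0$ on every edge constraint inside it (edges were replicated $2k+1$ times to be crisp), so by Lemma~\ref{lemma:guillemot} the component is consistently labellable; the labels so produced are exactly (a translate of) the assignment the solver used internally, even though the solver did not report them. For a half-deleted vertex $v$, we know from the relaxation that $v$ has some non-zero value on at least one edge-occurrence; I would show that treating $v$ as effectively deleted (cost $1$) — i.e., removing it — still yields a feasible, no-more-expensive relaxed solution, which is what the outer branching step of Lemma~\ref{lemma:gfvs:invoke} needs. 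The point is that the cost information plus graph structure determines, up to component-wise left-translation by a group element, the entire assignment, and these translations are irrelevant to consistency.

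The main obstacle I anticipate is making rigorous the claim that the label-propagation inside an integral component is consistent \emph{and} that it reproduces the solver's own internal assignment (so that subsequently forcing these assignments via the set $A$ in Lemma~\ref{lemma:gfvs:invoke} does not change the optimal cost). For this I would argue as follows: the half-integral relaxation of the bijection constraint $(v_p = \pi(v_q))$ in Lemma~\ref{lemma:usefulksubmod} charges cost $0$ only when the two endpoints carry matching integral values or both carry $0$; since all edges are crisp, every edge inside an integral component is charged $0$, forcing the endpoints' integral values to be linked by the bijection $\lambda$, which is exactly the propagation rule. Global consistency on the component then follows from the absence of non-null cycles (Lemma~\ref{lemma:guillemot}), and any alternative consistent labelling differs only by a single left-translation per component (as in the claim inside Lemma~\ref{lemma:gfvs:invoke}), so the reconstructed labelling is a legitimate optimum. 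Handling the half-deleted vertices is comparatively routine: reinserting such a vertex as a soft-equality constraint cannot decrease cost, so treating it as deleted is sound, and this is all that the caller requires.
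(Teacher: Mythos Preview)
There are two genuine gaps. First, your reconstruction ignores the set $A$ of forced assignments, which is part of the instance of \textsc{Half-integral GFVS with Assignments}: these are hard constraints (enforced at prohibitive weight), so any optimal assignment must satisfy them exactly. If a component of cost-$0$ vertices contains a forced assignment $(u(i)=a)$, then the labelling of that component is completely determined by propagation from $u(i)$; picking an arbitrary representative and declaring it $1_\Gamma$ will in general violate $(u(i)=a)$ and hence not be optimal. The translation argument you invoke does not help here, because the forced assignments break the translation symmetry. The paper's proof handles this by starting the propagation \emph{from} $A$, not from arbitrary representatives.

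Second, your handling of half-deleted vertices is wrong for the stated goal. The lemma asks you to reconstruct an \emph{optimal} assignment to the relaxation, i.e., one whose total cost equals the solver's reported optimum. Rounding a cost-$\half$ soft equality up to cost $1$ strictly increases the objective; what you produce is feasible but not optimal, and the outer algorithm of Lemma~\ref{lemma:gfvs:invoke} relies on having a genuine optimum $\phi^*$. The paper's fix is to propagate into the boundary occurrence $v(i)$ of a half-deleted vertex (giving it the integral value dictated by its integral neighbour) but not \emph{through} $v$, and then to assign the relaxed value $0$ to every remaining occurrence and every unreached variable. This reproduces cost $\half$ on each half-deleted vertex and cost $0$ on every edge constraint, matching the solver's optimum exactly. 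Your proposal never uses the value $0$, which is precisely the tool that lets one realise the half-integral costs.
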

\begin{proof}
Clearly, we must satisfy all assignments from $A$. Furthermore, we may let
these assignments propagate through edge labels until we reach a vertex in
the support of the half-integral solution (i.e., partially or fully deleted). 
In this case, we fix the
assignment to the corresponding occurrence $v(i)$ of this vertex $v$, but
do not propagate further through $v$. If this leads to a contradictory
assignment (other than for fully deleted vertices), then the deletion
values did not encode a feasible assignment. Otherwise, after this process
terminates, we may safely assign every other variable the value $0$. 
\maybeqed{} \end{proof}

\subsubsection{Equivalence of the formulations}
To show that double path systems solve half-integral GFVS, 
we show that they are (in an appropriate sense) equivalent to the improved LP
formulations of Section~\ref{section:betterlp}; the existence of 
a half-integral optimum then follows from Theorem~\ref{thm:newlp-halfint}. 
Refer to the LP of Section~\ref{section:betterlp} as the \emph{reference LP}.

We first show that every half-integral optimum of the reference LP 
satisfies all constraints of the double path system.

\begin{lemma}
\label{lemma:double-path-necessary}
Let $\phi^*$ be a half-integral optimum to the reference LP corresponding
to an instance of \textsc{Half-integral GFVS with Assignments}, and let
$z_i$ be the weight in $\phi^*$ of the soft equality constraint for $v_i$,
for each $i \in [n]$. Then these values $z_i$ are double-path-hitting. 
Furthermore, every other soft constraint in the original LP has cost zero
under $\phi^*$. 
\end{lemma}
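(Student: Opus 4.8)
The plan is to argue directly from the structure of a half-integral optimum $\phi^*$ of the reference LP. Recall that in such an optimum every variable vector $v_i^*$ takes a half-integral standard assignment: either it is ``integral'' (some mode $d$ with $v_{i,d}=1$, all others $0$), or it is ``fully relaxed'' (mode $d$ with $v_{i,d}=\half$, all others $0$), where the soft-equality cost $z_i$ is respectively $1$ if two occurrences of $v_i$ disagree on an integral value, $\half$ if $v_i$ is fully relaxed, and $0$ if $v_i$ is consistently integrally assigned. Since all edge constraints are crisp (replicated $2k+1$ times), $\phi^*$ must in fact satisfy every edge constraint $(v_p=\pi(v_q))$ exactly, i.e.\ $z_j=0$ for all those $j$; this already gives the ``furthermore'' clause. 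So the first thing I would do is record that the only nonzero costs come from soft-equality constraints, and that the $z_i$ encode exactly which vertices are fully deleted ($z_i=1$), half-deleted ($z_i=\half$), or untouched ($z_i=0$).

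Next I would set up the combinatorial meaning of ``double-path-hitting.'' Think of the set $S=\{v_i : z_i>0\}$ of (partially) deleted vertices. The claim ``every double path $(P_a,P_b)$ ending at $v$ has length $z(P_a)+z(P_b)+z(v)\ge 1$'' should follow from: if a double path had length $<1$, then both $P_a$ and $P_b$ avoid all fully deleted vertices (internal vertices with $z_i=1$ would already contribute $1$), and together they contain at most one internal half-deleted vertex counted once, or the endpoint $v$ with $z(v)\le\half$, etc.\ — so after deleting the at-most-one partially-deleted vertex on these paths and setting $v$ to an arbitrary value, the consistency condition $\phi(u)\lambda(u,v)=\phi(v)$ propagated along $P_a$ and along $P_b$ forces two incompatible values at $v$ (since $\lambda(P_a)\ne\lambda(P_b)$), contradicting the existence of the feasible half-integral assignment $\phi^*$ on the undeleted part. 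The clean way to phrase this is: the values $z_i$ give a half-integral ``odd-cycle-transversal''-type object for non-null cycles, and a double path of length $<1$ witnesses a non-null cycle that survives, contradicting Lemma~\ref{lemma:guillemot} applied to the graph on which $\phi^*$ restricts to a genuine consistent labelling.

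Concretely, the key steps in order: (1) observe all edge constraints have cost $0$ under $\phi^*$ and extract the vertex weights $z_i\in\{0,\half,1\}$ with their deletion interpretation; (2) use Lemma~\ref{lemma:gfvs-del-is-enough} (or its argument) to see that the undeleted, fully-integral vertices carry a consistent labelling $\phi$ of the corresponding induced subgraph, and that half-deleted vertices can have \emph{one} occurrence pinned consistently; (3) take an arbitrary double path $(P_a,P_b)$ ending at $v$ with $z(P_a,P_b)<1$, and show that then $P_a\cup P_b\cup\{v\}$ meets $S$ in a way that lets us ``repair'' it into a consistent labelling of a graph still containing the closed walk $P_a\bar P_b$, which is non-null since $\lambda(P_a)\ne\lambda(P_b)$ — contradiction; (4) conclude the $z_i$ are double-path-hitting. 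The main obstacle I anticipate is step~(3): one must be careful with the ``counted twice'' convention for internal vertices common to $P_a$ and $P_b$ and with the endpoint $v$ being counted via $z(v)$ rather than inside $z(P_a)$ or $z(P_b)$, so that a length-$<1$ double path genuinely forces the repair to be possible (i.e.\ at most one partially-deleted vertex total along the symmetric-difference cycle, and $v$ itself not fully deleted). Handling the edge case where the two paths share a long prefix — so the genuine ``non-null cycle'' lives in the symmetric difference and one should track $\lambda(P_a)\lambda(P_b)^{-1}$ carefully — is where the bookkeeping is delicate, but it is routine once the convention is pinned down.
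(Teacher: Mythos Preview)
Your overall scaffolding matches the paper's: first dispose of the non-soft-equality constraints (they are replicated $2k+1$ times, hence satisfied at cost zero in any optimum), then do a case analysis on how a putative short double path meets the set of (partially) deleted vertices. But two points in your write-up are genuinely off and would prevent the argument from closing.

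First, your reading of $z_i=\half$ is wrong. You say this means ``$v_i$ is fully relaxed''. In fact, if \emph{every} occurrence $v_i(p)$ were relaxed (frequency $\half$), the soft wide equality would have cost~$0$. The value $z_i=\half$ occurs precisely when some occurrences of $v_i$ are integral (necessarily all with the \emph{same} value) and some are relaxed; two occurrences with \emph{distinct} integral values already force $z_i=1$. This is exactly the lever the paper pulls.

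Second, neither of your two proposed contradiction mechanisms works in the key case where the double path meets $V_{\half}$ in a single vertex $u$ with multiplicity one. ``Delete $u$ and propagate along $P_a$ and $P_b$'' fails because deleting an internal vertex of $P_b$ disconnects $P_b$, so you cannot propagate along it. ``Repair $\phi^*$ into a consistent labelling of a graph still containing the non-null walk $P_a\bar P_b$'' is not a contradiction you can set up: a graph containing a non-null cycle \emph{never} admits a consistent labelling, so you cannot hope to construct one; the contradiction must come from elsewhere. The paper's route is more direct: since all vertices of the double path except $u$ lie in the consistently-labelled component $H$ around $t$, the two neighbours of $u$ on $P_b$ are integral, hence (by the cost-zero edge constraints) the two occurrences $u(i),u(j)$ of $u$ on $P_b$ are themselves integral, with values determined by propagating from $t$ along the two arcs of the walk. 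Because $\lambda(P_a)\neq\lambda(P_b)$, these two integral values differ, forcing $z_u=1$ rather than $\half$. The case $u=v$ is handled identically with the two ending occurrences of $v$. No symmetric-difference or shared-prefix bookkeeping is needed.
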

\begin{proof}
We begin by the last point: By the construction of the VCSP, any optimal
solution will satisfy each assignment and each edge constraint
$(v(i)=\pi(u(j))$ at cost zero; thus the only constraints not completely
satisfied are the soft equality constraints. 

Now let $V_1=\{v_i \in V: z_i=1\}$ and $V_{\half}=\{v_i \in V: z_i=\half\}$. 
Let $H$ be the connected component of $G$ induced by the vertices
reachable from $t$ in $G \setminus (V_1 \cup V_{\half})$. 
Then $H$ has a consistent labelling (as all constraints within $H$ are
satisfied). Thus, every double path must intersect $V_1$ or $V_{\half}$.
If a double path intersects $V_1$, or intersects $V_{\half}$ in two places
or in a vertex with multiplicity two in the double path, then certainly the double path has
length at least $1$. Thus let $(P_a,P_b)$ be a double path, ending at $v$,
which intersects exactly one vertex $u \in V_{\half}$ (we may have $u=v$). 

If $u=v$, let $v(i)$ and $v(j)$ be the occurrences of $v$ at which the
paths $P_a$ and $P_b$ end. Since these paths (excluding the endpoint) 
are contained in $H$, the penultimate vertex of each path must be integral. 
But then $v(i)$ and $v(j)$ are both integral, and 
%
by the inconsistency of the two paths these must be different. 
This contradicts the claim that $v \in V_{\half}$. 

Otherwise, assume w.l.o.g.\ that $z(P_a)=z(v)=0$, and $z(P_b)=\half$. 
Let $u(i)$ and $u(j)$ be the first and second occurrence of $u$ in $P_b$ 
(e.g., the occurrences of $u$ on the edge which enters resp.\ leaves $u$).
Since all vertices of the double path except $u$ are contained in $H$, we
have integral assignments to all variables, including $u(i)$ and $u(j)$,
and for every vertex $v' \neq u$ they are at cost zero. 
Thus, since the double path is inconsistent, $u(i)$ and $u(j)$ must have
distinct integral assignments, again contradicting that $u \in V_{\half}$. 
\maybeqed{} \end{proof}

We now show the reverse direction.

\begin{lemma}
\label{lemma:double-path-sufficient}
Let $z_i$ be an optimal assignment to the double path system corresponding
to an instance of \textsc{Half-integral GFVS with Assignments}. Then there
is a feasible assignment $\phi$ to the reference LP for the same instance,
where the cost of the soft equality for vertex $v_i$ is $z_i$, and where
all other soft constraints have cost zero. 
\end{lemma}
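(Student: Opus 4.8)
The plan is to rebuild a feasible point $\phi$ of the reference LP directly from the weights $z$, using a shortest-distance (region-growing) argument of the Garg--Vazirani--Yannakakis type. Two cheap normalisations come first. Since $z$ is optimal, $z_i\le 1$ for every $i$: any double path meeting $v_i$ charges $z_i$ with total coefficient at least one, so replacing $z_i$ by $\min(z_i,1)$ leaves the system double-path-hitting and cannot increase $\sum z_i$. Likewise, vertices unreachable from $t$ may be discarded, as no double path meets them and they carry weight $0$ at optimum. For a vertex $v$ reachable from $t$, set $d(v)=\min\{\,z(\mathrm{int}(P)) : P \text{ a } t\text{--}v \text{ path}\,\}$, the least weight of the \emph{internal} vertices of a path from $t$ to $v$; the hypothesis that $z$ is double-path-hitting is precisely the statement that any two $t$--$v$ paths $P_a,P_b$ with $\lambda(P_a)\neq\lambda(P_b)$ satisfy $z(\mathrm{int}(P_a))+z(\mathrm{int}(P_b))\ge 1-z_v$. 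Consequently, among the paths $P$ with $z(\mathrm{int}(P))<\tfrac{1-z_v}{2}$ all carry one common label, which I call $\ell(v)$ when such a path exists; and $|d(u)-d(v)|\le\max(z_u,z_v)$ for every edge $uv$.

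The assignment $\phi$ is read off edge by edge, with an auxiliary orientation of each edge from its lower-$d$ endpoint to its higher-$d$ endpoint. The occurrence of the higher endpoint of an edge $uv$ is made to copy (after the twist by $\lambda(u,v)$) the standard half-integral assignment carried by the lower endpoint on that edge; the occurrences of a vertex on edges oriented \emph{into} it thus inherit the ``committed'' value $\ell(\cdot)$ with a frequency governed by the relevant $d$-values, while occurrences on edges oriented out of it, or towards an uncommitted neighbour, get a value of lower frequency (in the extreme, the all-$\half$ value). All cost variables $z_j$ are then set to their least feasible value. One may equivalently phrase this by putting $v(i)_d=\max\{0,\,1-2\,\delta_{uv}(v,d)\}$ for a suitable distance-to-the-edge, but the orientation picture is more convenient for the verification.

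The verification runs by constraint type. The $(2k{+}1)$-fold replicated edge constraints $(v(i)=\pi(u(j)))$ have cost $0$ because, by construction, the two occurrences sitting on an edge are $\pi$-related; this is where $|d(u)-d(v)|\le\max(z_u,z_v)$ and the consistency of short paths are used. For the soft equality of a vertex $v$ one argues the cost is exactly $z_v$: when $z_v=0$, $v$ lies in the $t$-component of $G$ restricted to zero-weight vertices, which carries a consistent labelling (a non-null cycle inside it would be a double path of length $0$, exactly as argued in the proof of Lemma~\ref{lemma:double-path-necessary}), so all occurrences of $v$ agree; when $z_v=1$ the occurrences may disagree freely and the cost is $1$; and when $0<z_v<1$ the occurrences split into a ``committed'' part and a ``lower-frequency'' part whose discrepancy is exactly $z_v$. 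Finally the prescribed assignments of $A$ hold by construction, so $\phi$ is feasible with the stated per-vertex costs, and the lemma follows.

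The step I expect to be the genuine obstacle is the intermediate case $0<z_v<1$ of the soft-equality computation. Two things must be shown there, both by pushing the inequality $z(\mathrm{int}(P_a))+z(\mathrm{int}(P_b))\ge 1-z_v$ along carefully chosen double paths: that the committed values transported into $v$ from its various neighbours never disagree (otherwise $v$'s committed occurrences conflict and the cost over-counts), and, using optimality of $z$ through a local exchange that would otherwise decrease some $z_v$, that $v$ really does carry both a committed and a strictly-lower-frequency occurrence (otherwise the cost under-counts). Organising this so that all values of $z_v$ are handled by one distance potential rather than a case split --- or, alternatively, first establishing by a standard uncrossing argument that the double path system already has a half-integral optimum and then only treating $z_v\in\{0,\half,1\}$ --- is the technical heart of the lemma, and is where it is harder than its converse Lemma~\ref{lemma:double-path-necessary}.
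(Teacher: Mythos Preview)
Your shortest-distance idea is the right one, and in spirit it matches the paper. But you have over-engineered the construction and, more importantly, misdiagnosed where the difficulty lies.

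The paper's construction is cleaner in one key respect: instead of computing a distance $d(v)$ per \emph{vertex} and then orienting edges and transporting values, it computes a shortest path directly to each \emph{occurrence} $v(i)$. Concretely, treat $v(i)$ as a zero-weight subdivision vertex on its edge (with an identity-labelled edge to $v$), let $P$ be a shortest $t$--$v(i)$ path with label $\gamma$, and set $v(i)$ to the standard assignment with mode $\gamma$ and frequency $1-z(P)$ if $z(P)<\half$, else the all-$\half$ value. This single definition kills the edge constraints immediately: the two occurrences on an edge $uv$ have the \emph{same} shortest-path length (each path to one is a path to the other), and if their modes were $\pi$-incompatible the two witnessing paths would be a double path of length $<1$. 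No orientation, no ``committed vs.\ lower-frequency'' bookkeeping, no circularity about which endpoint defines which. Your edge-orientation scheme is trying to simulate exactly this, but the per-occurrence viewpoint makes it a two-line check.

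Your second and more substantive misstep is the ``under-count'' worry. You plan to invoke \emph{optimality} of $z$ (via a local exchange) to force each vertex to realise cost exactly $z_v$. This is unnecessary: the paper only proves, and only needs, that the soft-equality cost at $v_i$ is \emph{at most} $z_i$ and that all other constraints cost zero. That already gives $\mathrm{OPT}(\text{reference LP})\le \sum_i z_i$, which together with Lemma~\ref{lemma:double-path-necessary} yields equality of optima; per-vertex equality is never used. Correspondingly, the paper's argument uses only feasibility of $z$ (double-path-hitting), not optimality. So the ``technical heart'' you anticipate --- showing $v$ carries both a committed and a strictly-lower-frequency occurrence --- simply does not arise. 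The soft-equality bound is just: if two occurrences $v(p),v(q)$ have the same mode, the longer of their two shortest paths is bounded by the shorter plus $z_v$; if they have different modes, the two shortest paths form a double path of length $\ge 1$, which bounds the discrepancy by $z_v$.

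In short: switch from per-vertex to per-occurrence shortest paths with a fixed threshold $\half$, drop the orientation, and aim only for the inequality $\le z_i$. The proof then collapses to two short contradiction arguments against double-path-hitting.
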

\begin{proof}
We will construct a feasible assignment $\phi$ to the reference LP,
where every vertex (or rather, every occurrence $v(i)$ of every vertex)
takes a standard assignment. To define this assignment, let $v(i)$ be an
occurrence of a vertex $v$ on an edge $uv$; temporarily treat $v(i)$ as a
vertex subdividing the edge $uv$, with $z(v(i))=0$, with an
identity-labelled edge connecting it to $v$. Let $P$ be a shortest path
from $t$ to $v(i)$ (as measured by $z(P)$), and let $\gamma \in \Gamma$ 
be its resulting label. 
If $z(P)\geq \half$, let $v(i)=0$; otherwise, let $v(i)$ take the fractional
assignment with mode $\gamma$ and frequency $1-z(P)$. Repeat this for
every occurrence $v(i)$ of every vertex $v$ of the graph.
We claim that this creates a feasible assignment to the reference LP,
where all constraints except soft equalities are satisfied, and the cost
of the soft equality corresponding to a vertex $v_j$ is at most $z_j$. 

For feasibility, we first need to verify that edge constraints are
satisfied at cost zero. Let $uv$ be an edge with corresponding vertex occurrences
$u(i), v(j)$. Observe that the length of the shortest paths to $u(i)$ and
to $v(j)$ are equal, as each path to the one is a valid path to the other;
thus $u(i)$ and $v(j)$ have identical frequencies, 
and the question is if they have compatible modes. 
Let $P_u$ be the shortest path that led to the labelling of $u(i)$, 
and similarly let $P_v$ be the path to $v(j)$. 
Note that both paths have length less than $\half$. 
First, assume that $P_v$ passes through $u$ but not through $v$. 
Then the last edge of $P_v$ must be $uv$, and removing this edge
leaves two incompatible paths to $u$; furthermore, since $z(u)$ was
included in the cost of $P_v$, we have a double path of length less than $1$,
which contradicts $z_i$ being feasible. 
Otherwise, $P_u$ passes through $u$ and $P_v$ passes through $v$, 
thus the costs $z(u)$ resp.\ $z(v)$ are included in these.
Extending $P_u$ by the edge $uv$ now creates a double path ending in $v$,
of length less than one, again contradicting feasibility. 

Next, assume that $v_i$ is a vertex such that the cost of the soft
equality for vertex $v_i$ under $\phi$ (call this $c_i$) is more than $z_i$. 
Let $v_i(p), v_i(q)$ be two occurrences of $v_i$ maximising this cost,
and let $P_a$ resp.\ $P_b$ be corresponding shortest paths. 
If $v_i(p)$ and $v_i(q)$ have identical modes (or if at least one of them
takes value $0$), assume that the former has higher frequency. But then
$z(P_a) > z(P_b) + z_i$, 
which is a contradiction since the latter is the length of a possible path.  

Otherwise $v_i(p)$ and $v_i(q)$ have distinct modes. 
Then $(P_a,P_b)$ is a double path
ending in $v$. Now the cost $c_i>z_i$ equals $(1-z(P_a)) - (1-(1-z(P_b))) 
= 1-z(P_a)-z(P_b)$, i.e., the length of the double path is less than one,
again contradicting that $z_i$ are double-path-hitting. 
This finishes the proof. 
\maybeqed{} \end{proof}

We can conclude the following.

\begin{lemma}
\label{lemma:double-path-halfint}
The double path system has a half-integral optimum, and each such optimum
can be converted into an optimal solution for \textsc{Half-Integral GFVS
  with Assignments}. 
\end{lemma}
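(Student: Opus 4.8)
The plan is to treat this as an assembly result: Lemmas~\ref{lemma:double-path-necessary} and~\ref{lemma:double-path-sufficient} already relate feasible points of the double path system and of the reference LP of Section~\ref{section:betterlp}, so the work is to (a) deduce that the two formulations have the same optimal value, (b) read off a half-integral optimum of the double path system from the half-integral optimum of the reference LP guaranteed by Theorem~\ref{thm:newlp-halfint}, and (c) show that every half-integral optimum of the double path system can be turned back into an optimal assignment for \textsc{Half-Integral GFVS with Assignments}.

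First I would establish that $\mathrm{OPT}$ of the double path system equals $\mathrm{OPT}$ of the reference LP. For one inequality, take the half-integral optimum $\phi^*$ of the reference LP (Theorem~\ref{thm:newlp-halfint}); by Lemma~\ref{lemma:double-path-necessary} its vector of soft-equality weights $(z_i)$ is double-path-hitting and every other soft constraint has cost zero under $\phi^*$, so $\sum_i z_i$ equals the cost of $\phi^*$ and the double path system has a feasible point of that value, giving $\mathrm{OPT}(\text{double path system}) \le \mathrm{OPT}(\text{reference LP})$. For the reverse inequality, take any optimal weight vector $(z_i)$ for the double path system; Lemma~\ref{lemma:double-path-sufficient} produces a feasible assignment to the reference LP whose soft-equality costs are at most $z_i$ and whose remaining soft constraints have cost zero, so the reference LP has a feasible point of cost at most $\sum_i z_i$, giving the other inequality. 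Hence the two optimal values coincide.

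Next I would exhibit the half-integral optimum of the double path system directly: the weight vector $(z_i)$ induced by the half-integral reference optimum $\phi^*$ above is half-integral, since by the last part of Theorem~\ref{thm:newlp-halfint} each $z_i$ equals the value of the $k$-submodular relaxation of the soft wide equality (Lemma~\ref{lemma:usefulksubmod}) evaluated at a half-integral point, which lies in $\{0,\half,1\}$; it is feasible by Lemma~\ref{lemma:double-path-necessary} and of optimal value by the previous paragraph. For the conversion, given any half-integral optimum $(z_i)$ of the double path system I would run the construction in the proof of Lemma~\ref{lemma:double-path-sufficient}: it produces a standard assignment $\phi$ to the reference LP realizing soft-equality costs $z_i$ with all other costs zero, hence (by the equality of optima) an \emph{optimal} assignment to the reference LP. The one point to verify is that, when the $z_i$ are half-integral, this $\phi$ is itself half-integral and not merely standard: every shortest-path length $z(P)$ used in that construction is a sum of values in $\{0,\half,1\}$, so it is either $0$ or at least $\half$, and the rule ``$v(i)=0$ if $z(P)\ge\half$, else frequency $1-z(P)$'' then assigns every occurrence either an integral value (frequency $1$) or the value $0$. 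Thus $\phi$ is a half-integral optimum of the reference LP, which by Theorem~\ref{thm:newlp-halfint} corresponds directly to an optimal assignment of the underlying CSP, i.e., of \textsc{Half-Integral GFVS with Assignments}. (Alternatively, once the costs $z_i$ are known one may instead invoke Lemma~\ref{lemma:gfvs-del-is-enough} to rebuild the assignment by propagation.)

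I do not expect a genuine obstacle, as the substantive arguments are already packed into Lemmas~\ref{lemma:double-path-necessary} and~\ref{lemma:double-path-sufficient}; the only mildly delicate point is the half-integrality bookkeeping in the last step, namely confirming that half-integral double-path weights force the shortest-path-based assignment of Lemma~\ref{lemma:double-path-sufficient} to land in $\{0,1\}$-frequencies rather than an intermediate fractional value, so that the resulting reference-LP solution is literally a half-integral assignment and thus a bona fide solution of \textsc{Half-Integral GFVS with Assignments}.
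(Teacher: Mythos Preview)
Your proposal is correct and follows essentially the same approach as the paper's proof: both combine Lemmas~\ref{lemma:double-path-necessary} and~\ref{lemma:double-path-sufficient} to equate the optimal values, extract a half-integral double-path optimum from the half-integral reference-LP optimum, and then convert back. The paper's proof is terser and for the conversion step simply invokes Lemma~\ref{lemma:gfvs-del-is-enough} (which you mention as your alternative), whereas your primary route---pushing the half-integral $z_i$ back through the shortest-path construction of Lemma~\ref{lemma:double-path-sufficient} and checking that the resulting reference-LP point is itself half-integral---is a valid and slightly more explicit variant that the paper leaves implicit.
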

\begin{proof}
By Lemma~\ref{lemma:double-path-sufficient}, the cost of a set of
double-path-hitting weights is at least the cost of the reference LP; by 
Lemma~\ref{lemma:double-path-necessary}, the costs are in fact identical,
there is a half-integral optimum for the double path system, and every
such optimum can be interpreted as deletion values for an optimum for the
original LP. By Lemma~\ref{lemma:gfvs-del-is-enough}, we can reconstruct
an optimal full assignment for the VCSP from this information. 
\maybeqed{} \end{proof}

\subsubsection{Separation oracle and wrap-up}

It only remains to show that we can solve the double path system, i.e.,
that we can produce a polynomial-time separation oracle. This is not
difficult. Let us first show a structural result. (Recall that our
notion of path length $z(P)$ does not take into account the weight of
the end vertex of $P$.)

\begin{lemma}
\label{lemma:double-path-lollipop}
A set of weights $z_i$ is infeasible (i.e., fails to be
double-path-hitting) if and only if there is some non-null simple cycle
$C$, passing through a vertex $u$, such that $z(C)+2z(P_u)+z(u)<1$,
where $P_u$ is a shortest path to $u$. 
\end{lemma}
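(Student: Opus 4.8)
\emph{Proof plan.}

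The ``if'' direction is a direct construction. Suppose $C$ is a non-null simple cycle through a vertex $u$ with $z(C)+2z(P_u)+z(u)<1$, where $P_u$ is a shortest $t$--$u$ path. Form the pair $(P_u,\;P_u\cdot C)$, where $P_u\cdot C$ runs from $t$ to $u$ along $P_u$ and then once around $C$ back to $u$. Both members run from $t$ to $u$, and their labels $\lambda(P_u)$ and $\lambda(P_u)\lambda(C)$ differ since $\lambda(C)\neq 1_{\Gamma}$, so this is a double path ending at $u$. Because $u$ becomes an internal vertex of $P_u\cdot C$ we have $z(P_u\cdot C)=z(P_u)+z(u)+(z(C)-z(u))=z(P_u)+z(C)$, so the length of this double path is $z(P_u)+z(P_u\cdot C)+z(u)=z(C)+2z(P_u)+z(u)<1$, witnessing that $(z_i)$ is not double-path-hitting. (Here and below we use $z(t)=0$, since the root $t$ is undeletable; this makes the count exact in the borderline case $u=t$, where $P_u$ is trivial.)

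For the converse, suppose $(z_i)$ is infeasible, so some double path $(P_a,P_b)$ ending at a vertex $v$ has length $z(P_a)+z(P_b)+z(v)<1$. Consider the closed walk $K:=P_a\cdot\overline{P_b}$ based at $t$ (out to $v$ along $P_a$, back along $P_b$). It is non-null, as $\lambda(K)=\lambda(P_a)\lambda(P_b)^{-1}\neq 1_{\Gamma}$, and its total weight, counted with multiplicity, is $z(P_a)+z(v)+z(P_b)<1$. Now reduce $K$, step by step, to a \emph{lollipop}: a shortest $t$--$u$ path $P_u$ followed by a simple non-null cycle $C$ through $u$. At each step we view the current walk as a shortest path from $t$ to its vertex $u$ nearest $t$, followed by a closed walk $B$ at $u$. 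If $B$ is already a simple cycle we stop. Otherwise $B$ revisits a vertex $w$ and splits as $B_1\cdot B_2$ for shorter closed walks $B_1,B_2$ at $w$; since $\lambda(B_1)\lambda(B_2)=\lambda(B)\neq 1_{\Gamma}$, one of them is non-null, and we keep it, re-attaching it to $t$ by a shortest path to its vertex nearest $t$. The quantity $z(C)+2z(P_u)+z(u)$ of the current lollipop does not increase under such a step --- the discarded closed sub-walk has non-negative weight, and replacing a walk-segment by a genuine shortest path can only decrease the ``reach'' term --- so when the process halts we obtain a simple non-null cycle $C$ through a vertex $u$ with $z(C)+2z(P_u)+z(u)\le z(P_a)+z(v)+z(P_b)<1$, as required; the borderline outputs $u=t$ (trivial stick) and $u=v$ (the cycle is a single loop at $v$) are covered by the same computation.

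The content of the argument, and the step I expect to be most delicate, is the peeling in the second paragraph. A naive extraction of \emph{some} non-null simple cycle from $K$ only bounds $z(C)$, which is too weak; one has to run the reduction so that it \emph{simultaneously} keeps the surviving cycle non-null --- this is exactly where the group labelling enters, and why no purely flow- or cut-theoretic certificate will do --- and keeps the ``reach'' term $2z(P_u)$, which records the cost of reconnecting the cycle to $t$, under control at every re-attachment. Making the induction terminate (a shortest path can be long in edges, so the obvious ``fewest edges'' potential needs care) while keeping the weight accounting tight through all the exchanges, together with the two borderline cases and the use of $z(t)=0$, is the bulk of the work.
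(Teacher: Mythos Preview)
Your overall approach is sound and genuinely different from the paper's, but the ``only if'' sketch contains a real gap.

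\medskip

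\textbf{Comparison of approaches.} For the ``if'' direction, you use the pair $(P_u,\,P_u\cdot C)$ ending at $u$; the paper instead picks another vertex $v\in C$ and builds two paths $P_a,P_b$ from $t$ to $v$ sharing the prefix $P_u$ and then going opposite ways around $C$. Both constructions give length exactly $2z(P_u)+z(u)+z(C)$; yours is more direct but produces a walk rather than a simple path. For the ``only if'' direction the methods diverge substantially. The paper forms the Eulerian multigraph $H=P_a+P_b$ (max degree~4), extracts a non-null simple cycle $C\subseteq H$ in one shot via Lemma~\ref{lemma:guillemot}, and then uses the degree formula $\tfrac12 d_H(u)z(u)$ together with the first points $u_a,u_b$ where $P_a,P_b$ meet $C$ to bound $z(C)+2\ell(u_a)+z(u_a)$ by the double-path length. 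Your iterative peeling is more elementary and avoids the Eulerian bookkeeping.

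\medskip

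\textbf{The gap.} Your monotonicity claim ``replacing a walk-segment by a genuine shortest path can only decrease the `reach' term'' is not correct as stated, and ``re-attach at the vertex nearest $t$'' is the wrong rule. When the outer sub-walk $B_{\text{out}}$ (at $u$, through $w$) is null and you are forced to keep the inner sub-walk $B_{\text{mid}}$ (at $w$), the reach $2\ell(w)$ may well \emph{exceed} $2\ell(u)$. What saves you is that the discarded $B_{\text{out}}$ contains \emph{two} $u$--$w$ walks $Q_1,Q_2$, giving $\ell(w)\le \ell(u)+z(u)+z(Q_i)$ for each $i$; adding these yields
\[
2\ell(w)+z(w)\;\le\; 2\ell(u)+2z(u)+z(Q_1)+z(Q_2)+z(w)\;=\;2\ell(u)+z(u)+z(B_{\text{out}}),
\]
and since $z(B_{\text{mid}})+z(B_{\text{out}})=z(B)$ the full quantity $z(B_{\text{mid}})+2\ell(w)+z(w)$ is at most the old $z(B)+2\ell(u)+z(u)$. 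So the right re-attachment is at $u'=w$ (or, more robustly, at the $u'$ on the kept sub-walk minimising $2\ell(u')+z(u')$), not at the vertex of smallest $\ell$. Also note that for $w\neq u$ one has $\lambda(B_1)\lambda(B_2)$ equal to a \emph{conjugate} of $\lambda(B)$, not $\lambda(B)$ itself; the conclusion ``one is non-null'' survives, but the equality you wrote does not. With these fixes, termination is immediate (the edge-count of $B$ strictly decreases), so your worry about the potential function is unfounded.
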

\begin{proof}
For a vertex $v$, let $\ell(v)=z(P_v)$ denote the length of a shortest path $P_v$ to $v$.
On the one hand, let $(P_a,P_b)$ be a double path of length less than $1$,
ending on $v$. If the paths are disjoint, then they form a non-null simple
cycle (passing through $t$, and we have $\ell(t)=0$). Otherwise, let $H$
be the graph consisting of the edges traversed by $P_a$ and $P_b$, with
edges used by both paths given multiplicity two. Observe that $H$ does not
admit a consistent labelling, thus by Lemma~\ref{lemma:guillemot},
$H$ contains a non-null simple cycle $C$. Further, $H$ is an even (Eulerian) graph
with maximum degree four, and the contribution of a vertex $u$ to the length
of the double path is $\frac 1 2 d_H(u) z(u)$ where $d_H(u)$ is the degree of $u$ in $H$. 
Now, let $u_a$ resp.\ $u_b$ be 
the first vertices of $C$ reached by $P_a$ resp.\ $P_b$ (both exist, since
neither of $P_a$ or $P_b$ can contain all of $C$), and let $P_a'$ resp.\ $P_b'$
be the corresponding path prefixes. Observe that $u_a$ and $u_b$ both have 
multiplicity two in the double path (though we may have $u_a=u_b$).
Assume w.l.o.g.\ that $z(P_a')+z(u_a) \leq z(P_b')+z(u_b)$. 
The double path has length at least 
$z(P_a')+z(P_b')+z(u_b)+z(C) \geq 2z(P_a')+z(u_a)+z(C) \geq 2\ell(u_a)+z(u_a)+z(C)$;
hence $z(C)+2\ell(u_a)+z(u_a)<1$. 

On the other hand, let $C$ be a non-null simple cycle, and let $u$ be the
vertex of $C$ closest to $t$. Let $v$ be a vertex on $C$ other than $u$. 
Create one path $P_a$ going from $t$ to $u$ and further on to $v$ taking
one way around the cycle, and a path $P_b$ taking the same way from $t$ to
$u$ then further on to $v$ taking the other way around the cycle.
Then $(P_a,P_b)$ forms a double path of length exactly $2\ell(u)+z(u)+z(C) < 1$.
\maybeqed{} \end{proof}

Observe that it follows from the proof that there always exists a \emph{shortest} double path
$(P_a,P_b)$ such that $P_a+P_b=2P_u+C$ for some vertex $u$ and cycle $C$. 

\begin{lemma}
\label{lemma:double-path-oracle}
Double path systems have polynomial-time separation oracles. 
\end{lemma}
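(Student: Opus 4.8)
The plan is to turn the separation problem into a bounded number of shortest-path computations, using the structural characterisation of infeasibility from Lemma~\ref{lemma:double-path-lollipop}. Given candidate weights $z_i \geq 0$, we must either certify that every double path has length at least $1$, or exhibit one of length less than $1$. By Lemma~\ref{lemma:double-path-lollipop} (and the remark following it) the weights fail to be double-path-hitting \emph{precisely} when $\min_u \bigl[\,2\ell(u)+z(u)+\rho(u)\,\bigr] < 1$, where $\ell(u)=z(P_u)$ is the $z$-length of a shortest $t$--$u$ path and $\rho(u)$ is the weight $z(C)=\sum_{v_i\in C}z_i$ of the lightest non-null simple cycle $C$ through $u$ (with $\rho(u)=\infty$ if no such cycle exists); moreover the minimiser yields a lollipop $2P_u+C$ which \emph{is} a concrete double path, and hence can be output as the violated constraint. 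So it suffices to compute this quantity and compare it with $1$.

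First I would compute $\ell(v)$ for every $v$ by a single shortest-path run from $t$ in which the length of a path is the sum of the $z$-weights of its internal vertices; this is a routine vertex-weighted Dijkstra (e.g.\ push the weights onto incident edges, or split vertices) and runs in polynomial time. Along the way I record, for each $v$, a shortest path and — using the group oracle to multiply labels along it — its label $\lambda(P_v)$.

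The substantive step is computing $\rho(u)$ for each $u$. Run the same vertex-weighted Dijkstra rooted at $u$, obtaining distances $d_u(\cdot)$, a shortest-path tree $T_u$, and shortest-path labels $\lambda(P^u_v)$. Call an edge $xy$ \emph{inconsistent} if $\lambda(P^u_x)\lambda(x,y)\neq \lambda(P^u_y)$; tree edges are always consistent. Each inconsistent (necessarily non-tree) edge $xy$ yields, via its fundamental cycle in $T_u$, a non-null simple cycle, whose $z$-weight is bounded in terms of $d_u(x),d_u(y),z(x),z(y)$; I let $\rho(u)$ be the minimum such weight over all inconsistent edges. Completeness rests on the observation that \emph{any} non-null simple cycle $C$ through $u$, written $u=w_0,w_1,\dots,w_m=u$, must contain an inconsistent edge — otherwise composing $\lambda(P^u_{w_j})\lambda(w_j,w_{j+1})=\lambda(P^u_{w_{j+1}})$ around $C$, together with $\lambda(P^u_u)=1_\Gamma$, forces $\lambda(C)=1_\Gamma$, contradicting non-nullity — and since $d_u(\cdot)$ never exceeds the weight of the corresponding arc of $C$, the cycle we extract from that inconsistent edge is no heavier than $z(C)$. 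Hence $\rho(u)$ equals the weight of a genuine lightest non-null simple cycle through $u$. Doing this for all $u$ and minimising $2\ell(u)+z(u)+\rho(u)$ gives the test in polynomial time (at most $n$ Dijkstra runs, each followed by $O(m)$ edge checks costing $O(1)$ group operations each), and a below-$1$ value hands us an explicit violated double path.

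The main obstacle is the $\rho(u)$ subroutine: one must argue both that scanning only the label-inconsistent edges of a shortest-path tree cannot miss the optimal non-null cycle (the ``compose the labels around $C$'' argument above, and the care needed so that the extracted cycle genuinely passes through the intended base vertex) and that the vertex-weight bookkeeping lines up — shortest-path lengths count internal vertices, cycle lengths count all vertices, and the closed walks involved may repeat vertices — so that the produced cycle is provably no heavier than the true optimum. The oracle-only access to $\Gamma$ is not an additional difficulty here, since the algorithm only ever multiplies, inverts, and compares a polynomial number of concrete group elements and never enumerates $\Gamma$.
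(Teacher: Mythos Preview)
Your overall plan — reduce separation to shortest-path machinery via Lemma~\ref{lemma:double-path-lollipop}, detect non-null cycles by label-inconsistent edges relative to a shortest-path tree — is sound in spirit, but the specific realisation has a genuine gap that goes beyond bookkeeping.

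The problem is your claim that ``$\rho(u)$ equals the weight of a genuine lightest non-null simple cycle through $u$''. The fundamental cycle of a non-tree edge $xy$ in $T_u$ passes through $\mathrm{LCA}_{T_u}(x,y)$, not through $u$; there is no reason for this LCA to be $u$. So your subroutine may return the weight of a non-null cycle that avoids $u$ entirely (indeed, if $u$ is a leaf adjacent only to some vertex $a$, and there is a non-null cycle through $a$ but none through $u$, your $\rho(u)$ is finite while the true value is $+\infty$). Consequently, when your test $2\ell(u)+z(u)+\rho(u)<1$ fires, you cannot simply output the lollipop $2P_u+C$: you have no cycle $C$ through $u$ in hand. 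The natural fallback — output the pair of walks $(P^t_u \cdot P^u_x \cdot xy,\; P^t_u \cdot P^u_y)$ — fails too, because the concatenation $P^t_u \cdot P^u_x$ need not be a \emph{simple} path (the two shortest-path trees are rooted at different vertices and may overlap arbitrarily), so you do not get a legal double path. You flag both issues as ``obstacles needing care'', but neither is resolved, and neither is merely a matter of vertex-weight accounting.

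The paper avoids all of this by rooting \emph{only} at $t$. After breaking ties so that shortest paths are unique, it argues that a shortest double path $(P_a,P_b)$ always has the form $(P_v,\;P_{v'}\cdot v'v)$ for some edge $vv'$ whose endpoints receive incompatible shortest-path labels — i.e.\ both constituent paths are tree paths in the single shortest-path tree from $t$ (plus one edge). Hence a single Dijkstra from $t$ followed by one scan over the edges, checking $\lambda(P_v)\cdot\lambda(v,v')\neq\lambda(P_{v'})$ and the associated weight bound, suffices. This sidesteps the ``cycle through the right base vertex'' problem entirely and uses one shortest-path computation rather than $n$. Your per-$u$ rooting is the source of the difficulty; if you instead carried out the inconsistent-edge scan in the tree rooted at $t$, you would essentially recover the paper's argument.
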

\begin{proof}
Let us assume that all shortest paths have distinct lengths;
this can be achieved by replacing each weight $z_i$ by the pair $(z_i,2^i)$
and handling weights lexicographically (e.g., treating $(z,b)$ as $z+b\varepsilon$
where $\varepsilon$ is infinitesimal). (The uniqueness now follows since shortest paths are induced.)
By this, we find that \emph{every} 
shortest double path $(P_a,P_b)$ contains one path, say $P_a$, which is 
the unique shortest path to the endpoint (as otherwise one of $P_a$ and $P_b$
could be replaced by the shortest path). 
By Lemma~\ref{lemma:double-path-lollipop}, we may also assume that $P_a+P_b$ forms a graph like $2P_u+C$ 
for some non-null cycle $C$. Pushing this further, we can conclude 
that for every vertex $v$ on $C$, the graph $P_a+P_b$ contains the shortest
path to $v$: For $u$, this is true by choice; for any other vertex $v$,
we may re-orient $P_a+P_b$ to end at $v$, and perform the above replacement. 
Thus, label every $v \in C-u$ by ``left'' or ``right'' according to whether 
the (unique) shortest path to $v$ goes clockwise or counterclockwise through $C$
after passing $u$ (give $u$ both labels). Let $vv'$ be an edge in $C$ whose 
endpoints have distinct labels (this exists, though one endpoint may be $u$).
By orienting $P_u+C$ to a double path ending in $v \neq u$, we get a (shortest) 
double path $(P_a,P_b)$ ending at $v$, where $P_a$ is the shortest path to $v$, 
and $P_b$ is the shortest path to $v'$. Thus finding a shortest double path
has been reduced to finding two vertices $v$ and $v'$, such that 
their total distance from $t$ (and their own weights) sum up to less than $(1,0)$,
and such that the resulting labels of the shortest paths are incompatible 
for the edge $vv'$. This can be done simply by computing shortest paths. 
\maybeqed{} \end{proof}

We may finally wrap up. 


\begin{proof}[Proof of Theorem~\ref{theorem:gfvs}.]
By Lemma~\ref{lemma:gfvs:invoke}, it suffices to be able to produce an
optimal solution to \textsc{Half-integral GFVS with Assignments} in the
oracle access group model; by Lemma~\ref{lemma:double-path-halfint}, it
suffices to be able to produce a half-integral optimum to a double path
system. By Lemma~\ref{lemma:double-path-oracle}, double path systems can
be optimised in polynomial time. The only remaining detail is how to
convert an arbitrary optimum to a double path system into a half-integral
one. This can be done as follows. Observe that adding constraints $z_i=1$
and $z_i=0$ both create systems which correspond to double path systems
for smaller graphs, in the first case a graph where $v_i$ has been
deleted, in the second case a graph where $v_i$ has been bypassed
(creating an edge $v_pv_q$ of the appropriate label for every 2-edge path
$v_pv_iv_q$ through $v_i$), then deleted. Thus the system retains a
half-integral optimum after the addition of such constraints, and we may
simply iteratively add such constraints that fail to raise the optimal
cost, until it is an optimal solution to set $z_i=\half$ for all remaining variables $z_i$.
\maybeqed{} \end{proof}

%
%
%

\subsection{Implications}
\label{section:gfvs-implies}

Theorem~\ref{theorem:gfvs} provides the first single-exponential time
algorithm for both \textsc{Group Feedback Vertex Set} and \textsc{Group
  Feedback Edge Set}, with a quite competitive running time;
the existence of such an algorithm was an open question in~\cite{CyganPP12}. 
Via a reduction given in~\cite{CyganPP12}, we furthermore get an algorithm
with the same running time for \textsc{Subset Feedback Vertex Set}, which
was also a previously stated open problem~\cite{CyganPP12}.

We also observe that, e.g.\ via a group $\Z_2^m$, we can reduce
the basic problem \textsc{Feedback Vertex Set} to GFVS.%
\footnote{We encourage the interested reader to investigate the question
  of how large the group $\Gamma$ needs to be to encode FVS in GFVS. In
  other words, what is the smallest group $\Gamma$ with which you can
  label the edges of $K_n$ so that every simple cycle becomes non-null?
  Our best upper and lower bounds are $O(n^n)$ and $\Omega(n)$,
  respectively (although stronger lower bounds hold for Abelian groups).
  Note that many natural suggestions 
  fail since labels are direction-dependent.}
While this problem already has faster FPT algorithms (e.g., time
$O^*(3^k)$ by the recent cut-and-count technique~\cite{CyganNPPRW11}),
this is the first LP-branching algorithm for the problem, which may be of
interest by itself (although the LP-formulation is admittedly somewhat
obscure). Our algorithm also distinguishes itself from previous work in
that it never uses the iterative compression technique. 

Furthermore, we observe by completeness that for an explicitly given group
$\Gamma$, we can add the soft versions of constraints $(u=a \lor v=b)$ where $a, b \in
\Gamma$ to the repertoire, and still get a single-exponential running time
(say, $O^*(3^{2k})$ with a rough analysis). Similarly to as in
Section~\ref{section:improved}, we can for each such constraint simply branch
on the cases $(u=a)$, $(v=b)$ and the case that the constraint is false
(details omitted). This may be of interest for the general question of
which VCSPs admit single-exponential time FPT algorithms.

Finally, regarding the use of gap parameters, we note that while GFVS
in ``pure'' form always has a feasible all-relaxed solution of cost zero,
the problem \textsc{GFVS with Assignments} has a relaxation lower bound
which is at least as large as the packing number for paths inconsistent
with the assignments. In particular, when modelling \textsc{Multiway Cut},
this number equals the Mader-path packing number (see~\cite{CyganPPW13MWC}),
and thus the above algorithm, applied to \textsc{Multiway Cut}, is $O^*(2^k)$
(as in~\cite{CyganPPW13MWC}). 
Similar statements can be made about FVS: if $v$ is a vertex of a graph $G$
for which it has been decided that $v$ is not to be deleted, then (but only then)
we may use as a lower bound the ``$v$-flower number'', i.e., the maximum 
number of circuits one can pack, each incident on $v$ 
but otherwise pairwise disjoint.

\section{Linear-time FPT algorithms}
In the previous sections, we have shown that if a problem can be relaxed to basic
$k$-submodular functions, then it can be solved in FPT time.
In this section, we show that, if a problem admits a \textit{binary} basic
$k$-submodular relaxation, then it can be solved in \textit{linear-time} FPT by
computing a network flow and exploiting the structure of the
minimum cuts.

Let $D=\{1,2,\ldots,k\}$ be a domain and $D'=\{0\}\cup D$ be the relaxed domain.
We say that a minimum solution $x\in D'^X$ of a function $f':D'^X\rightarrow \mathbb{R}$ is \textit{dominated} by a
minimum solution $y\in D'^X$ if $x\neq y$ and for any $i\in X$ it holds
that $x_i\neq 0\Rightarrow x_i=y_i$.
If there are no such $y$, we say that $x$ is an \textit{extreme} minimum solution.
In what follows, we prove the following theorem.
\begin{theorem}\label{thm:linear}
Let $f':D'^X\rightarrow\mathbb{N}$ be a sum of $m$ binary basic
$k$-submodular functions.
Then, we can compute an extreme minimum solution of $f'$ in $O((\min f') k m)$ time.
\end{theorem}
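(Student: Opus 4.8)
The plan is to reduce the minimisation of $f'$ to a minimum $s$--$t$ cut computation in an auxiliary digraph, and then read off an extreme minimum solution from a carefully chosen minimum cut. First I would observe that a binary basic $k$-submodular function is, up to the special treatment of $0$, exactly one of the relaxation functions from Lemma~\ref{lemma:usefulksubmod} (unary; soft $(x=\pi(y))$; soft $(x=a\lor y=b)$; soft binary equality), so it suffices to encode each of these gadget-by-gadget. Following the approach of Iwata, Oka and Yoshida~\cite{Iwata14}, I would represent each CSP variable $v_i$ by $k$ Boolean ``indicator'' arcs $v_{i,d}$ in the flow network, so that a cut assigns each $v_{i,d}$ a value in $\{0,1\}$; the constraint $v_{i,a}+v_{i,b}\le 1$ from the reference LP of Section~\ref{section:betterlp} becomes a structural constraint in the network (e.g.\ by routing all $v_{i,d}$ arcs through a common vertex so at most one can be cut, or by the standard ``$0/\half/1$'' gadget on a path of length two), and the cost $z_j$ of each constraint $f_j$ is modelled by auxiliary arcs whose capacities realise inequalities~(\ref{zj:unary})--(\ref{zj:broadeq}) restricted to half-integral assignments. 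The key point is that by Theorem~\ref{thm:newlp-halfint} the reference LP is half-integral, and every half-integral LP of this ``two-variables-per-constraint, monotone'' type is equivalent to a minimum vertex cut / minimum $s$--$t$ cut instance; so a minimum cut has value $\min f'$ and corresponds to a half-integral optimum of the LP, which by Theorem~\ref{thm:newlp-halfint} is an optimal assignment of $f'$.

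The second half is to make the output \emph{extreme} rather than just optimal. Here I would invoke the classical structure of minimum cuts: the collection of minimum $s$--$t$ cuts forms a distributive lattice (the ``closed sets'' under a reachability preorder in the residual graph), so there is a unique \emph{minimal} and a unique \emph{maximal} minimum cut, each computable from a single maximum flow via residual-graph reachability in $O(km)$ additional time. I would argue that the minimal minimum cut (or, after the standard symmetrisation used for half-integral persistence, the cut that minimises the set of coordinates assigned the relaxed value $0$ — equivalently maximises $\sum_{i,d} v_{i,d}$, exactly the secondary objective used in the proof of Theorem~\ref{thm:newlp-halfint}) yields an assignment that is not dominated by any other optimal assignment: if $y$ dominated $x$, then $y$ agrees with $x$ on all integral coordinates and turns some $0$-coordinate of $x$ into an integral value, which would give a distinct minimum cut strictly below (resp.\ above) the chosen extremal one — a contradiction. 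So one maximum-flow computation plus one reachability pass gives an extreme minimum solution.

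Finally I would account for the running time. The network has $O(km)$ arcs ($O(k)$ per constraint for the $z_j$-gadgets and $O(k)$ indicator arcs per variable occurrence, and since each constraint is binary the number of variable occurrences is $O(m)$) and all capacities are integers bounded by $\min f'$ after the reduction is set up so that the max-flow value equals $\min f'$ (using the usual trick of rescaling constraint weights / capping at $\min f' + 1$). Running the Ford--Fulkerson augmenting-path method then costs $O((\min f')\cdot km)$ time, since each augmentation increases the flow by at least $1$ and takes $O(km)$ time; the extremal-cut extraction is dominated by this. That matches the claimed bound $O((\min f')\,k m)$.

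The main obstacle I expect is the encoding step for the non-``cut-like'' constraints — in particular the soft wide equality $(v_{i_1}=\ldots=v_{i_r})$ of unbounded arity and the permutation constraints $(x=\pi(y))$, whose costs in~(\ref{zj:bijection}) and~(\ref{zj:broadeq}) are maxima of absolute values over all $d$ (and all pairs $p,q$). One must verify that these can be realised with only $O(k)$ (resp.\ $O(k r)$, telescoping to $O(k)$ amortised) cut arcs \emph{and} that on half-integral assignments the cut value is \emph{exactly} $z_j$, not merely a lower bound — essentially re-proving, in flow language, the ``$z_j$ equals the $k$-submodular relaxation at half-integral points'' statement from the proof of Theorem~\ref{thm:newlp-halfint}, together with checking that the lattice-of-min-cuts structure survives these gadgets (it does, as long as every gadget is a monotone two-terminal network). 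Everything else is routine bookkeeping.
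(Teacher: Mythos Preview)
Your overall architecture matches the paper: build a flow network with $k$ vertices per variable, prove that minimum $s$--$t$ cuts of value $\min f'$ correspond to optimal assignments, and extract an extreme optimum from the residual graph. The running-time bookkeeping is also correct. But there are two real gaps.

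\textbf{Extracting the extreme minimum cut.} You claim that a single residual-graph reachability pass (giving the unique minimal, or the unique maximal, minimum cut in the Picard--Queyranne lattice) suffices to produce an extreme solution. It does not. In the paper's encoding a variable $v$ is assigned $0$ exactly when \emph{none} of its $k$ vertices lies in the cut, so ``extreme'' means the \emph{maximal normalised} minimum cut, i.e.\ the largest cut that still contains at most one vertex from each $X_v$. The minimal minimum cut is normalised but is the smallest such cut, hence typically dominated. The maximal minimum cut may well contain several vertices from the same $X_v$; if you then normalise it (drop every $X_v$ that is hit twice), the result is a minimum cut but need not be extreme --- a normalised minimum cut containing exactly one of those dropped vertices can still exist and would dominate it. The paper handles this with a dedicated SCC-based procedure (Algorithm~\ref{alg:extreme} and Lemma~\ref{lem:algextreme}): start from the minimal cut and greedily absorb SCCs of the residual graph whose addition keeps the cut normalised, arguing that when no such SCC remains, no larger normalised minimum cut can exist. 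Some argument of this kind is unavoidable; ``take the extremal element of the cut lattice'' is not enough.

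\textbf{Enforcing normalisation structurally.} You propose to force $v_{i,a}+v_{i,b}\le 1$ by routing all $v_{i,d}$ through a common vertex, or by a path-of-length-two gadget. For $k>2$ it is unclear how to do this while still allowing the per-constraint gadgets (bidirectional arcs for $(x=\pi(y))$, etc.) to attach to the individual $v_{i,d}$'s with the correct capacities. The paper takes the opposite route: it does \emph{not} enforce normalisation in the network at all. Instead it proves, gadget by gadget (Lemmas~\ref{lem:netunary}--\ref{lem:netor}), that the network is ``$k$-submodular'' in the sense $c(S)\ge c(\nu(S))$ for every cut $S$, so normalising never increases the cut value. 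This is what makes the SCC expansion in the previous paragraph sound. Your sketch needs either a concrete working structural gadget for general $k$, or this $k$-submodularity argument.

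A minor point: the theorem is about \emph{binary} basic $k$-submodular functions, so the soft wide equality of unbounded arity is irrelevant here and should be dropped from your discussion.
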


Let $x^*$ be the obtained extreme minimum solution of the function $f'$.
Then, for any variable $v\in X$ such that $x^*_v=0$ and for any value $i\in
D$, fixing $x_v$ to $i$ together with the integral part of $x^*$ strictly
increases the optimal value of $f'$.
Thus Theorem~\ref{thm:linear} implies the following corollary.
\begin{corollary}\label{cor:linear}
If a function $f$ can be relaxed to a sum of $m$ binary basic $k$-submodular
functions $f'$, then it can be minimised in $O(k^{2(\min f-\min f')+1} m+(\min
f')km)$ time.
\end{corollary}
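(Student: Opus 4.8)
The plan is to run the persistence-based branching of Lemma~\ref{lm:relaxfpt} (equivalently Corollary~\ref{cor:ksubmod:direct}) but with Theorem~\ref{thm:linear} as the black-box solver for the relaxation, and to do the bookkeeping carefully enough that the polynomial factor stays additive instead of being multiplied into the branching tree.

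First I would reduce to the integer-valued case by passing from $f'$ to $2f'$, which is again a sum of $m$ binary basic $k$-submodular functions (scaling by a positive constant preserves $k$-submodularity) and is now $\mathbb{N}$-valued, so Theorem~\ref{thm:linear} applies; the factor $2$ is exactly the scaling factor of the relaxation and explains the ``$2$'' in the exponent. By Lemma~\ref{lemma:ksubmodpers} this relaxation is persistent for the split $D=\{1,\dots,k\}$, $D'=D\cup\{0\}$ and contains all hard constants, so the hypotheses of Lemma~\ref{lm:relaxfpt} hold. Run Theorem~\ref{thm:linear} once to obtain an \emph{extreme} minimum solution $x^*$ of $2f'$, in $O((\min f')km)$ time, and fix the integral part of $x^*$ (legitimate by persistence, and it leaves the optimum unchanged). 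By the remark preceding the corollary, every remaining $0$-variable is such that fixing it to any value in $D$ strictly raises the relaxed optimum, so we are exactly in the ``branch'' case of Lemma~\ref{lm:relaxfpt}: pick an unassigned variable $v$, branch on the $k$ options $v=d$, $d\in D$, re-solve the relaxation and again fix the new integral part, recurse; prune a branch once its relaxed optimum exceeds the current budget, and read off an optimal integral assignment at any leaf whose extreme solution is fully integral. As $\min f$ is not known in advance, iterate over guesses $\min f',\min f'+1,\dots$, the cost being dominated by the correct guess $p:=\min f-\min f'$.

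For the running time, each branch raises the scaled relaxed optimum by at least $1$, so for the correct guess the search tree has depth at most $2p$ and branching factor $k$, i.e.\ $O(k^{2p})$ nodes. The one-time initial solve costs $O((\min f')km)$, and the remaining calls are handled incrementally rather than from scratch: after adding a constraint $v=d$ to a node, the parent's flow is reused and only re-augmented, and since one such constraint moves the relaxed optimum by only a bounded amount this takes $O(km)$ per node (a constant number of shortest-augmenting-path searches on a network of size $O(km)$). Hence the branching contributes $O(k^{2p}\cdot km)=O(k^{2p+1}m)$, for a total of $O(k^{2p+1}m+(\min f')km)$.

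I expect the incremental step to be the main obstacle: one must insert the ``fix $v=d$'' constraint so that the parent's flow stays feasible (or is cheaply repaired), argue that only $O(1)$ re-augmentations are then needed because the relaxed optimum shifts by a bounded amount, and handle backtracking in the search tree without recomputation, all while checking that the ``extreme'' guarantee of Theorem~\ref{thm:linear} is preserved. A coarser argument that re-invokes Theorem~\ref{thm:linear} from scratch at each node is immediate but only gives $O(k^{2p+1}(\min f')m)$; getting the polynomial factor out of the base of the branching is where the real work lies.
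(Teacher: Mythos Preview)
Your proposal is correct and follows essentially the same approach as the paper: use the extreme minimum solution from Theorem~\ref{thm:linear} so that every remaining $0$-variable is a genuine branching point, branch with factor $k$ and depth $2(\min f-\min f')$, and avoid the multiplicative $(\min f')$ factor by reusing the parent's flow and only re-augmenting. The paper likewise flags the incremental re-augmentation as the nontrivial step and does not give details either, deferring to~\cite{Iwata14}; so your identification of this as ``the main obstacle'' matches the paper's own treatment.
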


Here, a naive algorithm takes $O(k^{2(\min f-\min f')+1}(\min f) m)$ time because it takes $O((\min f)k
m)$ time to compute an extreme minimum solution on each branching node.
However, we can easily separate the coefficient of $\min f$ because we can reuse the
previous minimum solution before a branching to recompute the new minimum
solution after the branching by searching augmenting paths of a network.
Since this optimisation is not important to achieve linear-time complexity, we omit the detail here and refer to
\cite{Iwata14} for a detail discussion.

As we have seen in Section~\ref{section:discrete-relaxation}, both clause-deletion and variable-deletion versions of 
\textsc{Almost 2-SAT} admit binary basic bisubmodular relaxations.
Thus Corollary~\ref{cor:linear} implies that they can be
solved in $O(4^k m)$ time where $m$ is the number of clauses
(as was also shown in~\cite{Iwata14}).
Moreover, as we have seen in Section~\ref{section:ksubmod}, edge-deletion \textsc{Unique Label Cover} admits
a binary basic $|\Sigma|$-submodular relaxation.
Thus it can be solved in $O(|\Sigma|^{2p}m)$ time where $m$ is the number of
edges.

In order to prove Theorem~\ref{thm:linear}, we first introduce some definitions.
For a directed graph $G=(V,E)$ and its vertex subset $S\subseteq V$, we denote the edges outgoing from $S$ by
$\delta^+(S)$ and the edges incoming to $S$ by $\delta^-(S)$.
When $S$ is a single-element set $\{v\}$, we write $\delta^+(v)$ and $\delta^-(v)$, respectively.
For a function $f:U\rightarrow\mathbb{R}$, we denote the sum of $f(a)$ over $a\in S\subseteq U$ by $f(S)=\sum_{a\in
S}f(a)$.
A vertex set $S\subseteq V$ is called \textit{closed} if $\delta^+(S)$ is an
empty set.
A vertex set $S\subseteq V$ is called \textit{strongly connected} if for any two
vertices $u,v\in S$, there is an directed path from $u$ to $v$ in $S$.
It is known that we can compute strongly connected components in $O(|V|+|E|)$
time.
We call a strongly connected component by an \textit{scc} for short.

A \textit{network} is a pair $(G,c)$ of a directed graph $G=(V,E)$ and a capacity function
$c:E\rightarrow\mathbb{R}_{\geq 0}$.
For $s,t\in V$, an \textit{$s$-$t$ flow} of amount $M$ is a function $f:E\rightarrow\mathbb{R}_{\geq 0}$
that satisfies $f(e)\leq c(e)$ for any $e\in E$ and
\begin{align*}
f(\delta^+(v))-f(\delta^-(v))&=\begin{cases}
M & \text{for } v=s,\\
-M & \text{for } v=t,\\
0 & \text{for any } v\in V\setminus\{s,t\}.
\end{cases}
\end{align*}
For convenience, we define $c(e)=f(e)=0$ if $e\not\in E$.
A vertex subset $S$ is called an $s$-$t$ cut if $s\in S$ and $t\not\in S$, and its \textit{capacity} is defined as
$c(S)=c(\delta^+(S))$.
The \textit{residual graph} of a network $(G,c)$ with respect to a flow $f$ is the
directed graph $G_f=(V,E_f)$ with $E_f=\{(u,v) \mid f(u,v)<c(u,v) \text{ or }
f(v,u)>0\}$.

Let $f:D'^X\rightarrow\mathbb{R}$ be a function on a domain $D'=\{0,1,2,\ldots,k\}$.
Now, we aim to express $f$ as cuts of a network.
For a variable $v\in X$, we denote a vertex set $\{v_i\mid i\in D\}$ by $X_v$.
An \textit{$(X,k)$-network} is a network on vertices $V=\bigcup_{v\in X}X_v\cup \{s,t\}$.
For an assignment $\phi:X\rightarrow D'$, we define the \textit{$s$-$t$ cut
corresponding to $\phi$} as the set of vertices consisting of $v_{\phi(v)}$ for
each variable $v\in X$ such that $\phi(v)\neq 0$ together with $s$, which is denoted as $S_\phi$.
That is, $S_\phi=\{s\}\cup \{v_{\phi(v)}\mid v\in X, \phi(v)\neq 0\}$.
If an $s$-$t$ cut contains at most one vertex from each $X_v$, it is called \textit{normalised}.
Note that $S_\phi$ is a normalised cut for any $\phi$.
For a normalised cut $S$, we define the \textit{assignment corresponding to $S$}
as $\phi_S(v)=i$ if $S\cap X_v=\{v_i\}$ and $\phi_S(v)=0$ if $S\cap X_v=\emptyset$.
We say that an $(X,k)$-network \textit{represents} $f$ if for any
assignment $\phi:X\rightarrow D'$, the capacity of the corresponding cut $S_\phi$ is equal to the value of the function $f(\phi)$.
We say that a function $f$ is \textit{representable} if there is an
$(X,k)$-network that represents $f$.
For an $s$-$t$ cut $S\subseteq V$, we define the \textit{normalised cut of $S$},
which is denoted by $\nu(S)$, as the set of vertices consisting of $S\cap X_v$ for each variable $v\in X$ such that $|S\cap X_v|=1$ together with $s$.
That is, $\nu(S)=\{s\}\cup\{v_i\mid v\in X, S\cap X_v=\{v_i\}\}$.
We say that an $(X,k)$-network is \textit{$k$-submodular} if for
any $s$-$t$ cut $S$, it holds that $c(S)\geq c(\nu(S))$.
If there is a $k$-submodular $(X,k)$-network that represents a function $f$, we
say that $f$ is \textit{$k$-submodular representable}.
A normalised minimum cut $S$ is called \textit{dominated} by a normalised
minimum cut $S'$ if it holds that $S\subset S'$.
If there are no such $S'$, we say that $S$ is an \textit{extreme} minimum cut.

\begin{lemma}
Let $f:D'^X\rightarrow\mathbb{R}$ be a sum of functions $f_1,\ldots,f_m$.
If for each summand function $f_i$ on variables $Y_i\subseteq X$, there exists an
$(Y_i,k)$-network $(G_i=(\bigcup_{v\in Y_i}X_v\cup\{s,t\},E_i),c_i)$,
then their sum $(G=(\bigcup_{v\in X}X_v\cup\{s,t\}, \bigcup_{i=1}^mE_i), \sum_{i=1}^m c_i)$ is an
$(X,k)$-network that represents $f$.
If each network is $k$-submodular, then the sum of the networks is also $k$-submodular.
\end{lemma}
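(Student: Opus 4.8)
The plan is to reduce both claims to a single decomposition identity. Write $V_i := \bigcup_{v\in Y_i}X_v\cup\{s,t\}$ for the vertex set of $G_i$, extend each $c_i$ by $c_i(e)=0$ for $e\notin E_i$, and recall that (as is surely intended) each $G_i$ represents $f_i$. The first step is to establish that for \emph{every} $s$-$t$ cut $S$ of the sum network $G$,
\[
c(S)=\sum_{i=1}^m c_i(S\cap V_i).
\]
This holds because every edge $e\in E_i$ has both endpoints inside $V_i$; hence $e\in\delta^+_G(S)$ iff the tail of $e$ lies in $S\cap V_i$ and the head does not, i.e.\ $\delta^+_G(S)\cap E_i=\delta^+_{G_i}(S\cap V_i)$. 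Since $c=\sum_i c_i$, summing capacities over $\delta^+_G(S)$ and interchanging the two summations gives the identity. The one point needing care here is to treat $\bigcup_i E_i$ as a genuine set with $c(e)=\sum_{i:\,e\in E_i}c_i(e)$ (not a multiset); this is the mildly fiddly bookkeeping step, and the one I would write out most carefully, though it involves no new idea.

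For the representation claim, I would take an arbitrary $\phi:X\to D'$ and note that $S_\phi\cap V_i=S_{\phi_i}$, where $\phi_i:=\phi|_{Y_i}$: indeed $s$ is in both and $t$ in neither; for $v\in Y_i$ the vertex $v_{\phi(v)}$ (present when $\phi(v)\neq 0$) lies in $X_v\subseteq V_i$; and for $v\notin Y_i$ it lies outside $V_i$. Substituting $S=S_\phi$ into the decomposition identity and using that $G_i$ represents $f_i$ gives $c(S_\phi)=\sum_i c_i(S_{\phi_i})=\sum_i f_i(\phi_i)=f(\phi)$, as required.

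For the $k$-submodularity claim, I would fix an arbitrary $s$-$t$ cut $S$ of $G$ and first observe that normalisation commutes with restriction to $V_i$, i.e.\ $\nu(S\cap V_i)=\nu(S)\cap V_i$; this is immediate because $X_v\subseteq V_i$ for $v\in Y_i$ forces $(S\cap V_i)\cap X_v=S\cap X_v$, and $s$ is adjoined in both normalisations. Then, applying the decomposition identity to both $S$ and $\nu(S)$ and invoking the $k$-submodularity of each $G_i$ on the cut $S\cap V_i$,
\[
c(S)=\sum_i c_i(S\cap V_i)\ \geq\ \sum_i c_i(\nu(S\cap V_i))=\sum_i c_i(\nu(S)\cap V_i)=c(\nu(S)),
\]
which is exactly the defining inequality of $k$-submodularity for $G$. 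Overall I expect no genuine obstacle: once the decomposition identity is in place everything else is pure indexing, and the only place to stay vigilant is the overlap of the edge sets $E_i$ and the resulting additivity of capacities.
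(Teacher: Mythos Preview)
Your proof is correct and follows exactly the approach the paper takes: both arguments rest on the observation that the capacity of any cut under $\sum_i c_i$ decomposes as the sum of the capacities in the individual networks. The paper dismisses the lemma as ``trivial'' in a single sentence, while you spell out the decomposition identity $c(S)=\sum_i c_i(S\cap V_i)$ and the two auxiliary facts $S_\phi\cap V_i=S_{\phi|_{Y_i}}$ and $\nu(S)\cap V_i=\nu(S\cap V_i)$ explicitly; there is no difference in substance, only in level of detail.
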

\begin{proof}
Trivial because the capacity of the cut on $\sum_{i=1}^m c_i$ is equal to the sum of the capacities of the cut on each
$c_i$.
\end{proof}

\begin{lemma}
If a function $f$ is $k$-submodular representable, then $f$ can be minimised
by computing the minimum $s$-$t$ cut of the network.
\end{lemma}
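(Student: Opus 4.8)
The plan is to show that minimising $f$ over all assignments $\phi : X \to D'$ is the same as computing a minimum $s$-$t$ cut of a representing $k$-submodular network, and that a minimiser of $f$ can be read off from such a cut. The argument is essentially a two-step reduction: first to normalised cuts, then to arbitrary cuts.

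First I would recall the correspondence built up before the lemma: the map $\phi \mapsto S_\phi$ is a bijection from assignments $\phi : X \to D'$ to normalised $s$-$t$ cuts, with inverse $S \mapsto \phi_S$, and since the network \emph{represents} $f$ we have $c(S_\phi) = f(\phi)$ for every $\phi$. Hence
\[
\min_{\phi : X \to D'} f(\phi) \;=\; \min\{\, c(S) : S \text{ is a normalised } s\text{-}t \text{ cut} \,\}.
\]
The next step is to argue that the right-hand side equals the \emph{unrestricted} minimum $s$-$t$ cut value. One inequality is trivial, since every normalised cut is in particular an $s$-$t$ cut. For the other, take an arbitrary $s$-$t$ cut $S$ and pass to $\nu(S)$: it contains $s$ by definition and does not contain $t$ (as $t \notin X_v$ for any $v$), so $\nu(S)$ is again an $s$-$t$ cut, and it is normalised by construction; the defining property of a $k$-submodular network gives $c(\nu(S)) \le c(S)$. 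Applying this to a minimum $s$-$t$ cut shows that some normalised cut attains (or beats) the minimum cut value, so the two minima coincide.

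Combining the two steps, $\min_\phi f(\phi)$ equals the minimum $s$-$t$ cut value of the network, which can be computed by a standard max-flow/min-cut computation. To recover an actual minimiser of $f$, one computes a minimum $s$-$t$ cut $S^*$, replaces it by $\nu(S^*)$ — still a minimum cut by the inequality above, and now normalised — and outputs $\phi_{\nu(S^*)}$; by the representation identity this assignment has cost equal to the minimum cut value, hence is optimal. I do not expect a genuine obstacle here: the only points needing care are verifying that $\nu(S)$ is indeed an $s$-$t$ cut (not just a vertex set), and applying the representation equality and the $k$-submodularity inequality in the correct directions.
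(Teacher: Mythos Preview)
Your proposal is correct and follows essentially the same approach as the paper's proof: use the representation identity $c(S_\phi)=f(\phi)$, apply the $k$-submodularity inequality $c(\nu(S))\le c(S)$ to a minimum cut $S$, and conclude that $\phi_{\nu(S)}$ is a minimiser of $f$. The paper simply compresses your two steps into a single chain $f(\phi_{\nu(S)})=c(\nu(S))\le c(S_\phi)=f(\phi)$ for a minimiser $\phi$ and a minimum cut $S$, but the content is identical.
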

\begin{proof}
Since the network represents $f$, for any assignment $\phi$, it holds that $c(S_\phi)=f(\phi)$.
Let $\phi$ be a minimiser of $f$, and let $S$ be a minimum $s$-$t$ cut of the network.
Because the network is $k$-submodular, $\nu(S)$ is also a minimum $s$-$t$ cut.
Therefore, $f(\phi_{\nu(S)})=c(\nu(S))\leq c(S_\phi)=f(\phi)$ holds.
Since $\phi$ is a minimiser of $f$, $\phi_{\nu(S)}$ is also a minimiser of $f$.
\end{proof}

In order to obtain an extreme minimum solution, we prove the following
one-to-one correspondence between the extreme minimum solution and the extreme minimum cut.
\begin{lemma}\label{lem:corextreme}
Let $f:D'^X\rightarrow\mathbb{R}$ be a function and $(G,c)$ be a
$k$-submodular network that represents $f$.
Then, an assignment $\phi: X\rightarrow D'$ is an extreme minimum
solution if and only if its corresponding cut $S_\phi$ is an extreme minimum
cut.
\end{lemma}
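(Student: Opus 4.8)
The plan is to use the bijection $\phi\mapsto S_\phi$ between assignments $\phi\colon X\to D'$ and normalised $s$-$t$ cuts of $G$, and to check two things: (a) it restricts to a bijection between minimisers of $f$ and minimum $s$-$t$ cuts, and (b) it carries the domination order on assignments to the strict-containment order on normalised cuts. Together these immediately give the claimed equivalence of ``extreme'' on both sides.

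First I would record the bookkeeping that makes $\phi\mapsto S_\phi$ a bijection: for every assignment $\phi$ one has $\phi_{S_\phi}=\phi$, and for every normalised cut $S$ one has $S_{\phi_S}=S$, so $S\mapsto\phi_S$ is its inverse. Since the network represents $f$, $c(S_\phi)=f(\phi)$ for all $\phi$; and since the network is $k$-submodular, every $s$-$t$ cut $S$ satisfies $c(\nu(S))\le c(S)$ with $\nu(S)$ normalised, so $\min_{\phi} f(\phi)=\min_{S\text{ normalised}} c(S)=\min_{S} c(S)$. Hence $\phi$ minimises $f$ if and only if $S_\phi$ is a minimum $s$-$t$ cut, and the bijection above restricts to a bijection between minimisers of $f$ and normalised minimum cuts (this is essentially the content of the preceding lemma).

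The heart of the argument is the claim that for any two assignments $\phi,\psi$ we have $S_\phi\subsetneq S_\psi$ if and only if $\phi$ is dominated by $\psi$. For the forward direction, take $v\in X$ with $\phi(v)\ne 0$; then $v_{\phi(v)}\in S_\phi\subseteq S_\psi$, and since $S_\psi$ is normalised it meets $X_v$ in at most one vertex, which forces $S_\psi\cap X_v=\{v_{\phi(v)}\}$, i.e.\ $\psi(v)=\phi(v)$. Combined with $S_\phi\ne S_\psi$ (hence $\phi\ne\psi$), this is precisely the definition of $\phi$ being dominated by $\psi$. For the converse, if $\phi$ is dominated by $\psi$, then for every $v$ with $\phi(v)\ne 0$ we get $\psi(v)=\phi(v)\ne 0$, so $v_{\phi(v)}=v_{\psi(v)}\in S_\psi$; thus $S_\phi\subseteq S_\psi$, and $\phi\ne\psi$ upgrades this to a strict inclusion. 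Putting the two paragraphs together: $\phi$ is an extreme minimum solution $\iff$ $\phi$ is a minimiser of $f$ not dominated by any other minimiser $\iff$ $S_\phi$ is a normalised minimum cut not strictly contained in any other normalised minimum cut $\iff$ $S_\phi$ is an extreme minimum cut.

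I do not expect a genuine obstacle here; the only step demanding care is the normalisation bookkeeping in the displayed claim --- in particular the inference that $v_{\phi(v)}\in S_\psi$ together with $|S_\psi\cap X_v|\le 1$ forces $\psi(v)=\phi(v)$ --- and keeping the orientation of the domination relation straight (the dominating assignment has the larger support, matching the larger cut). One should also verify that the quantifier ranges in the two notions of ``extreme'' line up: ``extreme minimum solution'' ranges over minimisers of $f$, each of which is an ordinary assignment to $D'$ and hence maps to a normalised cut, while ``extreme minimum cut'' ranges over normalised minimum cuts, and by the restricted bijection of the second paragraph these two families are in exact correspondence.
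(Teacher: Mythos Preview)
Your proposal is correct and follows the same approach as the paper. The paper's proof is extremely terse---it just proves each contrapositive in one line, writing ``from the definition'' where you carefully unpack the correspondence $S_\phi\subsetneq S_\psi\iff\psi$ dominates $\phi$---but the underlying argument is identical, and your added bookkeeping (the explicit bijection and the check that the quantifier ranges over normalised minimum cuts on both sides) only makes the argument cleaner.
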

\begin{proof}
$(\Rightarrow)$ Let $S$ be a normalised minimum cut.
If there exists a normalised minimum cut $S'$ that dominates $S$, then, from the
definition, it holds that $\phi_S\neq \phi_{S'}$ and $\phi_{S}(v)\neq 0\Rightarrow\phi_{S}(v)=\phi_{S'}(v)$.
Thus, $\phi_S$ is not an extreme minimum solution.

$(\Leftarrow)$ Let $\phi$ be a minimum solution.
If there exists a minimum solution $\phi'$ that dominates $\phi$,
then, from the definition, it holds that $S_{\phi}\subset S_{\phi'}$.
Thus, $S_{\phi}$ is not an extreme minimum cut.
\end{proof}

From the above lemma, in order to compute an extreme minimum solution, it suffices
to compute an extreme minimum cut.
In order to compute an extreme minimum cut, we introduce the following one-to-one
correspondence between the minimum $s$-$t$ cut and the closed vertex set of the
residual graph.
\begin{lemma}[Picard and Queyranne~\cite{PQ80}]\label{lem:pq}
For any network, its two vertices $s,t$, and its maximum $s$-$t$ flow $f$, an
$s$-$t$ cut $S$ is a minimum cut if and only if $S$ is a closed set containing $s$ in
the residual graph with respect to $f$.
\end{lemma}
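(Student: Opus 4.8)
This is the classical Picard--Queyranne characterisation, and the plan is to derive it from the max-flow min-cut theorem together with the standard complementary-slackness description of minimum cuts. First I would recall that, since $f$ is a maximum $s$-$t$ flow, its value $M:=f(\delta^+(s))-f(\delta^-(s))$ equals the minimum $s$-$t$ cut capacity. Then, for an \emph{arbitrary} $s$-$t$ cut $S$, summing the flow-conservation equations over all $v\in S$ (using $s\in S$, $t\notin S$, and the cancellation of edges internal to $S$) gives $M=f(\delta^+(S))-f(\delta^-(S))$. Since $0\le f(e)\le c(e)$ for every edge, this yields
\[
c(S)=c(\delta^+(S))\ \ge\ f(\delta^+(S))\ \ge\ f(\delta^+(S))-f(\delta^-(S))\ =\ M,
\]
with equality throughout if and only if (a) $f(e)=c(e)$ for every $e\in\delta^+(S)$ and (b) $f(e)=0$ for every $e\in\delta^-(S)$. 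In particular $S$ is a minimum cut iff (a) and (b) both hold.

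The second step is to translate the conditions (a)--(b) into the residual graph. By the definition $E_f=\{(u,v)\mid f(u,v)<c(u,v)\text{ or }f(v,u)>0\}$, some edge of $G_f$ leaves $S$ --- i.e.\ there is $(u,v)\in E_f$ with $u\in S$, $v\notin S$ --- precisely when either some $(u,v)\in\delta^+(S)$ is unsaturated ($f(u,v)<c(u,v)$) or some $(v,u)\in\delta^-(S)$ carries positive flow ($f(v,u)>0$). Hence $\delta^+_{G_f}(S)=\emptyset$ --- that is, $S$ is a closed set in $G_f$ --- if and only if (a) and (b) hold. Combining the two steps: $S$ is a minimum $s$-$t$ cut $\iff c(S)=M \iff$ (a)~and~(b) $\iff S$ is closed in $G_f$; and $S$ already contains $s$ (and excludes $t$) by virtue of being an $s$-$t$ cut, which is exactly the claimed equivalence.

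Since this is a well-known result there is no genuine obstacle; the only points needing care are matching the residual-graph convention used here (in particular the convention $c(e)=f(e)=0$ for $e\notin E$, so that non-edges never create residual arcs) and the forward/backward bookkeeping of cut edges in the equality analysis. As a sanity check one can note that maximality of $f$ also implies there is no $s$-$t$ path in $G_f$, which is consistent with (though not needed for) the characterisation, since any closed set containing $s$ must contain everything reachable from $s$.
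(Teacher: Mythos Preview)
Your argument is correct and is the standard derivation of the Picard--Queyranne characterisation. Note, however, that the paper does not supply its own proof of this lemma at all: it is stated as a cited result from~\cite{PQ80} and used as a black box. So there is nothing to compare against; your write-up simply fills in what the paper takes for granted, and does so cleanly. The one minor remark is that the clause ``containing $s$'' in the lemma is redundant once $S$ is already assumed to be an $s$-$t$ cut, which you correctly observe at the end.
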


Note that a maximum $s$-$t$ flow in the lemma is arbitrary.
This lemma reveals a nice structure of the all minimum cuts:
although there exist exponentially many minimum cuts in a network, we can find
an extreme one in linear-time as the following lemma.

\begin{lemma}\label{lem:algextreme}
Let $(G,c)$ be a $k$-submodular $(X,k)$-network and $f$ be a maximum $s$-$t$ flow of the
network.
Then, an extreme minimum cut of the network can be computed in $O(|V|+|E|)$
time.
\end{lemma}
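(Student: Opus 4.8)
The plan is to move entirely into the residual graph via Picard and Queyranne (Lemma~\ref{lem:pq}) and then extract a maximal normalised minimum cut by one greedy sweep over strongly connected components. First I would build the residual graph $G_f$ from the given maximum flow $f$ in $O(|V|+|E|)$ time; by Lemma~\ref{lem:pq} the minimum $s$-$t$ cuts are exactly the closed vertex sets of $G_f$ that contain $s$ (equivalently, avoid $t$). Let $R_s$ be the set of vertices reachable from $s$ and $R_t$ the set of vertices from which $t$ is reachable, both found by a single BFS/DFS; then $R_s$ is the smallest minimum cut and $S_{\max}:=V\setminus R_t$ is the largest one, and $S_{\max}$ is itself closed (no edge leaves it, else it would reach $t$). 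This is the only place $k$-submodularity enters: since the network is $k$-submodular, $c(\nu(S))\le c(S)$ for every cut, and $\nu(S)$ is again an $s$-$t$ cut with $\nu(S)\subseteq S$, so $\nu$ maps minimum cuts to (normalised) minimum cuts; applying this to $S=R_s$ gives $R_s\subseteq\nu(R_s)\subseteq R_s$, i.e.\ $R_s=\nu(R_s)$ is normalised. Thus normalised minimum cuts exist and every minimum cut lies between the normalised cut $R_s$ and $S_{\max}$.

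Next I would compute the strongly connected components of $G_f[S_{\max}]$ and its condensation; a subset of $S_{\max}$ is a minimum cut iff it is a union of sccs closed under following edges of the condensation. I process these sccs in reverse topological order (each scc considered only after all of its out-neighbours), maintaining a current set $T$ (initially empty) and, for every CSP variable $v$, a counter recording $|T\cap X_v|$. An scc $C$ is added to $T$ exactly when all of its out-neighbours already lie in $T$ and adding the vertices of $C$ leaves every counter at most $1$. The sccs making up $R_s$ are absorbed automatically: they come early in the order, their out-neighbours also lie in $R_s$, and by the previous paragraph adding them never breaks normalisation. The output $T$ is closed by construction, contains $s$, avoids $t$, hence is a minimum cut by Lemma~\ref{lem:pq}, and it is normalised; maintaining the counters incrementally, the whole procedure runs in $O(|V|+|E|)$.

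It remains to show that $T$ is \emph{extreme}, i.e.\ maximal under inclusion among normalised minimum cuts; this exchange argument is the main obstacle, the rest being standard linear-time graph work. Suppose some normalised minimum cut $T'$ satisfies $T\subsetneq T'$, and let $C$ be the first scc, in the processing order, with $C\subseteq T'$ but $C\not\subseteq T$. Every out-neighbour of $C$ lies in $T'$ by closedness and is processed before $C$; since $C$ is the first scc of $T'\setminus T$, all of those out-neighbours were already in $T$ when $C$ was examined. Hence $C$ can only have been rejected because of a normalisation clash: some $v_i\in C$ with $v_{i'}\in T$ for the same variable $v$ and $i\ne i'$. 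But then $v_i,v_{i'}\in T'\cap X_v$ (using $C\subseteq T'$ and $T\subseteq T'$), contradicting that $T'$ is normalised. So no such $T'$ exists, $T$ is an extreme minimum cut, and (should an extreme minimum \emph{solution} be wanted) one reads it off via Lemma~\ref{lem:corextreme} within the same time bound.
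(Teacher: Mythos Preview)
Your approach is essentially the paper's: reduce to closed sets of $G_f$ via Picard--Queyranne, show $R_s$ is normalised by $k$-submodularity, then greedily absorb strongly connected components while preserving normalisation, and argue maximality by looking at the first scc of a hypothetical larger normalised cut. The extremality argument in your last paragraph is correct and matches the paper's.

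There is, however, a real gap in the middle paragraph. You start from $T=\emptyset$ and claim the sccs of $R_s$ are ``absorbed automatically'' because ``they come early in the order''. An arbitrary reverse topological order does \emph{not} put $R_s$ first: any sink scc of $G_f[S_{\max}]$ outside $R_s$ is equally eligible to be processed first. If such an outside scc contains some $v_j$ and gets added to $T$, a later scc of $R_s$ containing $v_i$ (same variable, $i\neq j$) would be rejected by your normalisation test, and then $s$'s scc (whose out-neighbours lie in $R_s$) could in turn be rejected, so $T$ would fail to contain $s$ at all. Your argument gives no reason why this cannot happen.

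In fact the algorithm you wrote is correct, but for a reason you did not supply and which again uses $k$-submodularity: if $v_i\in R_s$ and $j\neq i$, then $v_j\notin S_{\max}$. Indeed, if $v_i,v_j\in S_{\max}$ then $\nu(S_{\max})$ is a minimum cut with $\nu(S_{\max})\cap X_v=\emptyset$; but every minimum cut contains $R_s\ni v_i$, a contradiction. Hence no scc of $R_s$ can ever clash with anything in $S_{\max}$, and each is added as soon as its out-neighbours are. The paper sidesteps this entirely by initialising the working set to $R_s$ rather than $\emptyset$ and only then expanding; that is the simplest fix to your write-up as well.
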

\begin{proof}
The algorithm is described in Algorithm~\ref{alg:extreme}.
First, we compute the strongly connected components of the residual graph $G_f$.
From Lemma~\ref{lem:pq}, for each strongly connected component $T$, any minimum
cut must contain all of $T$ or none of $T$.
Then we compute the vertex set $S$ reachable from $s$ in $G_f$.
Since this is a closed set containing $s$, it is a minimum cut.
Suppose that $S$ is not a normalised cut.
Since the network is $k$-submodular, $\nu(S)\subset S$ is also a minimum cut.
From Lemma~\ref{lem:pq}, this means that there are no outgoing edges from $\nu(S)$ in $G_f$, which contradicts the fact
that $S$ is the set reachable from $s$.
Thus, $S$ is a normalised minimum cut.
From now on, we modify $S$ to be an extreme minimum cut by
expanding it.
Let $T\subseteq V\setminus S$ be a strongly connected component that satisfies
the following two conditions:
\begin{enumerate}
  \item All the outgoing edges from $T$ are coming into $S$.
  \item The cut $S\cup T$ is normalised.
\end{enumerate}
If there exists a strongly connected component $T$ that satisfies the first condition, the cut $S\cup T$ also
becomes a closed set.
Thus it is a minimum cut.
If there exists $T$ that satisfies both of the conditions,
we can obtain a new normalised minimum cut by expanding $S$ to $S\cup T$.
If there are no such $T$, $S$ is an extreme cut.
This is because any minimum cut $S'\supset S$ must contain at least one of the
strongly connected components that satisfy the condition 1, but including any
of them does not lead to a normalised cut.

Finally, we analyze the running time of the algorithm.
We can compute the strongly connected components in $O(|V|+|E|)$ time.
In order to efficiently find a strongly connected component that satisfies the condition 1, for each strongly
connected component $T$, we keep track of the number of edges outgoing from $T$ to the vertices outside $S$.
If this number is zero, it satisfies the condition 1.
When updating $S$ to $S\cup T$, for each edge $uv\in\delta^-(T)$, we decrement the number for the strongly connected
component that contains $u$.
This takes only $O(|\delta^-(T)|)$ time for each $T$.
Thus it takes only $O(|E|)$ time in total.
If a strongly connected component $T$ does not satisfy the condition 2 for some
$S$, it will never satisfy the condition for any $S'\supset S$.
Therefore, we don't have to check the same strongly connected component
multiple times.
Thus the total running time is $O(|V|+|E|)$.
\end{proof}

\begin{algorithm}[tb]
\caption{Algorithm to compute an extreme minimum cut.}
\label{alg:extreme}
\begin{algorithmic}[1]
\INPUT the residual graph $G_f$ of an $(X,k)$-network
\OUTPUT an extreme minimum cut
\State compute the strongly connected components
\State $S\leftarrow$ the vertices reachable from $s$
\While{$\exists$ unchecked scc $T$ such that $\delta^+(T)\subseteq S$}
	\If{$S\cup T$ is a normalised cut}
		\State $S\leftarrow(S\cup T)$
	\EndIf
\EndWhile
\State \Return $S$
\end{algorithmic}
\end{algorithm}

Now we show that any binary basic
$k$-submodular function is $k$-submodular representable.
For the definition of the basic $k$-submodular functions, please refer to Lemma~\ref{lemma:usefulksubmod}.

\begin{figure}[tb]
 \begin{minipage}{0.38\hsize}
  \begin{center}
  \includegraphics[width=\hsize]{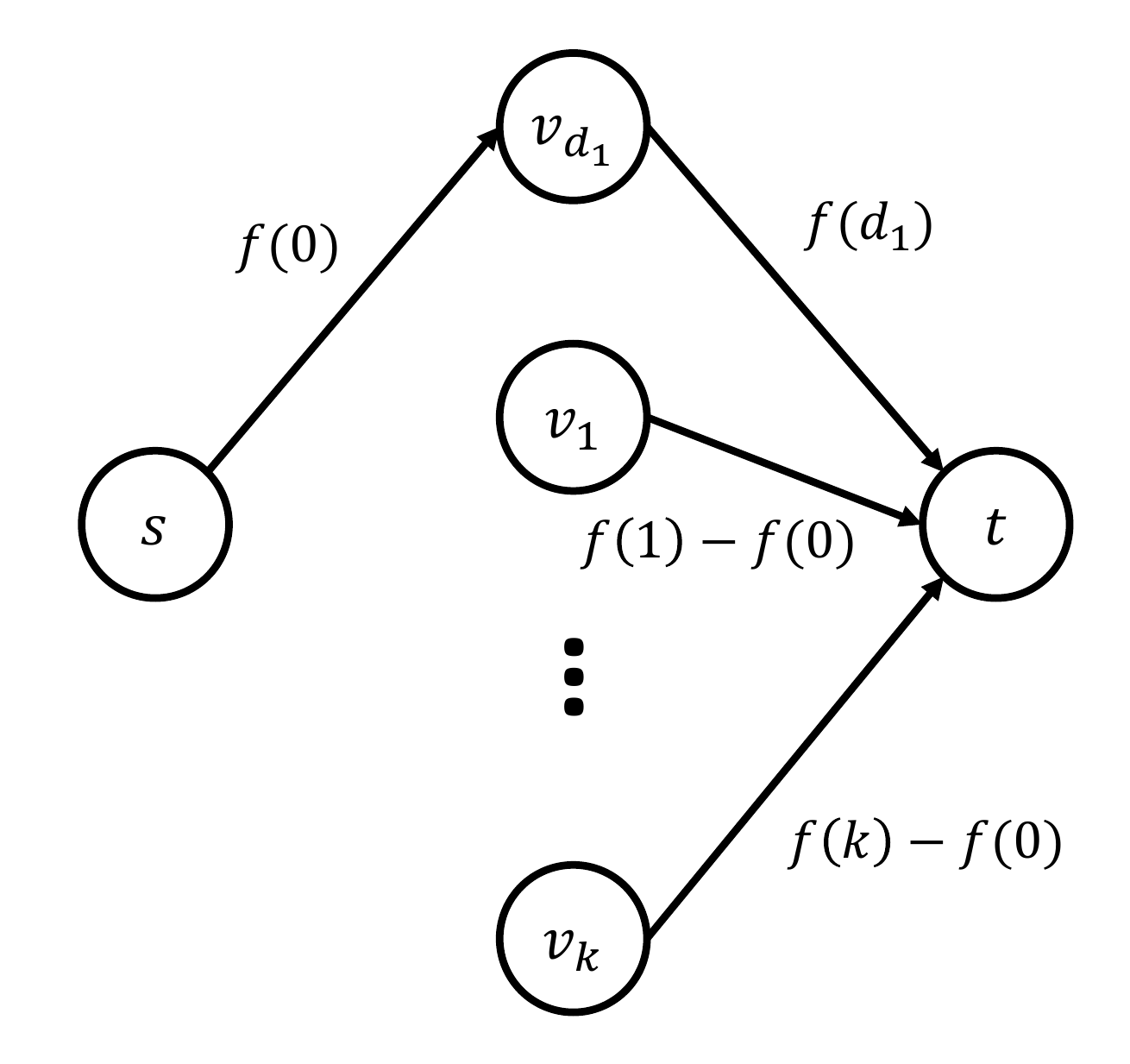}
  \end{center}
  \caption{Unary $f(v)$}
  \label{fig:unary}
 \end{minipage}
 \begin{minipage}{0.3\hsize}
  \begin{center}
  \includegraphics[width=\hsize]{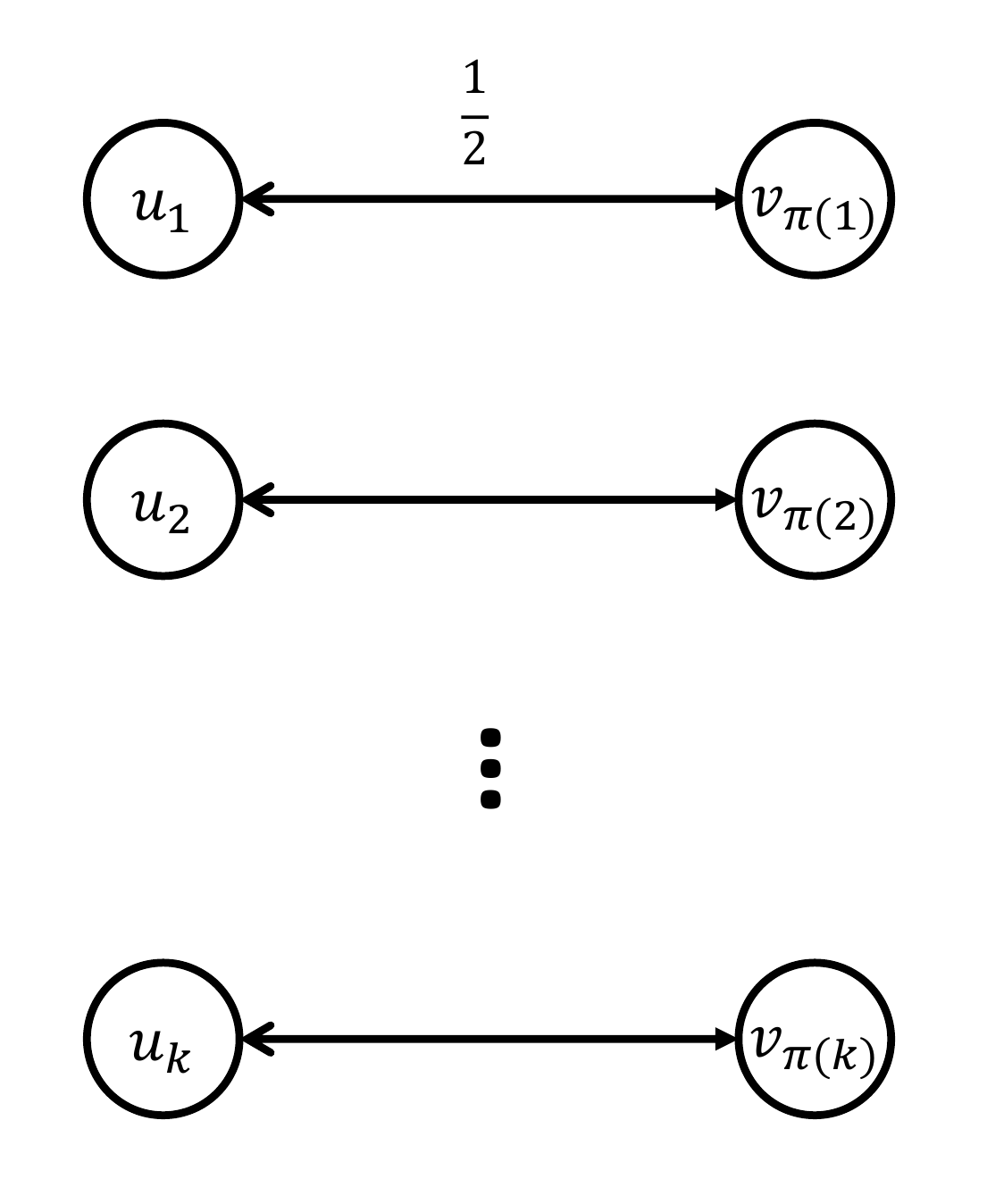}
  \end{center}
  \caption{$(v=\pi(u))$}
  \label{fig:perm}
 \end{minipage}
 \begin{minipage}{0.3\hsize}
  \begin{center}
  \includegraphics[width=\hsize]{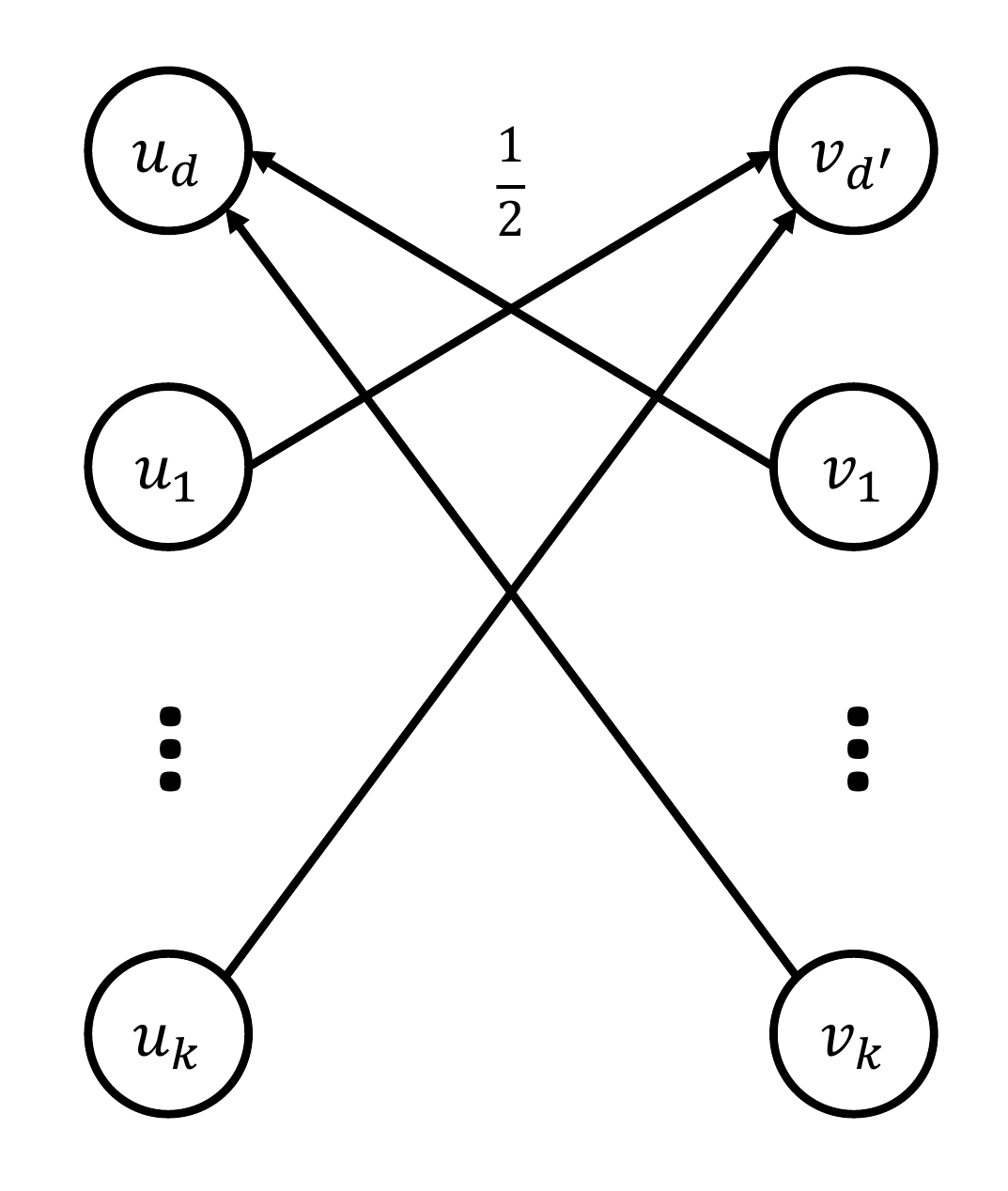}
  \end{center}
  \caption{$(u=d\vee v=d')$}
  \label{fig:or}
 \end{minipage}
\end{figure}

\begin{lemma}\label{lem:netunary}
Any unary function $f:D'\rightarrow\mathbb{R}$ is $k$-submodular
representable.
\end{lemma}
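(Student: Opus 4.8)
The plan is to represent $f$ by an explicit, very small $(\{v\},k)$-network on the vertex set $\{v_1,\dots,v_k,s,t\}$, consisting only of edges $s\to v_i$, edges $v_i\to t$, and a single edge $s\to t$. In such a network an edge $s\to v_i$ of capacity $a$ contributes $a$ to the capacity of every cut $S_\phi$ with $\phi(v)\ne i$, an edge $v_i\to t$ of capacity $b$ contributes $b$ only to the cut with $\phi(v)=i$, and the edge $s\to t$ of capacity $r$ contributes $r$ to every cut. Before fixing the capacities I would note what any representing network must satisfy: capacities are non-negative, hence so are all cut capacities, forcing $f\ge 0$ on $D'$; and the cut function of a network is submodular, so applying $c(S)\ge c(\nu(S))$ to $S=\{s,v_x,v_y\}$ gives $f(x)+f(y)=c(\{s,v_x\})+c(\{s,v_y\})\ge c(\{s,v_x,v_y\})+c(\{s\})\ge 2f(0)$ for $x\ne y$, i.e.\ $f$ must itself be $k$-submodular. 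Both conditions are automatic in our setting --- the unary relaxations of Lemma~\ref{lemma:usefulksubmod} set $f(0)=\tfrac{1}{2}(f(d_1)+f(d_2))$ with $d_1$ minimising $f$ over $D$ and $d_2$ minimising it over $D\setminus\{d_1\}$, which is $k$-submodular, and in Theorem~\ref{thm:linear} the values lie in $\mathbb{N}$ --- so I would assume $f\ge 0$ and $f$ $k$-submodular from the start.

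Next I would fix the capacities so that the cut of the constant assignment $0$ has value $f(0)$ and the cut selecting $v_i$ has value $f(i)$. Put $a_i=\max(0,f(0)-f(i))$ and $b_i=\max(0,f(i)-f(0))$, so that $a_i,b_i\ge 0$ and $b_i-a_i=f(i)-f(0)$. Since $k$-submodularity gives $2f(0)\le f(d_1)+f(d_2)\le 2f(d_2)\le 2f(j)$, we have $f(0)\le f(j)$ for every $j\ne d_1$, so only $i=d_1$ can contribute to $\sum_i a_i$; hence it is legal to set the capacity of $s\to t$ to $r:=f(0)-\sum_i a_i=\min(f(0),f(d_1))\ge 0$. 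A one-line count then gives $c(\{s\})=r+\sum_i a_i=f(0)$ and $c(\{s,v_i\})=r+\sum_{j\ne i}a_j+b_i=f(0)-a_i+b_i=f(i)$, so this network represents $f$.

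The only substantive step that remains is checking $k$-submodularity, $c(S)\ge c(\nu(S))$ for every $s$-$t$ cut $S$. If $|S\cap X_v|\le 1$ then $\nu(S)=S$ and there is nothing to prove; if $T:=S\cap X_v$ has size at least two then $\nu(S)=\{s\}$ and $c(S)-c(\{s\})=\sum_{j\in T}(b_j-a_j)=\sum_{j\in T}(f(j)-f(0))$, which is non-negative since either $d_1\notin T$ (all summands non-negative) or $d_1\in T$ and, as $|T|\ge 2$, the sum is at least $(f(d_1)-f(0))+(f(d_2)-f(0))=f(d_1)+f(d_2)-2f(0)\ge 0$ by $k$-submodularity of $f$. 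I expect this final inequality to be the ``main obstacle'' only in the bookkeeping sense: it is precisely where the $k$-submodularity of $f$ (and the non-negativity needed to make the $s\to t$ capacity legal) is used, and no genuine difficulty arises beyond it.
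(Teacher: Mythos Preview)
Your proof is correct and follows essentially the same approach as the paper: both build a star-shaped network on $\{s,t,v_1,\dots,v_k\}$ with capacities derived from the differences $f(i)-f(0)$, and both verify $c(S)\ge c(\nu(S))$ for $|S\cap X_v|\ge 2$ via the key inequality $f(d_1)+f(d_2)\ge 2f(0)$. The only cosmetic difference is that the paper routes the constant contribution through $v_{d_1}$ (setting $c(s,v_{d_1})=f(0)$ and $c(v_{d_1},t)=f(d_1)$, with no $s\to t$ edge), whereas you use a direct $s\to t$ edge of capacity $\min(f(0),f(d_1))$; replacing your $s\to t$ edge by a length-two path of the same capacity through $v_{d_1}$ preserves every cut value and recovers the paper's network exactly.
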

\begin{proof}
By subtracting the minimum value of $f$, we can assume that $f$ is nonnegative.
Let $d_1=\arg\min_{d\in D} f(x)$.
Then, we construct a $(\{v\},k)$-network as follows (Figure~\ref{fig:unary}):
\begin{itemize}
  \item $c(s,v_{d_1})=f(0)$,
  \item $c(v_{d_1},t)=f(d_1)$,
  \item $c(v_d,t)=f(d)-f(0)$ for any $d\neq d_1$.
\end{itemize}
Note that, for $d\neq d_1$, $f(d)-f(0)\geq 0$ holds because it holds that
$2f(0)\leq f(d_1)+f(d)\leq 2f(d)$.

If $\phi(v)=0$, the capacity of the
corresponding cut is $c(S_\phi)=c(s,v_{d_1})=f(0)$.
If $\phi(v)=d_1$, the capacity of the corresponding cut is
$c(S_\phi)=c(v_{d_1},t)=f(d_1)$.
If $\phi(v)=d$ for $d\neq d_1$, the capacity of the corresponding cut is
$c(S_\phi)=c(s,v_{d_1})+c(v_d,t)=f(d)$.
Thus the network actually represents $f$.

Let $D'\subseteq D$ be a set of size at least $2$ and let
$S=\{s\}\cup\{v_d\mid d\in D'\}$ be a cut.
When $D'$ does not contain $d_1$, let $d_2,d_3$ be distinct elements contained
in $D'$.
Then, $c(S)$ is at least
$c(s,v_{d_1})+c(v_{d_2},t)+c(v_{d_3},t)=f(d_2)+f(d_3)-f(0)$.
Since $f$ is $k$-submodular, $f(d_2)+f(d_3)\geq 2f(0)$.
Therefore, $c(S)\geq f(0)=c(\nu(S))$ holds.
When $D'$ contains $d_1$, let $d_2$ be another element contained in $D'$.
Then, $c(S)$ is at least
$c(v_{d_1},t)+c(v_{d_2},t)=f(d_1)+f(d_2)-f(0)\geq f(0)$.
Therefore, $c(S)\geq c(\nu(S))$ holds.
Thus the network is actually $k$-submodular.
\end{proof}

\begin{lemma}\label{lem:netperm}
For any permutation $\pi$ on $D$, the basic $k$-submodular relaxation $f$ of the
soft version of a constraint $(x=\pi(y))$ is $k$-submodular representable.
\end{lemma}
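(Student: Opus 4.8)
The plan is to exhibit an explicit $(\{x,y\},k)$-network, verify by a short case analysis that it represents $f$, and then check the submodularity inequality $c(S)\ge c(\nu(S))$ directly. Recall from case~2 of Lemma~\ref{lemma:usefulksubmod} that $f$ on $(D')^2$ is given by $f(0,0)=0$, $f(a,0)=f(0,a)=\half$ for $a\in D$, $f(\pi(b),b)=0$ for $b\in D$, and $f(a,b)=1$ for $a,b\in D$ with $a\neq\pi(b)$. Since $f(0,0)=0$ forces $c(\{s\})=0$, the network can have no outgoing capacity at $s$ (and, as a short check shows, arcs touching $t$ are useless as well). The network I would use (Figure~\ref{fig:perm}) therefore has no arcs incident to $s$ or $t$ at all; it consists, for every $b\in D$, of the two antiparallel arcs $(x_{\pi(b)},y_b)$ and $(y_b,x_{\pi(b)})$, each of capacity $\half$. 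In other words, the gadget matches each vertex $x_{\pi(b)}$ of $X_x$ with the corresponding vertex $y_b$ of $X_y$.

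For representability, observe that the arc pair for a fixed $b$ contributes $\half$ to the capacity of a cut $S$ exactly when precisely one of $x_{\pi(b)},y_b$ lies in $S$, and $0$ otherwise, so $c(S)=\half\cdot|\{b\in D:\text{exactly one of }x_{\pi(b)},y_b\text{ lies in }S\}|$. Applying this to $S=S_\phi$: if $\phi(x)=\phi(y)=0$ the count is $0$; if exactly one of $\phi(x),\phi(y)$ is nonzero the count is $1$; if $\phi(x)=\pi(\phi(y))$ with both nonzero then the pair $\{x_{\phi(x)},y_{\phi(y)}\}$ is jointly inside $S_\phi$ while every other pair is jointly outside, so the count is $0$; and if $\phi(x),\phi(y)$ are nonzero with $\phi(x)\neq\pi(\phi(y))$ then the two distinct pairs $\{x_{\phi(x)},y_{\pi^{-1}(\phi(x))}\}$ and $\{x_{\pi(\phi(y))},y_{\phi(y)}\}$ are each ``half in'', giving count $2$. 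In every case $c(S_\phi)=f(\phi)$.

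For $k$-submodularity, recall that $\nu(S)\subseteq S$ is obtained from $S$ by deleting $S\cap X_v$ for every variable $v$ with $|S\cap X_v|\geq 2$; write $n_x=|S\cap X_x|$ and $n_y=|S\cap X_y|$. If $n_x,n_y\leq1$ then $\nu(S)=S$. If $n_x\geq2$ and $n_y\geq2$ then $\nu(S)=\{s\}$ and $c(\nu(S))=0\leq c(S)$. The one remaining case, up to symmetry, is $n_x\geq2$, $n_y\leq1$: then $\nu(S)$ contains no vertex of $X_x$, so by the formula $c(\nu(S))\leq\half$; on the other hand $\{b:x_{\pi(b)}\in S\}$ has size $n_x\geq2$ and at most one such $b$ can have $y_b\in S$, so at least one pair $\{x_{\pi(b)},y_b\}$ is ``half in'' and $c(S)\geq\half$. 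Hence $c(S)\geq c(\nu(S))$ throughout, the network is $k$-submodular, and $f$ is $k$-submodular representable.

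I do not expect a real obstacle here; the only point needing care is calibrating the capacities so that ``using the value $0$'' at exactly one of the two variables costs $\half$ while two incompatible integral values cost $1$ — this is exactly what pins the capacity $\half$ on each of the two antiparallel arcs — and, in the submodularity step, remembering that $\nu(S)$ keeps the singleton intersections over both variables simultaneously rather than collapsing everything to $\{s\}$.
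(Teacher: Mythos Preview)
Your proof is correct and follows essentially the same approach as the paper: the network you construct is identical (antiparallel arcs of capacity $\half$ between each matched pair $x_{\pi(b)},y_b$, with no arcs touching $s$ or $t$), and both the representability check and the $k$-submodularity check proceed by the same case split on $(|S\cap X_x|,|S\cap X_y|)$. Your unified counting formula $c(S)=\half\cdot|\{b:\text{exactly one of }x_{\pi(b)},y_b\in S\}|$ is a slightly cleaner packaging of the same computation the paper carries out edge by edge.
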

\begin{proof}
Let $u,v$ be variables.
We construct a $(\{u,v\},k)$-network as follows (Figure~\ref{fig:perm}):
\begin{itemize}
  \item $c(u_i,v_{\pi(i)})=\frac{1}{2}$ for any $i\in D$,
  \item $c(v_j,u_{\pi^{-1}(j)})=\frac{1}{2}$ for any $j\in D$.
\end{itemize}

If $\phi(u)=\phi(v)=0$, the capacity of the corresponding
cut is $c(S_\phi)=0=f(\phi)$.
If $\phi(u)=i\in D$ and $\phi(v)=0$, the capacity of the corresponding cut is
$c(S_\phi)=c(u_i,v_{\pi(i)})=\frac{1}{2}=f(\phi)$.
Similarly, if $\phi(u)=0$ and $\phi(v)\neq 0$, the capacity of the corresponding
cut is equal to $f(\phi)$.
If $\phi(u)=i\in D, \phi(v)=j\in D$ and $j=\pi(i)$, the capacity of the
corresponding cut is $c(S_\phi)=0=f(\phi)$.
Otherwise, the capacity of the corresponding cut is
$c(S_\phi)=c(u_i,v_{\pi(i)})+c(v_j,u_{\pi^{-1}(j)})=1=f(\phi)$.
Thus the network actually represents $f$.

Let $S$ be a cut and $I,J$ be two sets such that $I=\{i\in D\mid u_i\in S\}$
and $J=\{j\in D\mid v_j\in S\}$.
If $|I|\leq 1$ and $|J|\leq 1$, the cut $S$ is already normalised.
If $|I|=0$ or $|I|\geq 2$, and $|J|=0$ or $|J|\geq 2$, the capacity of the
normalised cut is $c(\nu(S))=c(\{s\})=0$ and the capacity of the
original cut is nonnegative. Therefore, $c(S)\geq c(\nu(S))$ holds.
If $I=\{i\}$ and $|J|\geq 2$, the capacity of the normalised cut is
$c(\nu(S))=c(\{s,u_i\})=c(u_i,v_{\pi(i)})=\frac{1}{2}$.
Because $\pi$ is a permutation, for at least one $j\in J$, $\pi^{-1}(j)$ is
different from $i$.
Therefore, the capacity of the original cut is at least $\frac{1}{2}$.
Thus, it holds that $c(S)\geq c(\nu(S))$.
Similarly, if $|I|\geq 2$ and $|J|=1$, it holds that $c(S)\geq c(\nu(S))$.
Thus, the network is actually $k$-submodular.
\end{proof}

\begin{lemma}\label{lem:netor}
For any $d,d'\in D$, the basic $k$-submodular relaxation $f$ of the soft version
of a constraint $(x=d\vee y=d')$ is $k$-submodular representable.
\end{lemma}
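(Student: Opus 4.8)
The plan is to exhibit an explicit $k$-submodular $(\{u,v\},k)$-network representing $f$ and then verify the two defining properties by a short case analysis, in the same style as Lemmas~\ref{lem:netunary} and~\ref{lem:netperm}. Recall from Lemma~\ref{lemma:usefulksubmod} that the relaxation $f=f_{d,d'}$ assigns cost $0$ whenever $x=d$, or $y=d'$, or $x=y=0$; cost \half when exactly one of $x,y$ is $0$ and the other takes a ``wrong'' value; and cost $1$ when both $x$ and $y$ take wrong non-zero values. The network I would use (see Figure~\ref{fig:or}) has vertices $X_u\cup X_v\cup\{s,t\}$ and, besides an isolated sink $t$ (as in Lemma~\ref{lem:netperm}), the following edges of capacity \half: an edge $(u_i,v_{d'})$ for every $i\in D\setminus\{d\}$, and an edge $(v_j,u_d)$ for every $j\in D\setminus\{d'\}$. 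The point of the orientation is that $u_d$ and $v_{d'}$ are sinks of this edge set, so choosing $x=d$ kills every contribution on the $v$-side and choosing $y=d'$ kills every contribution on the $u$-side.

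First I would check that the network represents $f$, i.e.\ $c(S_\phi)=f(\phi)$ for each $\phi\colon\{u,v\}\to D'$. If $\phi(u)=d$ then no edge $(u_i,v_{d'})$ has its tail in $S_\phi$ and every present edge $(v_j,u_d)$ has its head $u_d\in S_\phi$, so $\delta^+(S_\phi)=\emptyset$ and $c(S_\phi)=0$; the case $\phi(v)=d'$ is symmetric, and $\phi(u)=\phi(v)=0$ gives $S_\phi=\{s\}$ with $c(S_\phi)=0$. If $\phi$ sets exactly one variable to a wrong value and the other to $0$, then exactly one of the edges $(u_i,v_{d'}), (v_j,u_d)$ has its tail in $S_\phi$ and its head outside, so $c(S_\phi)=\half$; and if both variables take wrong non-zero values, both of those edges leave $S_\phi$, so $c(S_\phi)=1$. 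This matches $f$ in every case.

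Next I would verify $k$-submodularity, i.e.\ $c(S)\ge c(\nu(S))$ for every $s$-$t$ cut $S$; write $I=\{i:u_i\in S\}$ and $J=\{j:v_j\in S\}$. If $|I|\le 1$ and $|J|\le 1$ then $S=\nu(S)$ and there is nothing to show. So suppose $|I|\ge 2$ (the case $|J|\ge 2$ being symmetric, via the swap $u\leftrightarrow v$, $d\leftrightarrow d'$, under which the edge set is invariant). Then $\nu(S)$ contains no vertex of $X_u$, so the only edge that can leave $\nu(S)$ is $(v_j,u_d)$ when $J=\{j\}$ with $j\ne d'$; hence $c(\nu(S))\le\half$, with $c(\nu(S))=\half$ only in that situation. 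But in that situation $d'\notin J$, so every edge $(u_i,v_{d'})$ with $i\in I\setminus\{d\}$ leaves $S$, and since $|I\setminus\{d\}|\ge 1$ this already gives $c(S)\ge\half\ge c(\nu(S))$. Hence the network is $k$-submodular, and the lemma follows.

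The construction of the network is really the only creative step; once the edge set and its orientation are fixed, both verifications are routine bookkeeping. The mildly delicate point, highlighted above, is the $k$-submodularity check: one must rule out the combination $c(S)=0$ with $c(\nu(S))=\half$, which is exactly what the last argument does, by observing that a surviving ``wrong'' singleton in $\nu(S)$ on one side forces $d$ (respectively $d'$) to be absent from $S$ on that side, which in turn forces an edge of capacity \half out of $S$ from the other side.
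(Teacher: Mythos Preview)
Your proof is correct and follows essentially the same approach as the paper: you construct the identical network (edges $(u_i,v_{d'})$ for $i\neq d$ and $(v_j,u_d)$ for $j\neq d'$, each of capacity $\half$), and then verify representation and $k$-submodularity by the same case analysis, only with a slightly different but equivalent organization of the cases in the $k$-submodularity check. The key observation---that a surviving ``wrong'' singleton in $\nu(S)$ on one side forces an outgoing edge of capacity $\half$ from the overloaded side of $S$---is exactly the argument the paper uses.
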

\begin{proof}
Let $u,v$ be variables.
We construct a $(\{u,v\},k)$-network as follows (Figure~\ref{fig:or}):
\begin{itemize}
  \item $c(u_i,v_{d'})=\frac{1}{2}$ for any $i\in D\setminus\{d\}$,
  \item $c(v_j,u_{d})=\frac{1}{2}$ for any $j\in D\setminus\{d'\}$.
\end{itemize}

If $\phi(u)=\phi(v)=0$, $\phi(u)=d$, or $\phi(v)=d'$, the capacity of the
corresponding cut is $c(S_\phi)=0=f(\phi)$.
If $\phi(u)=i\in D\setminus\{d\}$ and $\phi(v)=0$, the capacity of the
corresponding cut is $c(S_\phi)=c(u_i,v_d')=\frac{1}{2}=f(\phi)$.
Similarly, if $\phi(u)=0$ and $\phi(v)\in D\setminus\{d'\}$, the capacity of
the corresponding cut is equal to $f(\phi)$.
If $\phi(u)=i\in D\setminus\{d\}, \phi(v)=j\in D\setminus\{d'\}$, the
capacity of the corresponding cut is
$c(S_\phi)=c(u_i,v_{d'})+c(v_j,u_d)=1=f(\phi)$.
Thus the network actually represents $f$.

Let $S$ be a cut and $I,J$ be two sets such that $I=\{i\in D\mid u_i\in S\}$
and $J=\{j\in D\mid v_j\in S\}$.
If $|I|\leq 1$ and $|J|\leq 1$, the cut $S$ is already normalised.
If $|I|=0$ or $|I|\geq 2$, and $|J|=0$ or $|J|\geq 2$, the capacity of the
normalised cut is $c(\nu(S))=c(\{s\})=0$ and the capacity of the
original cut is nonnegative. Therefore, $c(S)\geq c(\nu(S))$ holds.
If $I=\{d\}$ and $|J|\geq 2$, both of the normalised cut and the original cut
have the capacity zero.
If $I=\{i\}$ for $i\neq d$ and $|J|\geq 2$, since $J$ contains at least one
element $j$ which is different from $d'$, the capacity of the original cut is at
least $c(v_j,u_d)=\frac{1}{2}$.
On the other hand, the capacity of the normalised cut is
$c(\nu(S))=c(u_i,v_d')=\frac{1}{2}$.
Therefore, it holds that $c(S)\geq c(\nu(S))$.
Similarly, if $|I|\geq 2$ and $|J|=1$, it holds that $c(S)\geq c(\nu(S))$.
Thus, the network is actually $k$-submodular.
\end{proof}

Finally, we prove Theorem~\ref{thm:linear}.
\begin{proof}[Proof of Theorem~\ref{thm:linear}]
By using Lemmas~\ref{lem:netunary}--\ref{lem:netor}, 
we can construct a $k$-submodular $(X,k)$-network $(G,c)$ that represents $f$
in $O(|G|)$ time.
Since we create $O(k)$ edges per each summand function $f_i$, the size of the
network is $O(km)$.
Because the capacity of the minimum cut of the network is equal to $\min f$ and
each capacity is a multiple of $\frac{1}{2}$, we can compute the maximum flow of
the network in $O((\min f) k m)$ time.
Then, by using Lemma~\ref{lem:algextreme}, we can compute an extreme minimum cut
in $O(km)$ time.
Finally, by using Lemma~\ref{lem:corextreme}, we can obtain an extreme minimum
solution.
The total running time is $O((\min f)km)$.
\end{proof}

\section{Conclusions and open problems}

We have shown that half-integrality and LP-branching can be powerful
tools for FPT-algorithms, beyond just \textsc{Vertex Cover} and
\textsc{Multiway Cut}. We have outlined how to use CSP tools to find and
study such relaxations. As an application, we have given new half-integral
relaxations for \textsc{Unique Label Cover} and \textsc{Group Feedback
  Vertex Set}, in both cases improving the running time asymptotically
(to single-exponential for fixed label set, resp.\ to unconditionally single-exponential).
Several directions of further study suggest themselves. 
Is there a way to decide the existence of discrete relaxations in general? 
Can directed problems, e.g., \textsc{Directed Feedback Vertex Set} be
handled in a similar manner? Finally, can the basic tool of LP-branching
be complemented with more sophisticated algorithmic approaches (e.g.,
FPT-time separation oracles, or tools from semi-definite programming)?

In other directions, we note that several of the covered problems have
polynomial kernels for specific cases, e.g., \textsc{Group Feedback Vertex
  Set} with bounded-size group~\cite{KratschW12b} and \textsc{Feedback
  Vertex Set}~\cite{Thomasse10}; it is an interesting question how far
this can be generalised.

We also note that oracle minimisation of $k$-submodular functions is an
open question; we also welcome more investigation into $k$-submodular
functions in general (including, e.g., any possible connections to 
path-packing systems as in~\cite{ChudnovskyCG08,ChudnovskyGGGLS06,Pap07,Pap08}, 
and algebraic algorithms generalising those for matching; see also~\cite{Yamaguchi14}). 

As for linear-time complexity, we have shown that edge-deletion \textsc{Unique Label Cover} can be solved in
linear-time.
It is known that \textsc{Multiway Cut}, a special case of \textsc{Unique Label Cover}, can be solved in linear-time even
for the node-deletion version~\cite{Iwata14}.
It is an interesting question whether node-deletion \textsc{Unique Label Cover} can also be solved in linear-time.
In order to obtain linear-time FPT algorithms, we have shown that we can minimize a sum of \textit{basic} binary
$k$-submodular functions via network flow.
We left whether it is possible to minimize a sum of \textit{any} binary $k$-submodular
functions in a similar way or not as an open problem.

\section*{Acknowledgements} 
The second author thanks Marek Cygan, 
Andreas Karrenbauer, Johan Thapper and Stanislav \v{Z}ivn\'y for enlightening discussions. 

\bibliographystyle{abbrv}
\bibliography{ilp-fpt}

\appendix

\section{On the crisp solution structure supported by the algorithms}

Now, we discuss the crisp solution structure supported by bisubmodular and $k$-submodular functions
(in particular, we prove Lemma~\ref{lemma:ksubstructure}). 

To illustrate the topic, let us focus on the (well-understood) case of submodular functions.
It is known that for a submodular function $f: 2^V \rightarrow \R$, one can not only minimise $f(S)$
efficiently in an unconstrained setting, but also subject to a \emph{ring family}. 
Recall that a ring family is a set family ${\cal{F}} \subseteq 2^V$ which is closed under
union and intersection, i.e., if $A, B \in {\cal{F}}$ then $A \cup B, A \cap B \in \cal{F}$.
The constrained optimisation problem is then
$
\min_{S \in {\cal{F}}} f(S),
$
which can be solved in polynomial time even if $f$ is only given via oracle access (see Schrijver~\cite{SchrijverBook}). 

Now observe that the conditions on a ring family are actually \emph{polymorphisms} of the relation $R(S)=(S \in \cal{F})$. 
Indeed, it is known that a relation $R \subseteq 2^V$ is closed under union and intersection
if and only if $R$ can be modelled as the set of solutions to a formula using constraints $(x \rightarrow y)$, 
$(x=0)$, and $(x=1)$ (e.g., the set of closed vertex sets in a digraph). 
Furthermore, if $f$ is a submodular function, then the set of minimising assignments 
${\cal{A}}=\{A \subseteq V: f(A)=\min_S f(S)\}$ is itself closed under union and intersection 
(by applying the submodularity condition $f(A)+f(B) \geq f(A \cap B)+f(A \cup B)$ to two minimising assignments
$A, B \in {\cal{A}}$). 
Thus, if we want to \emph{implement} some crisp solution structure on the search space $2^V$ by only using
the power of submodular functions, then this restriction must take the shape of a ring family, 
and if it does, then it is sufficient to implement the crisp constraints $(x \rightarrow y), (x=0)$, and $(x=1)$,
which can be done by using their soft versions at very high cost; these soft versions are submodular,
which closes the loop.

Expressed more succinctly, if one wants to perform constrained minimisation of a submodular function 
without using any algorithm more powerful than basic (unconstrained) submodular minimisation, 
then the power one has at hand is exactly that of crisp implications and assignments. 
We will investigate the same for functions with bisubmodular or $k$-submodular relaxations. 
Let us finally remark that this is not a restriction on submodular functions themselves; 
submodular functions in general are far more expressive than digraph cut functions
(this has been proven formally in~\cite{ZivnyCJ09}). 

\subsection{Bisubmodular relaxations}

We now consider the bisubmodular case of the above, i.e.,
relaxations of functions $f_i: 2^V \rightarrow \R$
into bisubmodular functions $f_i': \{0,\half,1\}^V \rightarrow \R$.
We consider the structure of the minimising set ${\cal{A}}$ when
restricted to integral assignments (i.e., those half-integral
minimisers of $f'$ which happen to also be integral; 
note that this may well be an empty set). 
We find that \textsc{Bisubmodular Cost 2-SAT} exactly captures its structure. 

\begin{lemma}
\label{lemma:bisub-2cnf}
Let $f: \{0,\half,1\}^V \rightarrow \R$ be a bisubmodular function, and
${\cal{A}} \subseteq \{0,\half,1\}^V$ be its set of minimising assignments.
Then the \emph{integral global minimisers} ${\cal{A}} \cap \{0,1\}^V$ of $f$
can be modelled as the set of solutions to a (crisp) 2-CNF formula $F$
on~$V$. 
\end{lemma}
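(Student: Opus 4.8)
The plan is to show that the set of integral global minimisers of a bisubmodular $f$ is closed under a binary multimorphism whose operations, restricted to $\{0,1\}$, are exactly the meet and join of some lattice order on $\{0,1\}$ — which is the algebraic signature of a 2-CNF-definable relation — and then invoke the characterisation of such relations. Concretely, recall the remapped bisubmodular operators $\sqcap,\sqcup$ on $\{0,\half,1\}$, where $\sqcup$ rounds $(x+y)/2$ away from $\half$ and $\sqcap$ towards $\half$. The first step is the observation that if $A,B\in\{0,1\}^V$ are both integral, then for each coordinate $A(i),B(i)\in\{0,1\}$; a quick case check gives $\{A(i)\sqcap B(i),\ A(i)\sqcup B(i)\}=\{A(i),B(i)\}$ whenever $A(i)=B(i)$, and $\{A(i)\sqcap B(i),\ A(i)\sqcup B(i)\}=\{\half,\half\}$ when $A(i)\neq B(i)$. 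So on integral inputs both $A\sqcap B$ and $A\sqcup B$ land in $\{0,\half,1\}^V$, but they need not be integral — this is the crux, and I address it next.

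First I would apply bisubmodularity to two integral minimisers $A,B\in{\cal A}$: $f(A)+f(B)\geq f(A\sqcap B)+f(A\sqcup B)$, so both $A\sqcap B$ and $A\sqcup B$ are themselves minimisers of $f$ (each lies in ${\cal A}$). Now the key structural input is Kolmogorov's persistence (stated in Section~\ref{section:bisubmod}): for the minimiser $A\sqcap B$, since it agrees with both $A$ and $B$ on every coordinate where $A(i)=B(i)$ (taking that common integral value there) and is $\half$ elsewhere, persistence yields an \emph{integral} minimiser $A'$ agreeing with $A\sqcap B$ on its integral coordinates — i.e. $A'(i)=A(i)=B(i)$ whenever $A(i)=B(i)$. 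That pins down $A'$ on the ``agreement'' coordinates, and on the disagreement coordinates $A'$ is free to be $0$ or $1$. The point I want is that there is a \emph{canonical} integral minimiser determined by $A$ and $B$ that is at least as ``low'' as both in a suitable order; similarly a canonical one at least as ``high''. To get a genuine lattice structure I would instead argue directly that ${\cal A}\cap\{0,1\}^V$ is closed under coordinatewise min and coordinatewise max — i.e. under $\langle\wedge,\vee\rangle$ on $\{0,1\}$: given $A,B$ integral minimisers, I would use the twistability of bisubmodular functions (a bisubmodular $f$ twisted by a fixed $S\subseteq V$ is again bisubmodular) to reduce, coordinate by coordinate or via a global twist, to a situation where $\sqcap,\sqcup$ on the relevant integral inputs behave as $\wedge,\vee$; then bisubmodularity directly gives $f(A\wedge B)+f(A\vee B)\leq f(A)+f(B)$, so $A\wedge B, A\vee B\in{\cal A}\cap\{0,1\}^V$.

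With closure under coordinatewise $\wedge$ and $\vee$ in hand, the relation $R=\{X\in\{0,1\}^V : f(X)=\min f\}$ (viewing $\emptyset$-case as the empty relation, which is trivially 2-CNF-definable, e.g. by $x\wedge\bar x$) has both $\wedge$ and $\vee$ as polymorphisms. It is a classical fact (the same one invoked in the appendix for ring families, the $\{0,1\}$ case of the 2-CNF / ``logically definable by implications and literals'' characterisation, and more generally that any relation closed under a semilattice-into-lattice pair of operations is definable by a conjunction of clauses of width $\leq 2$) that such an $R$ is exactly the solution set of some 2-CNF formula $F$ on $V$; indeed closure under $\wedge$ and $\vee$ makes $R$ a distributive sublattice of $\{0,1\}^V$, and every such sublattice is cut out by implications $x\to y$, negative implications $\bar x\to\bar y$ (i.e. clauses $(\bar x\vee y)$), together with fixed-literal constraints $(x)$ and $(\bar x)$ — all width $\leq 2$. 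Collecting these clauses gives the desired $F$.

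The main obstacle I anticipate is exactly the non-integrality of $A\sqcap B$ and $A\sqcup B$: bisubmodularity alone only gives closure of the minimiser set within $\{0,\half,1\}^V$, not within $\{0,1\}^V$, so the argument genuinely needs either persistence or the twisting trick to convert back to integral minimisers and to identify the induced operation on $\{0,1\}$ with $\wedge,\vee$. Once that conversion is pinned down cleanly, the passage from ``closed under $\wedge$ and $\vee$'' to ``2-CNF-definable'' is standard and I would cite it (cf.\ the ring-family discussion and Schrijver~\cite{SchrijverBook}) rather than reprove it in detail.
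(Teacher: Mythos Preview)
Your approach has a genuine gap: the claim that ${\cal A}\cap\{0,1\}^V$ is closed under coordinatewise $\wedge$ and $\vee$ is simply false. Take the bisubmodular relaxation of the soft constraint $(x\neq y)$ on two variables (case~2 of Lemma~\ref{lemma:usefulksubmod} with $\pi$ the swap). Its minimum value is $0$, attained at $(0,1)$, $(1,0)$, and $(\half,\half)$; the integral minimisers are exactly $\{(0,1),(1,0)\}$. This set is not closed under $\wedge$ or $\vee$, yet it \emph{is} the solution set of the 2-CNF formula $(x\lor y)\land(\bar x\lor\bar y)$. Closure under $\wedge,\vee$ characterises ring families (definable by implications and literals only), which is strictly weaker than 2-CNF definability --- you are proving the wrong target.

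Neither persistence nor twisting rescues the argument. Twisting by any fixed $S$ still leaves the bisubmodular operators sending every pair of distinct integral values to $(\half,\half)$, since the operators are symmetric under swapping $0\leftrightarrow 1$; so you never recover $\wedge,\vee$ behaviour on disagreement coordinates. Persistence gives you \emph{some} integral minimiser agreeing with $A\sqcap B$ on the agreement coordinates, but there is no canonical choice on the disagreement coordinates, and no reason it should equal $A\wedge B$. The paper's proof instead constructs the ternary operation $h(A,B,C)=(((A\sqcap B)\sqcup(A\sqcap C))\sqcup(B\sqcap C))$, checks that it is a \emph{majority} polymorphism of ${\cal A}$, and then invokes the standard fact (see~\cite{JeavonsCC98}) that relations with a majority polymorphism are determined by their binary projections --- which on a Boolean domain are exactly 2-clauses. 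Majority, not meet/join, is the algebraic signature of 2-CNF.
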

\begin{proof}
Let ${\cal{A}}_{01} = {\cal{A}} \cap \{0,1\}^V$. We will show that ${\cal{A}}_{01}$
can be described by a 2-CNF formula. As discussed above for the submodular case,
${\cal{A}}$ as a whole must be closed under the operations $\sqcap$ and $\sqcup$,
i.e., $\sqcup$ and $\sqcap$ are polymorphisms of ${\cal{A}}$. 
Define $h(A,B,C)=(((A \sqcap B) \sqcup (A \sqcap C)) \sqcup (B \sqcap C))$; 
then $h$ is a ternary polymorphism of ${\cal{A}}$, and it can be verified that $h$ 
is a majority operation. Thus ${\cal{A}}$ is fully described by the binary constraints 
that it implies (see preliminaries). In turn, each binary constraint
$R(x,y)$ can of course be described by enumerating the forbidden values
of the pair $(x,y)$. 
Thus, for every point in $\phi\in\{0,1\}^n$ which is not a
point of ${\cal{A}}_{01}$, there is a binary constraint $R(x,y)$ which rejects it. 
All such binary constraints on $\{0,1\}$ can be described via 2-clauses. 
\maybeqed \end{proof}


\subsection{$k$-Submodular relaxations}

For $k>2$, the situation is more complicated than above. 
The setup is the same: if ${\cal{A}} \subseteq \{0, \ldots, k\}^V$
is the set of minimising assignments to a $k$-submodular function $f$,
then we look at the structure of the subset ${\cal{A}}_{\mathrm{int}}=
{\cal{A}} \cap \{1, \ldots, k\}^V$ of those assignments which are also integral.
As before, the structure can be defined by a formula over binary (crisp) 
constraints, however, the set of binary constraints we can use is limited.
As stated in Lemma~\ref{lemma:ksubstructure}, it turns out that the binary
constraints of Lemma~\ref{lemma:usefulksubmod} is exactly the right list.


\begin{proof}[Proof of Lemma~\ref{lemma:ksubstructure}.]
To begin with, we observe as in the proof of Lemma~\ref{lemma:bisub-2cnf}
that binary (and unary) constraints must suffice to describe the
structure. In fact, the same construction of a majority polymorphism
$h(A,B,C)$ from $\sqcap$ and $\sqcup$ applies directly for $k>2$,
hence ${\cal{A}}$, and by implication ${\cal{A}}_{\mathrm{int}}$, 
is fully characterised by its 2-variable projections. 
The remaining task is thus to characterise those crisp binary
constraints on domain $\{1,\ldots,k\}$ whose soft versions have bisubmodular
relaxations. 
%
By Lemma~\ref{lemma:usefulksubmod}, we can support arbitrary unary constraints,
thus we focus on the properly binary constraints.


For the rest of the proof, we let $R \subseteq \{0,\ldots,k\}^2$ be a
binary relation closed under $\sqcup$ and $\sqcap$. 
We will characterise the possible sets $R \cap \{1,\ldots,k\}^2$ 
of integral pairs satisfying $R$. 
Let~$S_1=\{a \in \{1,\ldots,k\}: (a,b) \in R \text{ for some } b\}$ 
and $S_2=\{b \in \{1,\ldots,k\}: (a,b) \in R \text{ for some } a\}$
be the integral values that occur in positions $1$ and $2$ of $R$, respectively;
they can be assumed to be non-empty, as otherwise $R$ is simply a
conjunction of an assignment and a unary constraint. 

We begin by a useful property.

\begin{claim}\label{claim:always}
If $(a,0) \in R$ for some $a \in S_1$, then for every $b \in S_2$ we have
$(a,b) \in R$. Thus in particular, for every $a \in S_1$ there is some $b
\in S_2$ such that $(a,b) \in R$. 
\end{claim}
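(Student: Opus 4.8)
The plan is to derive the pair $(a,b)$ directly, by applying the operation $\sqcup$ (under which $R$ is closed, since $\sqcup$ together with $\sqcap$ forms a multimorphism of the $k$-submodular function whose minimiser set is $R$) to $(a,0)$ together with a suitable witness pair for $b$. I will only need the following features of $\sqcup$ on $D'=\{0,\ldots,k\}$: it is symmetric and idempotent, $0$ is its identity, and $x\sqcup y=0$ whenever $x,y$ are distinct nonzero elements.

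First I would fix $a\in S_1$ with $(a,0)\in R$, and an arbitrary $b\in S_2$, witnessed by some pair $(a',b)\in R$ with $b\neq 0$. Applying $\sqcup$ column-wise to $(a,0)$ and $(a',b)$ produces a member of $R$ whose second coordinate is $0\sqcup b=b$ and whose first coordinate is $a\sqcup a'$, which equals $a$ if $a'\in\{0,a\}$ and equals $0$ otherwise. If the first coordinate is already $a$, we are done. If it is $0$, then $(0,b)\in R$, and a further application of $\sqcup$ to $(a,0)$ and $(0,b)$ yields $(a\sqcup 0,\,0\sqcup b)=(a,b)\in R$. This establishes the first sentence of the claim.

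For the "in particular" statement: given any $a\in S_1$, there is by definition some $b'$ with $(a,b')\in R$. If $b'\neq 0$, then $b'\in S_2$ and we are finished immediately; if $b'=0$, then the first part of the claim (using that $S_2$ is assumed non-empty) already gives $(a,b)\in R$ for every $b\in S_2$, so any such $b$ works.

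I do not expect a genuine obstacle here; the whole statement is a short chase through the tables of $\sqcup$ and $\sqcap$. The only subtlety worth flagging is that the witness $a'$ for $b$ may itself be $0$, i.e.\ the pair $(0,b)$ may already lie in $R$ — but this is harmless, as $(0,b)$ is exactly the intermediate pair the argument passes through in any case, so no separate treatment is required. Note also that $\sqcap$ is not actually needed for this particular claim, despite being available.
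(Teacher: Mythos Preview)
Your proof is correct and follows essentially the same approach as the paper: both arguments apply $\sqcup$ to $(a,0)$ and a witness pair $(a',b)$ for $b\in S_2$, obtaining $(a,b)$ directly when $a'\in\{0,a\}$ and otherwise first obtaining $(0,b)$ and then $(a,b)$ via a second application of $\sqcup$. Your treatment is slightly more explicit than the paper's (you spell out the case $a'=0$ and the ``in particular'' sentence, which the paper leaves implicit), but the mathematical content is identical.
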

\begin{proof}
If $(0,b) \in R$, then we have $(a,b) \in R$ by $(a,0)\sqcup (0,b)=(a,b)$.

On the other hand, if $(a',b) \in R$ for some $a' \in S_1$
with $a' \neq a$, then $(0,b) \in R$ by
$(a',b) \sqcup (a,0) = (0,b)$, and we are back in the first case.
\maybeqed{} \end{proof}

We eliminate some quick corner cases. Recall that we are focusing on
expressing ${\cal{A}}_{\mathrm{int}}$ via binary relations, rather than all of ${\cal{A}}$; hence
if the intersection of $R$ with $\{1,\ldots,k\}^2$ is simple, we may
ignore complications involving the value $0$. In particular, consider the case that
$|S_1|=1$, say $S_1=\{a\}$. By the above, $(a,b) \in R$ for every $b \in S_2$, 
implying that the effect of $R(x,y)$ on ${\cal{A}}_{\mathrm{int}}$ is simply the
conjunction of $(x=a)$ and $(y \in S_2)$. We claim similarly if $|S_2| = 1$.
Thus in the sequel, we have $|S_1|, |S_2| > 1$. 

We give the next useful observation.

\begin{claim}
\label{claim:oneoroall}
For any $a \in S_1$,
either there is exactly one value $b \in S_2$ such that $(a,b) \in R$,  
or $(a,b) \in R$ for every $b \in S_2$. 
Symmetrically, for any $b \in S_2$,
either there is exactly one value $a \in S_1$ such that $(a,b)\in R$, 
or $(a,b) \in R$ for every $b \in S_1$. 
\end{claim}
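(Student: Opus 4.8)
The plan is to prove the first assertion (about a fixed $a \in S_1$); the symmetric one then follows by running the identical argument on $R^{-1}=\{(b,a):(a,b)\in R\}$, which is again closed under $\sqcup$ and $\sqcap$ since these operators are applied coordinate-wise, and which simply interchanges the roles of $S_1$ and $S_2$. So fix $a \in S_1$ and set $T_a = \{b \in \{0,\ldots,k\} : (a,b) \in R\}$; by definition of $S_1$ this set is non-empty. The goal is to show that either $T_a \cap S_2$ is a singleton, or $T_a \supseteq S_2$.

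The key step is the observation that $T_a$ cannot contain two distinct \emph{nonzero} values without also containing $0$. Indeed, if $b_1,b_2\in T_a$ with $b_1\neq b_2$ and both nonzero, then applying $\sqcap$ (or equally $\sqcup$) column-wise to the tuples $(a,b_1),(a,b_2)\in R$ yields $(a\sqcap a,\,b_1\sqcap b_2)=(a,0)$, using idempotency of $\sqcap$ on $a$ and the fact that $b_1\sqcap b_2=0$ for distinct nonzero arguments; hence $(a,0)\in R$. Given this, I would finish by a two-way case split. If $0\in T_a$, i.e.\ $(a,0)\in R$, then Claim~\ref{claim:always} immediately gives $(a,b)\in R$ for every $b\in S_2$, so $T_a\supseteq S_2$. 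Otherwise $0\notin T_a$, and by the observation above $T_a$ contains at most one nonzero value; since $T_a\neq\emptyset$ it is exactly one element $b$, which lies in $S_2$ by definition and is the unique element of $S_2$ paired with $a$ in $R$.

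I do not expect a genuine obstacle here: this is a short case analysis. The only points needing a little care are that the equality $b_1\sqcap b_2=0$ is exactly what the $k$-submodular semilattice operations give for distinct nonzero arguments, and that invoking Claim~\ref{claim:always} is what converts "$(a,0)\in R$" into "$(a,b)\in R$ for all $b\in S_2$" — which is precisely why that claim was proved first. The symmetric statement is then obtained, as noted, by applying the whole argument to $R^{-1}$.
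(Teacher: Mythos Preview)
Your proposal is correct and follows essentially the same argument as the paper: assume two distinct nonzero values $b_1,b_2$ are paired with $a$, combine $(a,b_1)$ and $(a,b_2)$ via the $k$-submodular operations to obtain $(a,0)\in R$, and then invoke Claim~\ref{claim:always} to conclude that $(a,b)\in R$ for all $b\in S_2$. The only cosmetic difference is that the paper uses $\sqcup$ rather than $\sqcap$ to produce the $0$ (both work), and states the symmetric half more tersely.
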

\begin{proof}
We prove the claim for some $a \in S_1$; the other half is entirely symmetric. 
Recall that $(a,b) \in R$ for at least one $b \in S_2$, by previous
claims. Thus let $(a,d), (a,d') \in S$ for $d, d' \in S_2$, $d \neq d'$;
this produces $(a,0) \in R$ via the polymorphism $\sqcup$, and by the previous claim
$(a,b) \in R$ for every $b \in S_2$, as claimed. 
\maybeqed{} \end{proof}

We call a value $a \in S_1$ (resp.\ $b \in S_2$) \emph{global} if the
second case occurs, i.e., if $(a,d) \in R$ for every $d \in S_2$
(resp.\ $(d,b) \in R$ for every $d \in S_1$). 
We may assume that each of $S_1$ and $S_2$ contains at most one global
value: if $S_1$ contains two global values $a, a'$, then every value in
$S_2$ must be global, and since $|S_2|>1$ we get that every value in $S_1$
is global, and the effect of $R$ on ${\cal{A}}_{\mathrm{int}}$ can be described via
unary constraints.

Furthermore, if $a \in S_1$ and $b \in S_2$ are global values, then for
any $a' \in S_1$, $a' \neq a$, we have that $(a',b) \in R$ is the unique 
occurrence of $a'$ in $R$; hence the effect of $R(x,y)$ on
${\cal{A}}_{\mathrm{int}}$ can be given as $(x=a \lor y=b)$ in 
conjunction with unary constraints. 
Note that this is case~3 of Lemma~\ref{lemma:usefulksubmod}. 

Second, assume that $S_2$ contains no global values, but $a \in S_1$ is
global. But there is one further $a' \in S_1$, with $(a',b) \in R$ for
some $b \in S_2$ by Claim~\ref{claim:always}; hence $b$ is global and we are back at a previous case. 

Finally, if there are no global values, then the values of $S_1$ and $S_2$
must be matched to each other with exactly one possible value each. 
We may thus describe $R$ as a bijection $(x = \pi(y))$ in conjunction with
a unary constraint, i.e., case~2 of Lemma~\ref{lemma:usefulksubmod}.
This finishes the proof. 
\maybeqed{} \end{proof}

Note that this is not a complete characterisation of the full set ${\cal{A}}$
of minimisers, since we skipped some ``corner cases'' that become uninteresting
when intersected with $\{1,\ldots,k\}^V$. Also note, as in the discussion
for submodular functions, that this does not imply that Lemma~\ref{lemma:usefulksubmod} 
can produce all functions with $k$-submodular relaxations, as valued constraints
taking several values (beyond $0$ and $1$) are not covered, and these
may well be the most interesting cases (cf. matroids for the submodular case). 

\section{Basic $k$-submodular functions: Case analysis}

Finally, we go through the case analysis required to show that all the relaxations 
listed in the proof sketch of Lemma~\ref{lemma:usefulksubmod} are actually $k$-submodular.

\begin{proof}[Full proof of Lemma~\ref{lemma:usefulksubmod}.]
\emph{Case 1.} Let $f$ be a unary function of $\{1,\ldots,k\}$, and
$f'$ the relaxation to $\{0,\ldots,k\}$ as in the proof sketch.
Consider two domain values $x$ and $y$. If $x$ and $y$ are integral and distinct,
then $x \sqcap y=x \sqcup y=0$, and the inequality holds; otherwise,
the outputs $x \sqcap y$ and $x \sqcup y$ are a reordering of the inputs.

\emph{Case 2.} For the bijection case, let $f$ be the relaxation,
and consider two evaluations $f(x_1,y_1)$ and $f(x_2,y_2)$. 
We refer to $(x_1,y_1)$ and $(x_2,y_2)$ as the \emph{input}, 
and the tuples of the resulting right-hand-side (after application 
of $\sqcap$ and $\sqcup$) as the \emph{output}. We split the proof by the
number of variables $x_1, y_1, x_2, y_2$ that take the value zero. 
If none of them takes the value zero, then either the output equals the 
input, or the output is all-zero, or the output has one all-zero column
and the input costs at least $1$; all these satisfy the $k$-submodularity inequality.
If one input, say $(x_1,y_1)$, equals $(0,0)$, then the output
equals the input.

If exactly one value is zero, assume w.l.o.g.\ that $x_1=d$ and $x_2=0$;
the same two values occur in the output (in the first ``column''), 
and we note that the other two output values (the second ``column'') equal
each other. Thus either the output equals the input, or the output has an
all-zero column and cost \half, while the input costs at least as much. 

If $x_1=x_2=0$ but $y_1, y_2 \neq 0$ (or similarly with $x$ and $y$
swapped), then either the output equals the input, or the output has cost
zero. Finally, with two zero-values in different columns and tuples, the
input costs~$\half+\half$ and the output contains one tuple~$(0,0)$ at cost
zero. This finishes the case.

\emph{Case 3.}
Let $f_{a,b}$ be the function defined in the proof sketch; 
we show that it is~$k$-submodular. 

Refer to~$a$ in the first coordinate, or~$b$ in the second coordinate, as a safe
coordinate; note that~$f_{a,b}$ can be viewed as taking cost $0$ if at
least one coordinate is safe, and otherwise \half times the
number of non-safe integral coordinates. We split into cases. First,
assume that one column of the output contains two integral non-safe values. Then this
column must be constant in input and output. If the other output column contains two
zeros, then the output costs~$1$ and the input costs either at least~$1+0$ or~$\half+\half$.
With one zero, the output is a reordering of the input, and nothing is changed. With no
zeros, input and output are constant and identical. 

Second, assume that both output columns contain one non-safe integral value each. Then the
output is~$(0,0)$ and~$(d,d')$, where~$d$ and~$d'$ are non-safe, but then
the output columns are just reorderings of the input columns, so the input
costs either~$\half+\half$ or~$1+0$.  

In the last cases, the total number of non-safe integral values in the
output is either~$0$, at output cost zero, or~$1$. In the last case, the
maximum total output cost is~\half, in which case the non-safe 
column of the output is~$0, d$ for some~$d$, the parallel column is~$0,0$,
and the input contains 
either a tuple~$(d,0)$ or~$(0,d')$ for unsafe integral values~$d, d'$. 

\emph{Case 4.}
We show $k$-submodularity. Consider the total cost of the input. If the
input has total cost zero, then the output is either all-zero or identical to
the input. If the input has a tuple of cost zero, it must be constant,
say~$(x,\ldots,x)$. If~$x=0$, then the output equals the input; otherwise, the
output uses only values~$0$ and~$x$. The $\sqcap$-tuple contains~$x$ if and only
if~$x$ occurs in the other tuple; the~$\sqcup$-tuple contains~$0$ if and
only if some~$d' \notin \{0,d\}$ occurs in the other tuple. Each event
``costs'' at most~\half, and if both events occur, the input costs~$1$. 

If the input cost is~$\half+\half$, then there are similarly two essential
cases (the non-zero entries are identical or different), both of which
have an output of total cost at most~$1$.  Otherwise, the input costs at
least~$1+\half$, and the output can only cost~$1+1$ if there 
are two distinct constant non-zero columns in the input (in which case the
input costs~$1+1$). 
\maybeqed \end{proof}

\end{document}